\newcounter{mynotes}
\DeclarePairedDelimiter\floor{\lfloor}{\rfloor}
\declaretheorem[within=section]{theorem}
\declaretheorem[sibling=theorem]{lemma}
\crefname{claim}{Claim}{Claims}
\newcounter{termcounter}
\renewcommand{\thetermcounter}{\Alph{termcounter}}
\crefname{term}{term}{terms}
\def\term{\@ifnextchar[\term@optarg\term@noarg}
\def\term@optarg[#1]#2{%
  \textup{(#1)}%
  \def\@currentlabel{#1}%
  \def\cref@currentlabel{[][2147483647][]#1}%
  \cref@label[term]{#2}}
\def\term@noarg#1{%
  \refstepcounter{termcounter}%
  \textup{(\thetermcounter)}%
  \cref@label[term]{#1}}
\newcommand{\wtilde}[1]{\widetilde{#1}}
\newcommand{\ignore}[1]{}
\DeclareMathOperator{\poly}{poly}
\DeclareMathOperator{\polylog}{polylog}
\DeclareMathOperator*{\median}{median}
\newcommand{\abs}[1]{\lvert#1\rvert}
\newcommand{\Abs}[1]{\left\lvert#1\right\rvert}
\newcommand{\norm}[1]{\lVert#1\rVert}
\newcommand{\Norm}[1]{\left\lVert#1\right\rVert}
\definecolor{DSred}{rgb}{1,0,0}
\renewcommand{\leq}{\leqslant}
\renewcommand{\geq}{\geqslant}
\renewcommand{\epsilon}{\varepsilon}
\newcommand{\eps}{\epsilon}
\newcommand{\R}{\mathbb{R}}
\newcommand{\Z}{\mathbb{Z}}
\newcommand{\cA}{\mathcal A}
\newcommand{\cB}{\mathcal B}
\newcommand{\cD}{\mathcal D}
\newcommand{\cJ}{\mathcal J}
\newcommand{\Esymb}{{\bf E}}
\newcommand{\Psymb}{{\bf Pr}}
\DeclareMathOperator*{\E}{\Esymb}
\DeclareMathOperator*{\Var}{{\bf Var}}
\DeclareMathOperator*{\ProbOp}{\Psymb}
\renewcommand{\Pr}{\ProbOp}
\title{Exponentially Improved Dimensionality Reduction for $\ell_1$: Subspace Embeddings and Independence Testing}
\author{
Yi Li\\NTU\\\texttt{yili@ntu.edu.sg} \and 
David P. Woodruff\\CMU\\\texttt{dwoodruf@cs.cmu.edu} \and 
Taisuke Yasuda\\CMU\\\texttt{taisukey@cs.cmu.edu}
}
\begin{document}

\begin{titlepage}
\maketitle
\thispagestyle{empty}

\begin{abstract}
Despite many applications, dimensionality reduction in the $\ell_1$-norm is much less understood than in the Euclidean norm. We give two new oblivious dimensionality reduction techniques for the $\ell_1$-norm which improve {\it exponentially} over prior ones:
\begin{enumerate}
\item We design a distribution over random matrices $\bfS \in \mathbb{R}^{r \times n}$, where $r = 2^{\tilde O(d/(\eps \delta))}$, such that given any matrix $\bfA \in \mathbb{R}^{n \times d}$, with probability at least $1-\delta$, simultaneously for all $\bfx$, $\|\bfS\bfA\bfx\|_1 = (1 \pm \epsilon)\|\bfA\bfx\|_1$. Note that $\bfS$ is linear, does not depend on $\bfA$, and maps $\ell_1$ into $\ell_1$. Our distribution provides an exponential improvement on the previous best known map of Wang and Woodruff (SODA, 2019), which required $r = 2^{2^{\Omega(d)}}$, even for constant $\eps$ and $\delta$. Our bound is optimal, up to a polynomial factor in the exponent, given a known $2^{\sqrt d}$ lower bound for constant $\eps$ and $\delta$. 

\item We design a distribution over matrices $\bfS \in \mathbb{R}^{k \times n}$, where $k = 2^{O(q^2)}(\eps^{-1} q \log d)^{O(q)}$, such that given any $q$-mode tensor $\bfA \in (\mathbb{R}^{d})^{\otimes q}$, one can estimate the entrywise $\ell_1$-norm $\|\bfA\|_1$ from $\bfS(\bfA)$. Moreover, $\bfS = \bfS^1 \otimes \bfS^2 \otimes \cdots \otimes \bfS^q$ and so given vectors $\bfu_1, \ldots, \bfu_q \in \mathbb{R}^d$, one can compute $\bfS(\bfu_1 \otimes \bfu_2 \otimes \cdots \otimes \bfu_q)$ in time $2^{O(q^2)}(\epsilon^{-1} q \log d)^{O(q)}$, which is much faster than the $d^q$ time required to form $\bfu_1 \otimes \bfu_2 \otimes \cdots \otimes \bfu_q$. Our linear map gives a streaming algorithm for independence testing using space $2^{O(q^2)}(\eps^{-1} q \log d)^{O(q)}$, improving the previous doubly exponential $(\eps^{-1} \log d)^{q^{O(q)}}$ space bound of Braverman and Ostrovsky (STOC, 2010). 
\end{enumerate}
For subspace embeddings, we also study the setting when $\bfA$ is itself drawn from distributions with independent entries, and obtain a polynomial embedding dimension. For independence testing, we also give algorithms for any distance measure with a polylogarithmic-sized sketch and satisfying an approximate triangle inequality. 
\end{abstract}
\end{titlepage}

\tableofcontents

\newpage

\section{Introduction}\label{sec:intro}
Dimensionality reduction refers to mapping a set of high-dimensional vectors to a set of low-dimensional vectors while preserving their lengths and pairwise distances. A celebrated result is the Johnson-Lindenstrauss embedding, which asserts that for a random linear map $\bfS : \mathbb{R}^n \rightarrow \mathbb{R}^r$, for any fixed $\bfx \in \mathbb{R}^n$, we have $\|\bfS\bfx\|_2 = (1 \pm \eps) \|\bfx\|_2$ with probability $1-\delta$. It is necessary and sufficient for the sketching dimension $r$ to be $\Theta(\epsilon^{-2} \log(1/\delta))$ \cite{johnson1984extensions, DBLP:conf/focs/LarsenN17}. A key property of $\bfS$ is that it is {\it linear} and {\it oblivious}, meaning that it is a linear map that does not depend on the point set. This makes it applicable in settings such as the widely used streaming model, where one sees coordinates or updates to coordinates one at a time (see, e.g., \cite{muthukrishnan2005data,CGHJ12} for surveys) and the distributed model where points are shared across servers (see, e.g., \cite{boutsidis2016optimal}, for a discussion of different models). Here it is crucial that for points $\bfx$ and $\bfy$, $\bfS(\bfx+\bfy) = \bfS\bfx + \bfS\bfy$, and $\bfS$ does not depend on $\bfx$ or $\bfy$. In this case, if one receives a new point $\bfz$ chosen independently of $\bfS$, then $\bfS$ still has a good probability of preserving the length of $\bfz$, whereas data-dependent linear maps $\bfS$ may change with the addition of $\bfz$, and are often slower \cite{indyk2000identifying}. For these reasons, our focus is  on linear oblivious dimensionality reduction, often referred to as ``sketching".  

For many problems, the $1$-norm $\|\bfx\|_1 = \sum_{i=1}^n |\bfx_i|$ is more appropriate than the Euclidean norm. Indeed, this norm is used in applications demanding robustness since it is less sensitive to changes in individual coordinates. As the $1$-norm is twice the variation distance between distributions, it is often the metric of choice for comparing distributions \cite{IM08,BCLMO10,braverman2010measuring,MV15}. A sample of applications involving the $1$-norm includes clustering \cite{DBLP:conf/soda/FeldmanMSW10,labib2005hardware}, regression \cite{clarkson2005subgradient, DBLP:conf/stoc/SohlerW11,DBLP:conf/soda/ClarksonDMMMW13,meng2013low,DBLP:conf/colt/WoodruffZ13,DBLP:conf/soda/ClarksonW15,clarkson2017low,woodruff2014sketching}, time series analysis \cite{dodge1992l1,lawrence2019robust}, internet traffic monitoring \cite{feigenbaum2002approximate}, multimodal and similarity search \cite{aggarwal2001surprising,lin1995fast}. As stated in \cite{aggarwal2001surprising}, ``the Manhattan distance metric is consistently more preferable than the Euclidean distance metric for high dimensional data mining applications". 

While useful for the Euclidean norm, the Johnson-Lindenstrauss embedding completely fails if one wants for a vector $\bfx$, that $\|\bfS\bfx\|_1 = (1 \pm \eps) \|\bfx\|_1$ with probability $1-\delta$. Indeed, the results of Wang and Woodruff \cite{DBLP:conf/soda/WangW19} imply nearly tight bounds: for constant $\eps$ and $\delta$, a sketching dimension of $2^{\textrm{poly}(n)}$ is necessary and sufficient\footnote{Their bounds are stated for subspaces, but when applied to $n$ arbitrary points result in this as both an upper and a lower bound, with differing polynomial factors in the exponent. We give more details in Remark \ref{rem:sketch-arbitrary-vectors} of Section \ref{sec:prelim}.}. Indyk \cite{indyk2006stable} shows that if instead one embeds $\bfx$ into a non-normed space, namely, performs a ``median of absolute values" estimator of $\bfS\bfx$, then the dimension can be reduced to $O((\log n)/\eps^2)$. Such a mapping is still linear and oblivious, and this estimator is useful if one desires to approximate the norm of a single vector, for which the dimension becomes $O((\log(1/\delta))/\eps^2)$ and the failure probability is $\delta$. However, this estimator is less useful in optimization problems as it requires solving a non-convex problem after sketching. Thus, there is a huge difference in dimensionality reduction for the Euclidean and $1$-norms. 

This work is motivated by our poor understanding of dimensionality reduction in the $1$-norm, as exemplified by two existing doubly exponential bounds for important problems: preserving a subspace of points and preserving a sum of tensor products, both of which are well-understood for the Euclidean norm. 

\paragraph{Subspace Embeddings.} 
In this problem, one would like a distribution on linear maps $\bfS \in \mathbb{R}^{r \times n}$, for which with constant probability over the choice of $\bfS$, for any matrix $\bfA \in \mathbb{R}^{n \times d}$, simultaneously for all $\bfx \in \mathbb{R}^d$, $\|\bfS\bfA\bfx\|_1 = (1 \pm \eps)\|\bfA\bfx\|_1$. Note that $\bfS$ preserves the lengths of an infinite number of vectors, namely, the entire column span of $\bfA$. Subspace embeddings arise in least absolute deviation regression \cite{DBLP:conf/stoc/SohlerW11,DBLP:conf/soda/ClarksonDMMMW13,meng2013low,DBLP:conf/colt/WoodruffZ13,clarkson2017low} and entrywise $\ell_1$-low rank approximation \cite{song2017low,ban2019ptas,DBLP:conf/soda/MahankaliW21}, among other places. Since a subspace embedding maps the entire subspace into a lower dimensional subspace of $\ell_1$, one can impose arbitrary constraints on $\bfx$, e.g., non-negativity, manifold constraints, regularization, and so on, after computing $\bfS\bfA$. The resulting problem in the sketch space is convex if the constraints are convex. 

For the analogous problem in the Euclidean norm, there is a linear oblivious sketching matrix $S$ with $O((d + \log(1/\delta))/\epsilon^2)$ rows, which is best possible \cite{clarkson2009numerical,nelson2014lower,woodruff2014sketching}. 

For the $1$-norm, we understand much less. The best upper bound \cite{DBLP:conf/soda/WangW19} for an oblivious subspace embedding is for constant $\epsilon$ and $\delta$ and gives a sketching dimension of $2^{2^{O(d)}}$. This bound is obtained by instantiating the $2^{\textrm{poly}(n)}$ bound above with $n = d^{O(d)}$, and union bounding over the points in a net of a subspace. The lower bound on the sketching dimension is, however, only $2^{\Omega(\sqrt{d})}$, representing an exponential gap in our understanding for this fundamental problem \cite{DBLP:conf/soda/WangW19}. 

\paragraph{Independence Testing.}
Another important problem using dimensionality reduction for $\ell_1$ is testing independence in a stream. This problem was introduced by Indyk and McGregor \cite{IM08} and is the following: letting $[d] = \{1, 2, \ldots, d\}$, suppose you are given a stream of items $(i_1, \ldots, i_q) \in [d]^q$. These define an empirical joint distribution $P$ on the $q$ modes defined as follows: if $f(i_1, ..., i_q)$ is the number of occurrences of $(i_1, \ldots, i_q)$ in a stream of length $m$, then $P(i_1, \ldots, i_q) = \frac{1}{m} f(i_1, \ldots, i_q)$. One can also define the marginal distributions $P_j$, for $j = 1, 2, \ldots, q$, where for $i \in [d]$ we have $P_j(i) = \frac{1}{m}\sum_{i_1, \ldots, i_{j-1}, i_{j+1},\dots i_q} f(i_1, \ldots, i_{j-1}, i, i_{j+1},\dots i_q)$. The goal is to compute $\|P - Q\|_1$ with $Q=P_1 \otimes P_2 \otimes \cdots \otimes P_q$, that is, the $1$-norm of the difference of the joint distribution and the product of marginals. If the $q$ modes were independent, then this difference would be $0$, as $P$ would be a product distribution. In general this measures the distance to independence. 

It is important to note that if one were given $P_1, \ldots, P_q$ and $P$, then one could explicitly compute $P_1 \otimes P_2 \otimes \cdots \otimes P_q$, and then compute the median-based sketch $\bfS(P - P_1 \otimes P_2 \otimes \cdots \otimes P_q)$ of Indyk \cite{indyk2006stable} above. The issue is that in the data stream model, the vectors $P, P_1, \ldots, P_q$ are too large to store, and while it is easy to update $\bfS(P)$ given a new tuple in the stream (namely, $\bfS(P) \leftarrow \bfS(P) + \bfS_{i_1, \ldots, i_q}$, where $\bfS_{i_1, \ldots, i_q}$ is the column of $\bfS$ indexed by the new stream element $(i_1, \ldots, i_q)$), it is not clear how to update $\bfS(P_1 \otimes P_2 \otimes \cdots \otimes P_q)$ in a stream. Consequently, a natural approach is to maintain sketches $\bfS^1 P_1, \bfS^2 P_2, \ldots, \bfS^q P_q$ as well as $\bfS P$, and combine these at the end of the stream. A natural way to combine them is to let $\bfS = (\bfS^1) \otimes (\bfS^2) \otimes \cdots \otimes (\bfS^q)$ be the tensor product of the sketches on each mode. 

For the corresponding problem of estimating the Euclidean distance $\|P - P_1 \otimes P_2 \otimes \cdots \otimes P_q\|_2$, recent work \cite{ahle2020oblivious} implies that this can be done with a very small sketching dimension of $O(n/\epsilon^2)$, though such work makes use of the Johnson Lindenstrauss lemma and completely fails for the $1$-norm. 

Despite a number of works on independence testing for the $1$-norm in a stream \cite{IM08,BCLMO10,braverman2010measuring,MV15}, the best upper bound is due to Braverman and Ostrovsky \cite{braverman2010measuring} with a sketching dimension of $(\epsilon^{-1} \log d)^{q^{O(q)}}$, which, while logarithmic in $d$, is doubly exponential in $q$. A natural question is whether this can be improved. 

\subsection{Our Results}
We give exponential improvements in the sketching dimension of linear oblivious maps for both $\ell_1$-subspace embeddings and $\ell_1$-independence testing. 
\\\\
{\bf Subspace Embeddings:} We design a distribution over random matrices $\bfS \in \mathbb{R}^{r \times n}$, where $r = 2^{\textrm{poly}(d/(\epsilon \delta))}$, so that given any matrix $A \in \mathbb{R}^{n \times d}$, with probability at least $1-\delta$, simultaneously for all $\bfx$, $\|\bfS\bfA\bfx\|_1 = (1 \pm \eps)\|\bfA\bfx\|_1$. We present both a sparse embedding which has a dependence of $\log n$ in the base of the exponent, as well as a dense embedding which removes this dependence on $n$ entirely. 

\begin{thm}[Sparse embedding, restatement of Theorem \ref{thm:1+eps-sparse-embedding}]
Let $\eps\in(0,1)$ and $\delta\in (0,1)$. Then there exists a sparse oblivious $\ell_1$ subspace embedding $\bfS$ into $k$ dimensions with
\[
    k = \poly(d,\eps^{-1},\delta^{-1},\log n)^{d/\delta\eps}
\]
such that for any $\bfA\in\mathbb R^{n\times d}$,
\[
    \Pr\braces*{(1-\eps)\norm{\bfA\bfx}_1 \leq \norm{\bfS\bfA\bfx}_1\leq (1+\eps)\norm{\bfA\bfx}_1}\geq 1-\delta. 
\]
\end{thm}

\begin{cor}[Dense embedding, restatement of Corollary \ref{cor:dense-embedding}]
Let $\eps\in(0,1)$ and $\delta\in (0,1)$. Then there exists an oblivious $\ell_1$ subspace embedding $\bfS$ into $k$ dimensions with
\[
    k = \exp\parens*{\tilde O(d/\delta\eps))}
\]
such that for any $\bfA\in\mathbb R^{n\times d}$,
\[
    \Pr\braces*{(1-\eps)\norm{\bfA\bfx}_1 \leq \norm{\bfS\bfA\bfx}_1\leq (1+\eps)\norm{\bfA\bfx}_1}\geq 1-\delta. 
\]
\end{cor}

This is an exponential improvement over the previous bound of $r = 2^{2^{\Omega(d)}}$ \cite{DBLP:conf/soda/WangW19}, which held for constant $\eps$ and $\delta$. Our bound is optimal, up to a polynomial factor in the exponent, given the $2^{\sqrt d}$ lower bound for constant $\eps$ and $\delta$ \cite{DBLP:conf/soda/WangW19}. In fact, this lower bound also implies a lower bound of $2^{\sqrt{1/\delta}}$ as well, so an exponential dependence on $\delta$ is necessary as well. An important feature of $\bfS$ is that $\bfS \cdot \bfA$ can be computed in an expected $O(\mathrm{nnz}(\bfA))$ time, where $\mathrm{nnz}(\bfA)$ denotes the number of non-zero entries of $\bfA$. This is in contrast to the embedding of \cite{DBLP:conf/soda/WangW19}, which requires $2^{2^{\Omega(d)}} \cdot \mathrm{nnz}(\bfA)$ time. 
\\\\
{\bf Independence Testing:} We design a distribution over matrices $\bfS \in \mathbb{R}^{k \times n}$, where $k = \poly(q \epsilon^{-1} \log d)$, so that given any $q$-mode tensor $\bfA \in (\mathbb{R}^{d})^{\otimes q}$, one can estimate the entrywise $1$-norm $\|\bfA\|_1$ from $\bfS(\bfA)$. Moreover, $\bfS = \bfT^{\otimes q}$ and so given vectors $\bfu_1, \ldots, \bfu_q \in \mathbb{R}^d$, one can compute $\bfS(\bfu_1 \otimes \bfu_2 \otimes \cdots \otimes \bfu_q)$ in time $2^{O(q^2)}(\eps^{-1} q \log d)^{O(q)}$, which is much faster than the $d^q$ time required to form $\bfu_1 \otimes \bfu_2 \otimes \cdots \otimes \bfu_q$. Our linear map can be applied in a stream since we can sketch each marginal and then take the tensor product of sketches, yielding a streaming algorithm for independence testing using $2^{O(q^2)}(\eps^{-1} q \log d)^{O(q)}$ bits of space. 

\begin{thm}[Restatement of Theorem \ref{thm:independence-testing}]
Suppose that the stream length $m = \poly(d^q)$. There is a randomized sketching algorithm which outputs a $(1\pm\eps)$-approximation to $\|P-Q\|_1$ with probability at least $0.9$, using $\exp(O(q^2 + q\log(q/\eps)+q\log\log d))$ bits of space. The update time is $\exp(O(q^2 + q\log(q/\eps)+q\log\log d))$.
\end{thm}

This improves the previous doubly exponential $(\epsilon^{-1} \log d)^{q^{O(q)}}$ space bound \cite{braverman2010measuring}. 
\\\\
For subspace embeddings, we also study the setting when $\bfA$ is itself drawn from distributions with certain properties, and obtain a polynomial embedding dimension. This captures natural statistical problems when the design matrix $\bfA$ for regression, is itself random. Our various results here are discussed in Section \ref{section:subspace_embeddings_random}. 

A byproduct of our sketch is the ability to preserve the $1$-norm of a matrix $\bfP$ by left and right multiplying by independent draws $\bfS^1$ and $\bfS^2$ of our sketch, where we show that $\Theta(\|\bfP\|_1) \leq \|\bfS^1\bfP\bfS^2\|_1 = O(1/\alpha^2) \|\bfP\|_1$ where $\bfS^1\bfP\bfS^2$ is a $d^{\alpha} \times d^{\alpha}$ matrix. Here $\alpha \in (0,1)$ can be any constant; previously, no such trade-off was known. 

\begin{thm}[Restatement of Theorem \ref{thm:l1-entrywise-embedding}]
Let $\delta\in (0,1)$ and $\alpha\in(0,1)$. Then there exists a sparse oblivious $\ell_1$ entrywise embedding $\bfS$ into $k$ dimensions with
\[
    k = \parens*{\frac{d}{\delta}\log n}^\alpha \poly(\delta^{-1},\log n)
\]
such that for any $\bfA\in\mathbb R^{n\times d}$,
\[
    \Pr\braces*{\Omega(1)\norm{\bfA}_1 \leq \norm{\bfS\bfA}_1 \leq O\parens*{\frac1{\delta\alpha}}\norm{\bfA}_1} \geq 1 - \delta.
\]
\end{thm}

We also give a matching lower bound showing that for any oblivious sketch $\bfS^1$ with $r$ rows, the distortion between $\|\bfS^1\bfP\|_1$ and $\|\bfP\|_1$ is $\Omega \left (\frac{\log d}{\log r} \right )$. Thus, with $r = d^\alpha$ dimensions, the distortion must be at least
\[
    \frac{\log d}{\log r} = \frac{\log d}{\log d^\alpha} = \frac1\alpha. 
\]

\begin{thm}[Restatement of Theorem \ref{thm:l1-entrywise-embedding-lower-bound}]
Let $\bfS$ be a fixed $r\times d$ matrix. Then there is a distribution $\mu$ over $d\times d$ matrices such that if
\[
    \Pr_{\bfA\sim\mu}\parens*{\norm{\bfA}_1\leq \norm{\bfS\bfA}_1\leq \kappa \norm{\bfA}_1}\geq \frac23
\]
then $\kappa = \Omega((\log d)/(\log r))$. 
\end{thm}

For independence testing, we also give algorithms for any distance measure with a polylogarithmic-sized sketch and satisfying an approximate triangle inequality; these include many functions in \cite{braverman2010zero}. For example, we handle the robust Huber loss and $\ell_p$-measures for $0 < p < 2$. 

\subsection{Our Techniques}\label{sec:techniques}
We begin by explaining our techniques for subspace embeddings, and then transition to independence testing. 

\subsubsection{Subspace Embeddings} 
The linear oblivious sketch we use is a twist, both algorithmically and analytically, to a methodology originating from the data stream literature for approximating frequency moments \cite{DBLP:conf/stoc/IndykW05,DBLP:conf/soda/BhuvanagiriGKS06}. These methods involve sketches which subsample the coordinates of a vector at geometrically decreasing rates $1, 1/2, 1/4, 1/8, \ldots, 1/n$, and apply an independent \textsf{CountSketch} matrix \cite{CCF02} (see Definition \ref{def:countsketch}) to the surviving coordinates at each scale. Analyses of this sketch for data streams does not apply here, since it involves nonlinear median operations, but here we must embed $\ell_1$ into $\ell_1$. These sketches {\it have} been used for embedding single vectors or matrices in $\ell_1$ into $\ell_1$, called the {\it Rademacher sketch} in \cite{verbin2012rademacher}, and the {\it $M$-sketch} in \cite{DBLP:conf/soda/ClarksonW15}.  However the approximation guarantees in these works are significantly worse than what we achieve, and we improve them by (1) changing the actual sketch to ``randomized boundaries'' and (2) changing the analysis of the sketch to track the behavior of the $\ell_1$-leverage score vector, which captures the entire subspace, and tracking it via a new mix of expected and high probability events. 

We now explain these ideas in more detail. To motivate our sketch, we first explain the pitfalls of previous sketches. 

\paragraph{Cauchy Sketches \cite{DBLP:conf/stoc/SohlerW11,DBLP:conf/soda/WangW19}.}
The previous best $O(1)$ distortion $\ell_1$ oblivious subspace embedding of \cite{DBLP:conf/soda/WangW19}, which achieved a sketching dimension of $2^{2^{O(d)}}$, was based on analyzing a sketch $\bfS$ of i.i.d.\  Cauchy random variables. The only analyses of such random variables we are aware of, in the context of subspace embeddings, works by truncating the random variables so that they have a finite expectation, and then analyzing the behavior of the random variable $\|\bfS\bfy\|_1$, for an input vector $\bfy$ in expectation. It turns out that the expectation of this random variable can be much larger than the value it takes with constant probability, as it is very heavy-tailed. Namely, the expected value of $\|\bfS\bfy\|_1$ after truncation is $\Theta(\log n) \|\bfy\|_1$, which makes it unsuitable for the sketching dimension that we seek. 

\paragraph{Rademacher and $M$ Sketches \cite{verbin2012rademacher,DBLP:conf/soda/ClarksonW15}.}
Using techniques from the data stream literature, the {\it Rademacher sketch} of \cite{verbin2012rademacher} and the {\it $M$-sketch} of \cite{DBLP:conf/soda/ClarksonW15} achieve an $O(1)$-approximation for a single vector by subsampling rows of $\bfy$ with probability $p$ and rescaling by $1/p$ at $O(\log n)$ scales $p = 1,1/2,1/4,1/8,\dots,1/n$. This approach allows us to more finely track the random variables in our sketch, and serves as the starting point of our sketch. Note that for a single scale $p$ and a single coordinate $\bfy_i$, the expected contribution of the subsampled and rescaled coordinate is
\[
    \frac1p \cdot p\cdot \abs{\bfy_i} = \abs{\bfy_i}.
\]
Then in expectation, the $O(\log n)$ subsampling levels give a $O(\log n)$ factor approximation, which is the same as that of a Cauchy sketch. However, due to the geometrically decreasing sampling rates, we are able to argue that with good probability the coordinate does not survive more than $O(1)$ levels. Thus we effectively ``beat the expectation'', showing that the random variable is much less than what its expectation would predict, with good probability. We illustrate this with an example.

Suppose the first $\sqrt{n}$ coordinates of $\bfy$ equal $\frac{1}{\sqrt{n}}$, and remaining $n-\sqrt{n}$ coordinates equal $\frac{1}{n}$. Then $\|\bfy\|_1 = 2(1-o(1))$. If we subsample at geometric rates $1, 1/2, 1/4,\ldots, 1/n$ and use $t = O(1)$ hash buckets in \textsf{CountSketch} in each scale, then for rates larger than $1/\sqrt{n}$, the random signs in each \textsf{CountSketch} bucket cancel out and the absolute value of the bucket concentrates to its Euclidean norm, which is much smaller than its $1$-norm. At the rate $p = 1/\sqrt{n}$, we expect a single survivor from the first $\sqrt{n}$ coordinates of $\bfy$. We call this the {\it ideal rate} for the first $\sqrt{n}$ coordinates of $\bfy$. There are also about $\sqrt{n}$ survivors from the remaining $n-\sqrt{n}$ coordinates of $\bfy$ at this ideal rate, but these $\sqrt{n}$ survivors concentrate to their Euclidean norm in each \textsf{CountSketch} bucket, which will be about $1/n^{3/4}$, and negligible compared to the value $1/\sqrt{n}$. This lone survivor will be scaled up by $\sqrt{n}$, giving a contribution of $1$ to the overall $1$-norm. Similarly, at the subsampling rate of $1/n$, we expect one surviving coordinate of $\bfy$, it is scaled up by $n$, and it gives an additional contribution of about $1$ to the overall $1$-norm. Overall, this gives a good approximation to $\|\bfy\|_1$, which is $2(1-o(1))$. 

While the above gives a good approximation, the expected value of the $1$-norm of $\bfS\bfy$ is a much larger $\Theta(\log n)$. Indeed, consider subsampling rates $1/(2 \sqrt{n}), 1/(4 \sqrt{n}), 1/(8 \sqrt{n}), \ldots$. For each of these, the single survivor of the first $\sqrt{n}$ coordinates of $y$ has probability $1/2, 1/4, 1/8, \ldots,$ of surviving each successive level. If it survives, it is scaled up by $2, 4, 8, \ldots,$ giving an overall expectation of $\Theta(\log n)$. Thus, the expectation is not what we should be looking at, but rather we should be conditioning on the event that no items among the first $\sqrt{n}$ surviving beyond the rate $1/\sqrt{n}$. 

\paragraph{Ingredient 1: Aggressive Subsampling and Randomized Boundaries.}
So far, this is standard. Indeed, the {\it Rademacher sketch} in \cite{verbin2012rademacher} and the {\it M-Sketch} in \cite{DBLP:conf/soda/ClarksonW15} achieve an $O(1)$-approximation for a single vector and argue this way. But these works cannot achieve $(1+\eps)$-approximation with good probability, since it is already problematic if the single survivor of the first $\sqrt{n}$ coordinates of $\bfy$ survives one additional subsampling rate beyond its ideal rate, and this happens with constant probability. This motivates our first fix: instead of subsampling at rates $1/2^i$, for $i = 0, 1, 2, \ldots, O(\log n)$, we subsample at a much more aggressive $\exp(\eps^{-1}\polylog(n))^i$ for $i = 0,1,2,\dots,O(\log n)$, and furthermore, randomly shift these subsampling rates as well. 

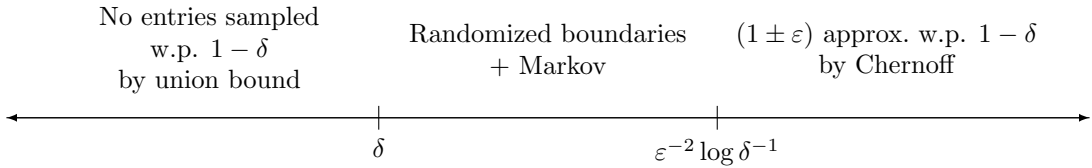
\begin{figure}[ht]
    \centering
    \begin{tikzpicture}[scale=0.9]
        \draw[latex-] (-8,0) -- (8,0);
        \draw[-latex] (-8,0) -- (8,0);
        \draw[shift={(-2.5,0)},color=black] (0pt,5pt) -- (0pt,-5pt) node[below] 
        {$\delta$};
        \draw[shift={(2.5,0)},color=black] (0pt,5pt) -- (0pt,-5pt) node[below] 
        {$\eps^{-2}\log\delta^{-1}$};
        \node[align=center] at (-5,1) {No entries sampled \\ w.p. $1-\delta$ \\ by union bound};
        \node[align=center] at (-0,1) {Randomized boundaries \\ + Markov};
        \node[align=center] at (5,1) {$(1\pm\eps)$ approx.\ w.p. $1-\delta$ \\by Chernoff};
    \end{tikzpicture}
    \caption{Casework on $pm$}
    \label{fig:casework-pm}
\end{figure}

To see why this is a good idea, consider a level set of weight $w$, which is the multiset of coordinates of $\bfy$ with absolute value $\Theta(w)$ (think of $w$ as $[2^j, 2^{j+1})$ for some $j$) that is subsampled at rate $p$ and rescaled by $1/p$. Let the size of the level set be $m$. We case on $pm$ (see Figure \ref{fig:casework-pm}). If $pm \geq \varepsilon^{-2}\log \frac1\delta$, then Chernoff bounds imply that this concentrates to the expected mass of $pm$ with probability at least $1 - \delta$. On the other hand, if $pm < \delta$, then by a union bound, there is a $\delta$ probability that any of the $m$ elements in the level set are sampled. By taking $\delta = 1/\log^2 n$, we see that by a union bound over the at most $\log n$ level sets and $\log n$ sampling rates $p$, any level set with size $m$ and subsampling rate $p$ with $pm \notin [\delta, \varepsilon^{-2}\log \delta^{-1}]$ either samples $(1\pm\varepsilon)$ of the expected mass, or doesn't sample the level at all, with constant probability. Then, for these levels, our earlier analyses involving \textsf{CountSketch} apply and in fact give us a $(1\pm O(\eps))$ approximation. However, for the level sets and the sampling rates with $pm\in [\delta, \varepsilon^{-2}\log \delta^{-1}]$, we cannot make any meaningful statements about these levels with high accuracy and probability. To remedy this situation, we randomize our choice of the sampling rates $p$ themselves and bound the contribution from these levels with a Markov expectation bound. To this end, we let $W = \varepsilon^{-2}\delta^{-1}\log\delta^{-1}$ be the size of this bad window, we let $B = \exp(\varepsilon^{-1}\log W)$ be our branching factor, and we choose our sampling rates to be $p_i = B^{-u}B^{-i}$ for a uniformly random $u\sim[0,1]$. Note then that the probability that a given sampling level $p_i$ falls in the window $p_im\in[\delta, \varepsilon^{-2}\log\delta^{-1}]$ is at most $\varepsilon$, since after taking logarithms, the bad window is an $\varepsilon$ fraction of the range of the uniformly random shift $u$. Now note that for each level set of size $m$ and weight $w$, there are only $O(1)$ sampling levels $p_i$ that have a nonzero probability such that $p_im\in[\delta, \varepsilon^{-2}\log\delta^{-1}]$, and these levels contribute an expected $\varepsilon \cdot p\cdot p^{-1}\cdot m\cdot w = \varepsilon mw$ amount of $\ell_1$ mass, so summing over all level sets, the \emph{expected} contribution from these bad sampling rates is at most an $\varepsilon$ fraction of the total $\ell_1$ mass $\norm{\bfy}_1$. 

This is an example of how subsampling gives us more flexibility than sketches using Cauchy random variables - even though the expectation is large, we can argue with arbitrarily large constant probability we obtain a $(1+O(\eps))$-approximation by separating the analysis into an expectation for some levels and a union bound for others. One also needs to argue that no vector has its $1$-norm shrink by more than a $(1-\eps)$-factor, which is simpler and similar to previous work \cite{DBLP:conf/soda/ClarksonW15}. Here the idea is that for every level set of coordinates of $\bfy$, by Chernoff bounds, there are enough survivors in a level set at its ideal rate and that the noise in \textsf{CountSketch} buckets will be small. Our analysis so far is novel, and we note that prior analyses of subsampling \cite{verbin2012rademacher,DBLP:conf/soda/ClarksonW15} could not obtain a $1+O(\eps)$-approximation even for a fixed vector. 

However, we are still in trouble - the above analysis gives a $(1+O(\eps))$-approximation, but only a constant probability of success due to the Markov bound applied to the bad sampling rates. We could more aggressively subsample, namely, at rate roughly $1/2^{2^{O(d)}}$ and with $2^{2^{O(d)}}$ buckets, and then we could make the failure probability $(\eps)^{O(d)}$ for a fixed vector, which is now small enough to union bound over an $\eps$-net of vectors in a $d$-dimensional subspace. This is enough to recover the same sketching dimension as the sketch in \cite{DBLP:conf/soda/WangW19}, which instead consisted of an $r \times n$ matrix of i.i.d.\  Cauchy random variables. There it was shown that with probability $1- O \left (\frac{\log \log r}{\log r} \right )$, for any fixed vector $\bfy$, $\|\bfS\bfy\|_1 = \Theta(1) \|\bfy\|_1$. The idea was then to take a union bound over $2^{O(d)}$ vectors in a net for the subspace, which constrains $\frac{\log \log r}{\log r} \leq 2^{-\Theta(d)}$, resulting in an $r = 2^{2^{O(d)}}$ overall dependence. With minor modifications, one can achieve $\|\bfS\bfA\bfx\|_1 = (1 \pm \eps) \|\bfA\bfx\|_1$ for all $\bfx$ by setting $r = 2^{2^{O(d/\epsilon^2)}}$. This is the best one can achieve for an arbitrary set of $2^{O(d)}$ vectors, as can be deduced from the lower bound in \cite{DBLP:conf/soda/WangW19}; see Section \ref{sec:prelim} for details. 

\paragraph{Ingredient 2: $\ell_1$ Leverage Scores.}
One might suspect that the above approach is optimal, since union bounding over $2^{O(d)}$ arbitrary points does give an optimal sketching dimension for subspace embeddings for the Euclidean norm. It turns out though that for the $1$-norm this is not the case, and one can do exponentially better by using the fact that these $2^{O(d)}$ points all live in the same $d$-dimensional subspace. Indeed, instead of making a net argument, our analysis proceeds through the {\it $\ell_1$-leverage score vector} (see Definition \ref{def:l1-leverage-scores}), which provides a nonuniform importance sampling distribution that is analogous to the standard leverage scores for $\ell_2$. 

With these $\ell_1$ leverage scores in hand, we proceed as discussed previously, choosing a uniformly random shift $u \in [0,1]$ and subsampling at rates $1/((\log n)^{\textrm{poly}((d/\eps) (i + u)})$ for $i = 0, 1, 2, \ldots, O(\log n)$, and also increasing our number of \textsf{CountSketch} buckets in each subsampling level to $(\log n)^{\poly(d/\eps)}$. Now we can show that the expected $\ell_1$-norm of the $\ell_1$ leverage score vector $\bflambda$ that survives an additional level is only $\eps \|\bflambda\|_1/d$. Noting that $\|\bflambda\|_1 = d$, this bound is $O(\eps)$ with constant probability. But the entries of $\bflambda$ uniformly bound the corresponding entries of any vector $\bfy$ in the subspace with $\|\bfy\|_1 = 1$, and thus we obtain that for all vectors in the subspace, the total expected $\ell_1$-contribution from level sets that are one subsampling rate beyond their ideal rate is $O(\eps)\norm{\bfy}_1$. Since the subsampling rate is $(\log n)^{-\poly(d/\epsilon)}$, the expected number of survivors two or more levels out is small enough to union bound over all net vectors. Finally, to remove the $\log n$ factor in our sketch, making it independent of the original dimension $n$, we can compose our embedding with the $2^{2^{O(d)}}$ $\ell_1$ oblivious subspace embedding of \cite{DBLP:conf/soda/WangW19}; we are able to adapt their $O(1)$-approximation to achieve a $(1+\eps)$-approximation with $2^{2^{O(d/\eps^2)}}$ dimensions, and consequently in our sketch, $\log n = 2^{O(d/\eps^2)}$. Our full discussion is in Section \ref{section:1+eps-l1-subspace-embedding}. 

\subsubsection{A Transition to Tensors}
One could hope to use our techniques for subspaces to obtain sketches for the sum of $q$-mode tensors, which could then be used for independence testing in a stream. Consider the simple example of a $2$-mode tensor, i.e., a $d \times d$ matrix $\bfP$. As described above, a streaming-amenable way of sketching this would be to find a sketch $\bfS: \mathbb{R}^{d^2} \rightarrow \mathbb{R}^{k^2}$ of the form $\bfS = \bfS^1 \otimes \bfS^2$, where $\bfS^1, \bfS^2$ are maps from $\mathbb{R}^d$ to $\mathbb{R}^k$. In this case, we have that $\bfS(\bfP) = \bfS^1 \cdot \bfP \cdot (\bfS^2)^\top$, where $\cdot$ denotes matrix multiplication. 

One aspect of our sketch above is that we can achieve a tradeoff: instead of looking at one subsampling rate beyond the ideal rate for a given level set of a vector, we can look at $1/\alpha$ rates for $\alpha \in (0,1)$. Then if we look at $\|\bfS\bfy\|_1$ for a column $\bfy$ of $\bfP$, its expected cost for these $1/\alpha$ rates is $O(1/\alpha)\|\bfS\bfy\|_1$. If we use roughly $(d\log d)^{\alpha}$ buckets in each \textsf{CountSketch}, together with subsampling rate roughly $(d \log d)^{-\alpha}$, then after $O(1/\alpha)$ rates beyond the ideal rate for a given level set of a vector, the probability the level set survives is at most $\left (\frac{1}{(d \log d)^{\alpha}} \right )^{O(1/\alpha)} \ll O \left (\frac{1}{d \log d} \right )$, which is so small that we can union bound over all columns of $\bfP$ and all level sets in each column. Consequently, we can condition on this event, and take an expectation over the $O(1/\alpha)$ rates nearest to the ideal rate of each level set in each column to obtain an overall $O(1/\alpha)$ approximation with roughly $(d \log d)^{\alpha}$ memory. One can also show that with constant probability, the $1$-norm does not decrease by more than a constant factor, and thus, with constant overall probability, $\Omega(\|\bfP\|_1) \leq \|\bfS^1 \bfP\|_1 = O(1/\alpha) \|\bfP\|_1$. Applying $\bfS^2$ to the matrix $\bfS^1\bfP$ we can conclude that with constant overall probability, $\Omega(\|\bfP\|_1) \leq \|\bfS^1\bfP\bfS^2\|_1 = O(1/\alpha^2) \|\bfP\|_1$. Our overall sketching dimension is $d^{2\alpha} \ll d$ if $\alpha \ll 1$. Thus, the memory we achieve is a significant improvement over the trivial $d^2$ bound, our sketch $\bfS = \bfS^1 \otimes \bfS^2$ is a tensor product, and we achieve an $O(1/\alpha^2)$-approximation. Ours is the first sketch to achieve a tradeoff, as the Rademacher sketch of \cite{verbin2012rademacher} does not apply in this case\footnote{The notion of the Rademacher dimension in \cite{verbin2012rademacher} is at least $\sqrt{d}$, and their sketch size is at least the Rademacher dimension to the $5$-th power.}. 

Unfortunately, if we want constant distortion, our single-mode sketch size $k$ will be $d^{2 \alpha}$, which means for constant $\alpha$, it is not strong enough to obtain a
polylogarithmic dependence on $d$. In fact, we show that for any $d \times d$ matrix $\bfP$, if you compute $\bfS\bfP$ for an oblivious sketch $\bfS$ with $t$ rows, the estimator $\|\bfS\bfP\|_1$ is at best an $O \left (\frac{\log d}{\log t} \right )$-approximation to $\|\bfP\|_1$. Indeed, one can show this already for the distribution in which with probability $1/2$, $\bfP \in \mathbb{R}^{d \times d}$ is an i.i.d.\  Cauchy matrix, and with probability $1/2$, $\bfP$ has its first $t$ columns being i.i.d.\  Cauchy random variables, scaled by $d/t$, and remaining columns equal to $0$. In both cases $\|\bfP\|_1 = \Theta(d^2 \log d),$ but in the first case $\|\bfS\bfP\|_1 = O(d \log t \|\bfS\|_1)$, while in the second case $\|\bfS\bfP\|_1 = \Omega(d \log d \|\bfS\|_1)$, both with constant probability. These algorithms and lower bounds are discussed in Section \ref{section:entrywise-embeddings}.

Fortunately, for independence testing, we only need to approximate the $1$-norm of a single tensor, and so our estimator can be a non-convex median-based estimator, which we now show how to utilize. 

\subsubsection{Independence Testing}
Our sketch $\bfS = \bfS^1 \otimes \bfS^2 \otimes \cdots \otimes \bfS^q$ is a tensor product of $q$ sketches, each itself being a sketch for estimating the $1$-norm of a $d$-dimensional vector with a $\log(1/\delta)$ dependence. We must choose the $\bfS^i$ carefully, and cannot take the $\bfS^i$ to be an arbitrary black box sketch for estimating the $1$-norm, even with a non-linear high probability estimator. As an illustration, suppose $q = 2$ and we have a $d \times d$ matrix $\bfP$ and we compute $\bfS^1 \bfP \bfS^2$, where $\bfS^1$ and $\bfS^2$ are i.i.d.\ Cauchy matrices with $r = O(\eps^{-1} \log d)$ small dimension with corresponding median of absolute values estimator, i.e., the sketch of \cite{DBLP:journals/jacm/Indyk06} above. Then, applying the estimator of $\bfS^2$ to each row of $\bfS^1 \bfP$, we would have that our overall estimate is $(1 \pm \eps)\norm{\bfS^1 \bfP}_1$ with probability $1-1/\poly(d)$. The issue is that, for constant $\eps$, if $\bfP = (1, 1, 1, \ldots, 1) \otimes (1, 0, 0, \ldots, 0)$, then $\|\bfS^1\bfP\|_1 = \Theta(d \log r)$ with large probability, while if $\bfP = \bfI_d$, the $d \times d$ identity matrix, then $\|\bfS^1\bfP\|_1 = \Theta(d \log d)$ with large probability. To see this, if $\bfP = (1, 1, 1, \ldots, 1) \otimes (1, 0, 0, \ldots, 0)$, note that the $i$-th row of $\bfS^1\bfP = d \cdot (C^i, 0, \ldots, 0)$, where $C^i$ is a standard Cauchy, and the $C^1, \ldots, C^r$ are independent. About a $\Theta(2^{-j})$ fraction of the $|C^i|$ will be $2^{j}$, and so with constant probability $\|\bfS^1\bfP\|_1 = \Theta(d \log r)$. On the other hand, if $\bfP = \bfI_d$, then $\bfS^1\bfP = \bfS^1$, which is an $r \times d$ matrix of i.i.d.\  Cauchy random variables, and the same reasoning shows with constant probability that $\|\bfS^1\bfP\|_1 = \Theta(d \log(rd))$, which is almost a $\log d$ factor larger than the other case. Thus, we cannot decode mode by mode with a generic high probability sketch for the $1$-norm. 

Perhaps surprisingly, we show that a different choice of $\bfS^i$, which is itself an existing sketch for estimating the $1$-norm of a $d$-dimensional vector with a $\log(1/\delta)$ dependence, {\it does work}. 
In more detail, the sketch of \cite{DBLP:conf/stoc/IndykW05} works by defining level sets of coordinates of $\bfx$ according to their magnitudes and subsamples the coordinates at different rates. For each level set, if it contributes a non-negligible fraction to $\norm{\bfx}_1$, there is a subsampling level for which (1) there are sufficiently many survivors from the level set in this subsampling level and (2) these survivors are so-called $\ell_2$-heavy hitters (see, e.g., \cite{CCF02}) among all the survivors in this subsampling level. Hence, recovering the heavy hitters at each subsampling rate allows us to estimate the contribution of each level set to $\norm{\bfx}_1$. Here a median is used when applying \textsf{CountSketch} to ensure that we succeed with high probability. This single mode sketch has been applied to $\ell_1$-estimation in various places \cite{earth_mover,DBLP:conf/nips/LevinSW18}. We refer to this as a \textit{\textsf{SubsamplingHeavyHitters} sketch} in the following discussion. 

Our overall sketch $\bfS = \bfS^1 \otimes \bfS^2 \otimes \cdots \otimes \bfS^q$, where each $\bfS^i$ is a \textsf{SubsamplingHeavyHitters} sketch.
Moreover, $\bfS = \bfS^1 \otimes \cdots \otimes \bfS^q$, and so given vectors $P^1, \ldots, P^q \in \mathbb{R}^d$ in a stream, one can maintain $\bfS^i P^i$ for $i = 1, \ldots, q$, as well as $\bfS P$ for any vector $P\in \R^{d^q}$. In particular, in the context of independence testing, the $P^i$ could be the empirical marginal distributions and $P$ the empirical joint distribution. 
We show that $\bfS$ can be used to estimate the $\ell_1$-norm of an underlying arbitrary vector $x \in \mathbb{R}^{d^q}$ (which will be taken to be $P-P^1 \otimes \cdots \otimes P^q$). We do this by viewing $\bfS^q$ as being applied to each row of a flattened $t^{q-1} \times d$ matrix, where $t$ is the common sketching dimension of the $\bfS^i$. This matrix is defined as follows. We flatten $x$ to a $d^{q-1} \times d$ matrix $X$. We then consider the ``partially sketched" $d^{q-1} \times d$ matrix, where the $i$-th column is $\bfS^1 \otimes \bfS^2 \otimes \cdots \otimes \bfS^{q-1}$ applied to the $i$-th column $X_{*,i}$ of $X$. This gives us a $t^{q-1} \times d$ matrix $Y$, and this is the matrix whose rows we apply $\bfS^q$ to. Now $\bfS^q$ is a \textsf{SubsamplingHeavyHitters} sketch, but instead of having a signed sum of single coordinates in each \textsf{CountSketch} bucket, we have a signed sum of columns of $Y$ in each bucket, which are themselves sketches of $d^{q-1}$-dimensional vectors, where the sketching matrix is itself a tensor product of smaller sketching matrices.

The problem is that $\bfS^q$ estimates the number of columns of a matrix in a level set (here the level sets are groups of columns with approximately the same $1$-norm) by hashing columns together and estimating the size of each level set, where columns are in the same level set if they have approximately the same $1$-norm. Fortunately, since $\bfS^1 \otimes \cdots \otimes \bfS^{q-1}$ is still a linear map, hashing the sketched columns (sketched by $\bfS^1 \otimes \cdots \otimes \bfS^{q-1}$) together is the same as taking the sketch (by $\bfS^1 \otimes \cdots \otimes \bfS^{q-1}$) of the hashed columns together. However, it is still unclear what the $1$-norm of the sketch of the hashed columns is. In fact, it cannot be concentrated with high probability by the above discussion. Fortunately, for each bucket in a \textsf{CountSketch} associated with a subsampling rate in $\bfS^q$, we can use our knowledge of $\bfS^1 \otimes \cdots \otimes \bfS^{q-1}$ to {\it recursively estimate} the $1$-norm inside of that bucket. This recursive estimation involves applying $\bfS^{q-1}$ to the rows of a $t^{q-2} \times d$ matrix $Z$, computing recursive estimates, and so on. Finally, we use these recursive estimates to estimate the level sets of columns of the matrix $X$, and ultimately build and output the estimator provided by $\bfS^q$. 

The main issue we still face is how to handle the blowup in approximation ratio and error probability in each recursive step. In each $\bfS^i$ we would like to randomize boundaries to avoid overcounting when estimating level set sizes in the estimator. However, the approximation error grows as we decode more modes. The most natural approach, if the error after decoding the $i$-th mode is $(1 + \eta)$, is to randomize boundaries so that the probability is $O(\eta)$ of landing near a boundary, and consequently not being included in the estimator, when decoding $\bfS^{i+1}$. However, this blows up the approximation to $(1+\eta)^2$. Unfolding the recursion, we get a $(1+\eps)^{\tilde{O}(2^q)}$ overall approximation. Setting our initial $\eps$ to $\epsilon/2^{\tilde{O}(q)}$, we can make the overall approximation $1+\eps$.
This yields a $2^{O(q)}$ factor in the sketching dimension on each mode and thus a $2^{O(q^2)}$ factor in the sketching dimension in the overall tensor product.

\begin{figure}[t]
\begin{minipage}{0.4\textwidth}
    \centering
    \includegraphics[width=\textwidth]{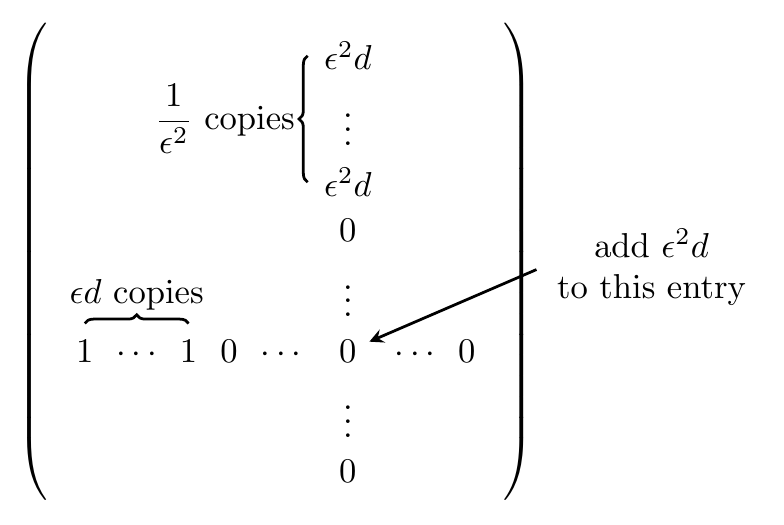}
    \caption{Hard instance for the attempted improvement when $q=2$. The algorithm first hashes rows into buckets.}
    \label{fig:hard_2d}
\end{minipage}
\hfill
\begin{minipage}{0.55\textwidth}
    \centering
    \includegraphics[width=0.8\textwidth]{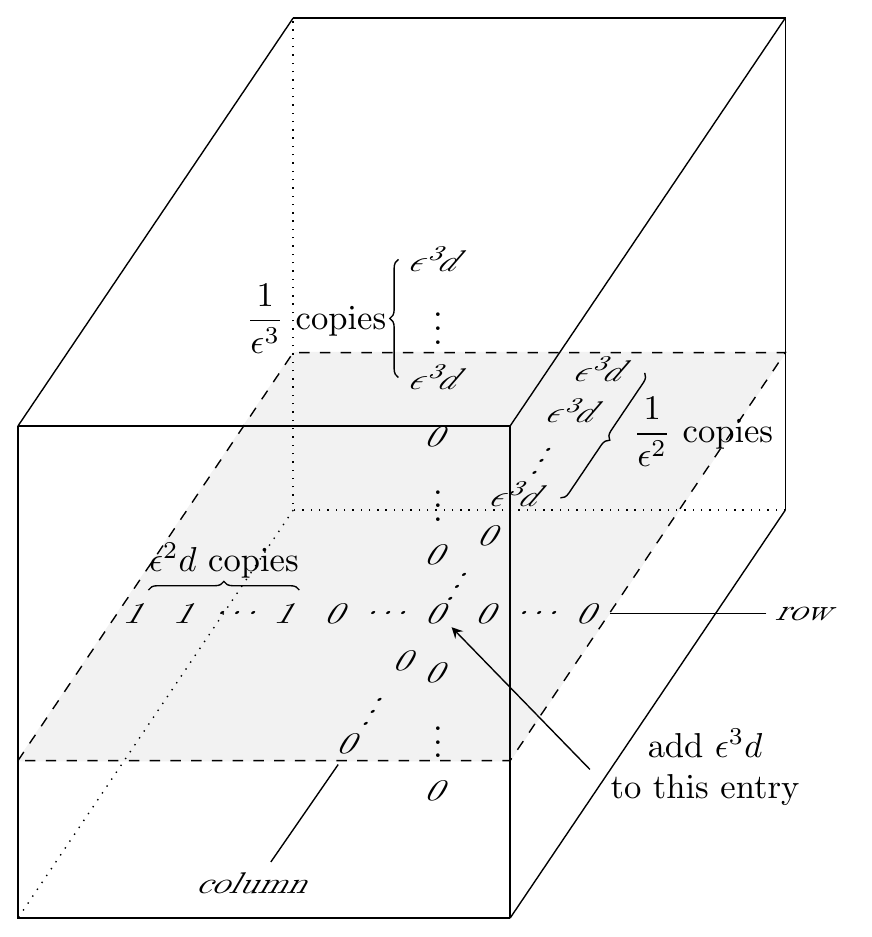}
    \caption{Hard instance for the attempted improvement when $q=3$. The algorithm first hashes horizontal slices into buckets (parallel to the shaded slice), then the sub-algorithm for each bucket (which contains a linear combination of horizontal slices) hashes rows into buckets.}
    \label{fig:hard_3d}
\end{minipage}
\end{figure}

It seems difficult to improve the $2^{O(q^2)}$ bound. To improve this bound, we need to make the error smaller than $(1+\eta)^2$ in the $(i+1)$-st mode after obtaining a multiplicative error of $(1+\eta)$ factor in the $i$-th mode. Imagine that we flatten the first $(i+1)$-modes as a $d\times d^i$ matrix. It is tempting to view one's estimate in the $(i+1)$-st mode as providing an approximation to the $1$-norm of the {\it vector of estimates} of rows produced by $\bfS^i$. 
Since we hash the rows (the first $i$ modes) into buckets as in a \textsf{CountSketch} structure, a heavy row in a bucket is perturbed by some small noise and we need to claim that this small perturbation only incurs a small error in the estimate of the row by $\bfS^i$. 
An issue arises that a small perturbation in $1$-norm on the first $i$ modes may appear larger for a heavy row on the first $(i-1)$ modes, or, equivalently, the first $(i+1)$ modes can tolerate a constant-factor smaller perturbation under $\bfS^{i+1}$ than the first $i$ modes under $\bfS^i$, and thus $\bfS^{i+1}$ needs to use a constant-factor more number of buckets than $\bfS^i$ to reduce the error in each bucket, resulting in the same $2^{O(q^2)}$ factor in the overall sketching dimension. 
To see that the shrinking perturbation on higher modes is indeed possible, see Figures~\ref{fig:hard_2d} and~\ref{fig:hard_3d} for example. In Figure~\ref{fig:hard_2d}, the $d\times d$ matrix has norm $\Theta(d)$ and exactly one $\eps$-heavy row. To recover the heavy row, the rows are hashed into $1/\eps^2$ buckets and the heavy row is combined with exactly one value of $\eps^2 d$ at the specified entry in some bucket. Note that the entry is an $\eps$-heavy hitter in the combined row. Adding a value of $\eps^2 d$ to the specified entry is only an $\eps^2$-factor perturbation to the overall matrix but an $\eps$-factor perturbation to the bucket and a constant-factor perturbation to that entry. Similarly, in Figure~\ref{fig:hard_3d}, adding a value of $\eps^3 d$ to the specified entry is only an $\eps^3$-factor perturbation to the overall $d\times d\times d$ cube but an $\eps^2$-factor perturbation to the only $\eps$-heavy slice (shaded) and an $\eps$-factor perturbation to the only $\eps$-heavy row on that slice.

It is important to note that the work of Braverman and Ostrovsky~\cite{braverman2010measuring} also applies $\ell_1$-sketches in the context of tensor products. However, the subroutines used in \cite{braverman2010measuring} define both level sets and subsampling rates in power of $1+\eps$, and $\eps$ can be shown to become polynomially smaller in each recursive step, and  
consequently, when iterating this process for a general tensor of order $q$, at the base level it requires a $(1+\eps^{2^q})$-approximation to the relevant quantities, resulting in a doubly exponential $\Omega(1/\eps^{2^q})$ amount of memory. Removing the $1/\eps^{2^q}$ term from their space complexity does not appear to be straightforward~\cite{B20}. In contrast, our algorithm is a more direct analogue of {\sc TensorSketch} \cite{DBLP:journals/toct/Pagh13,DBLP:conf/kdd/PhamP13,DBLP:conf/nips/AvronNW14,ahle2020oblivious} but for the $1$-norm, and admits a simpler analysis, leading to a singly exponential sketching dimension as well as a singly exponential memory bound in a data stream.

Given the simplicity and modular components of our algorithm, we can extend it to any distance measure with a (1) small so-called {\it Rademacher dimension}, a (2) black box sketching algorithm, and (3) an approximate triangle inequality. 

\subsubsection{Polynomial-Sized Subspace Embeddings}
In order to obtain even better oblivious subspace embeddings into $\ell_1$, we consider the case when the input matrix $\bfA$ itself has i.i.d.\  entries. This models settings in statistics with random design matrices for regression, and our results can be viewed from the lens of average-case complexity. The important property from the distribution on each entry of $\bfA$ is its tail. 

We give the intuition for our improved upper bounds when $\bfA$ is a matrix of i.i.d.\  Cauchy random variables. We obtain an $O((\log n)/\log d)$-approximation by simply using a \textsf{CountSketch} matrix $\bfS$ with $\poly(d)$ rows. When $n$ is at most a polynomial in $d$, this gives an $O(1)$-approximation, bypassing the $\Omega(d/\log^2 d)$ lower bound of \cite{DBLP:conf/soda/WangW19} for arbitrary input matrices $\bfA$. The idea is that by looking at the rows of $\bfA$ containing the largest $\poly(d)$ entries in $\bfA$ - call this submatrix of rows $\bfA_{top}$ - then we can show $\|\bfA_{top}\bfx\|_1 \geq n (\log d) \|\bfx\|_1$ for all $\bfx$. On the other hand, one can show that for any $x$, $\|\bfA\bfx\|_1 \leq \|\bfA_{top} \bfx\|_1 + (n \log n) \|\bfx\|_1,$ by concentration bounds applied to the rows not containing a large entry. Finally, we use that (1) \textsf{CountSketch} does not increase the $1$-norm of any vector it is applied to, and (2) it perfectly hashes the rows in $\bfA_{top}$. Putting these statements together gives us an $O((\log n)/\log d)$-approximation. 

We also give a number of lower bounds, showing that our algorithms for random $\bfA$ are also nearly optimal in their sketching dimension. These results are presented in Section \ref{section:subspace_embeddings_random}. 

\subsection{Additional Related Work}
Our focus is on linear oblivious maps. Besides being a fundamental mathematical object, such maps are essential for the data stream and distributed models above, allowing for very fast update time under updates. There are other, non-oblivious embeddings for $n$ points in $\ell_1$, achieving $O(n/\epsilon^2)$ dimensions \cite{Newman2010OnCD,schechtman1987more,talagrand1990embedding}, which is nearly optimal \cite{charikar2002dimension,brinkman2005impossibility,DBLP:conf/focs/AndoniCNN11}. See also \cite{cohen2015lp,talagrand1990embedding} for non-oblivious subspace embeddings based on Lewis weights. 

For oblivious subspace embeddings, one can achieve $O(d \log d)$ distortion with a sketching dimension of $O(d \log d)$ using a matrix of Cauchy random variables \cite{DBLP:conf/stoc/SohlerW11}. This is a significantly larger distortion than the distortion we seek here. It does not contradict the lower bound of \cite{DBLP:conf/soda/WangW19} which grows roughly as $\Omega(d/\log^2 r)$, where $r$ is the sketching dimension. 

\section{Preliminaries}\label{sec:prelim}

\subsection{Subspace embeddings}

We record some results in the literature that are standard ingredients in the construction and analysis of subspace embeddings. We first recall the \textsf{CountSketch} construction.
\begin{dfn}[\textsf{CountSketch} \cite{CCF02}]\label{def:countsketch}
\textsf{CountSketch} is a distribution over $r\times n$ matrices that samples a random matrix $\bfS$ as follows. 
\begin{itemize}
    \item Let $H: [n]\to [r]$ be a random hash function, so that $H(i) = r'$ for $r'\in [r]$ with probability $1/r$.
    \item For each $i\in[n]$, let $\Lambda_i \sim \{\pm1\}$. 
    \item $\bfS$ is an $r\times n$ matrix taking values in $\{-1,0,1\}$ such that $\bfS_{H(i), i} = \Lambda_i$ for each $i\in[n]$ and $0$s everywhere else.
\end{itemize}
\end{dfn}
\begin{rem}
The \textsf{CountSketch} construction originated in the data stream literature \cite{CCF02} and has been successfully applied to problems in numerical linear algebra in works such as \cite{DBLP:conf/stoc/DasguptaKS10,clarkson2017low,DBLP:conf/soda/ClarksonW15}. 
\end{rem}

The next lemma is useful for net arguments:
\begin{lem}[Net argument]\label{lem:net-argument-ingredients}
Let $\bfA\in\mathbb R^{n\times d}$ and let $\mathcal S\coloneqq \braces*{\bfA\bfx : \bfx\in\mathbb R^d, \norm{\bfA\bfx} = 1}$. Let $\eps\in(0,1/2)$. 
\begin{itemize}
    \item There exists an $\ell_1$ $\eps$-net $\mathcal N$ of size at most $(3/\eps)^d = \exp(d\log(3/\eps))$ over $\mathcal S$, that is, for every $\bfy\in\mathcal S$ there exists a $\bfy'\in\mathcal N$ such that $\norm{\bfy-\bfy'}_1\leq \eps$ \cite{bourgain1989approximation}.
    \item Let $\bfy\in\mathcal S$. Then, $\bfy = \sum_{i=0}^\infty \bfy^{(i)}$ where each nonzero $\bfy^{(i)}$ has $\bfy^{(i)}/\norm{\bfy^{(i)}}_1\in \mathcal N$ and $\norm{\bfy^{(i)}}_1\leq \eps^i$ \cite[implicit in Theorem 3.5]{DBLP:conf/soda/WangW19}. 
\end{itemize}
\end{lem}

The next lemma uses a standard balls and bins martingale argument (e.g., \cite{lee2016lecture}) to show concentration for uniquely hashed items. This is used in \cite{DBLP:conf/soda/ClarksonW15} to analyze the $M$-sketch. 
\begin{lem}[Concentration for unique hashing]\label{lem:concentration-unique-hashing}
Let $h: [n]\to [r]$ be a random hash function. Let $S\subseteq T\subseteq [n]$, $p\in(0,1]$, and $\eps\in (0,1)$ with $\eps r\geq p\abs{T}$. Consider the process that samples each element $i\in [n]$ with probability $p$ and hashes it to a bucket in $[r]$ if it was sampled. Let $X$ be the number of elements $i\in S$ that are sampled and hashed to a bucket containing no other member of $T$. Then,
\[
    \Pr\parens*{X \geq (1-\eps)^2 p\abs{S}}\leq 2\exp\parens*{-\frac{\eps^2}{12}p\abs{S}}.
\]
\end{lem}
\begin{proof}
The proof is deferred to Appendix \ref{sec:appendix:prelim}. 
\end{proof}

\begin{thm}[Improvement of Theorem 3.5, \cite{DBLP:conf/soda/WangW19}]\label{thm:1+eps-dense-cauchy}
Let $\eps\in(0,1)$, $r = \exp(\exp(O(d\eps^{-2}\log\eps^{-1} + \eps^{-2}\log\delta^{-1})))$, and let $\bfS$ be an $r\times n$ matrix of i.i.d.\ Cauchys. Then for any $\bfA\in\mathbb R^{n\times d}$,
\[
    \Pr\braces*{(1-\eps)\norm{\bfA\bfx}_1 \leq \norm{\bfS\bfA\bfx}_1\leq (1+\eps)\norm{\bfA\bfx}_1}\geq 1-\delta. 
\]
\end{thm}
\begin{proof}
The proof is deferred to Appendix \ref{sec:appendix:prelim}. 
\end{proof}

\begin{rem}\label{rem:sketch-arbitrary-vectors}
Note that the above dense sketch preserves an arbitrary fixed vector with probability at least $1-\delta$ using a sketching dimension of $2^{1/\delta}$. Thus, for preserving the $1$-norm of $n$ arbitrary vectors, it suffices to set $\delta = O(1/n)$. On the other hand, the lower bound argument of \cite[Theorem 1.1]{DBLP:conf/soda/WangW19} proves a distortion lower bound for sketching matrices that preserve even just the columns of the input matrix $\bfA$. Thus, we can place our $n$ vectors along the columns of a matrix, so that for constant distortion, a sketch needs $r$ dimensions, for
\[
    \frac{n}{\log^2 r} = O(1)\implies r = \Omega(2^{\sqrt{n}}).
\]
\end{rem}

\section{Singly Exponential \texorpdfstring{$(1+\eps)$}{(1+eps)} \texorpdfstring{$\ell_1$}{l1} Subspace Embeddings}\label{section:1+eps-l1-subspace-embedding}

In this section, we prove the following theorem:
\begin{thm}\label{thm:1+eps-sparse-embedding}
Let $\eps\in(0,1)$ and $\delta\in (0,1)$. Then there exists a sparse oblivious $\ell_1$ subspace embedding $\bfS$ into $r$ dimensions with
\[
    r = \poly(d,\eps^{-1},\delta^{-1},\log n)^{d/\delta\eps}
\]
such that for any $\bfA\in\mathbb R^{n\times d}$,
\[
    \Pr_\bfS\braces*{\forall \bfx\in\mathbb R^d, (1-\eps)\norm{\bfA\bfx}_1 \leq \norm{\bfS\bfA\bfx}_1\leq (1+\eps)\norm{\bfA\bfx}_1, }\geq 1-\delta. 
\]
\end{thm}

Our main contribution towards proving this result is in showing the ``no dilation'' direction $\norm{\bfS\bfA\bfx}_1\leq (1+\eps)\norm{\bfA\bfx}_1$. The ``no contraction'' direction of $\norm{\bfS\bfA\bfx}_1\geq (1-\eps)\norm{\bfA\bfx}_1$ direction was already known in \cite{DBLP:conf/soda/ClarksonW15}, and we defer the details of handling our minor changes to Appendix \ref{section:appendix:no-contraction}. 

If we settle for dense embeddings, then we are able to get an improved sketching dimension that is independent of $n$ by first applying the dense $\ell_1$ subspace embedding of Theorem \ref{thm:1+eps-dense-cauchy}, which maps our subspace down to a subspace of dimension independent of $n$ and preserves $1$-norms up to a $(1+\eps)$ factor distortion:
\begin{cor}\label{cor:dense-embedding}
Let $\eps\in(0,1)$ and $\delta\in (0,1)$. Then there exists an oblivious $\ell_1$ subspace embedding $\bfS$ into $r$ dimensions with
\[
    r = \exp\parens*{\tilde O(d/\delta\eps))}
\]
such that for any $\bfA\in\mathbb R^{n\times d}$,
\[
    \Pr_\bfS\braces*{\forall \bfx\in\mathbb R^d, (1-\eps)\norm{\bfA\bfx}_1 \leq \norm{\bfS\bfA\bfx}_1\leq (1+\eps)\norm{\bfA\bfx}_1}\geq 1-\delta. 
\]
\end{cor}
\begin{proof}
By applying the sketch of Theorem \ref{thm:1+eps-dense-cauchy} first, we can take $\log\log n\leq d/\delta\eps^2$. By repeating again, we can take $\log n\leq \poly(d/\delta\eps)$. Then, the bounds for Theorem \ref{thm:1+eps-sparse-embedding} yield the desired result. 
\end{proof}

By a known lower bound in Theorem 1.1 of \cite{DBLP:conf/soda/WangW19}, which shows that embedding $d$ vectors requires $2^{\sqrt d}$ dimensions, the dependence on $d$ is optimal up to polynomial factors in the exponent. Note that an embedding which preserves the norm of a single vector with probability at least $1 - \delta$ for $\delta = (10d)^{-1}$ also preserves the norms of $d$ vectors with constant probability, so there is also a lower bound of $2^{\sqrt{1/\delta}}$, making the singly exponential dependence on $\delta$ tight up to polynomial factors in the exponent as well. 

\subsection{The embedding}

We first collect constants that will be used. The constants can all be written in terms of the dimensions $n$ and $d$ of the input matrix, the accuracy parameter $\eps$, and the failure rate $\delta$. 
\begin{dfn}[Useful constants]\label{def:useful-constants}
\begin{align*}
    h_{\max} &\coloneqq \log_2(n/\eps) &&= O(\log(n/\eps)) && \text{Sampling levels} \\
    q_{\max} &\coloneqq \log_2(ndh_{\max}/\delta\eps) &&= O(\log(nd/\eps)) && \text{Weight classes}  \\
    \alpha &\coloneqq 2\exp(d\log(3/\eps))q_{\max}/\delta &&= O\parens*{\frac{\exp(d/\eps)\log(nd/\eps)}{\delta}} && \text{Net union bounding} \\
    m_{\mathrm{crowd}} &\coloneqq 300\frac{d^{11}}{\eps^9\delta^4}\log^5(n) &&= O(\poly(d,\eps^{-1},\delta^{-1},\log n)) && \text{Overcrowding hash buckets} \\
    B &\coloneqq (m_{\mathrm{crowd}}h_{\max}q_{\max}/\delta)^{d/\delta\eps} &&= O\parens*{\poly(d,\eps^{-1},\delta^{-1},\log n)^{d/\delta\eps}} && \text{Branching factor}  \\
    N_0 &\coloneqq \frac{12B^u q_{\max}}{\eps^3}\log\alpha &&  && \text{Hash buckets in $0$th level} \\
    N &\coloneqq B\frac{8d^2\log d}{\eps^6}q_{\max}\parens*{\log\alpha}\parens*{\log\frac{B}{\eps}} &&= O(B\log n\poly(d, \eps^{-1})) && \text{Hash buckets per level}
\end{align*}
\end{dfn}

As described in the introduction, the construction of our embedding is essentially a variant of $M$-sketch \cite{DBLP:conf/soda/ClarksonW15}. However, instead of using fixed subsampling rates of $1/\poly(d)$, we use randomized subsampling rates which drop off geometrically by factors of $B = O\parens*{\poly(d,\eps^{-1},\delta^{-1},\log n)^{d/\delta\eps}}$.

\begin{dfn}
Let $u \sim[0,1]$ and define subsampling rates
\[
    p_h \coloneqq B^{-(u+h-1)}
\]
for each $h\in[h_{\max}]$.
\end{dfn}

\begin{dfn}\label{def:scaled-sampling}
For each $i\in[n]$ and $h\in[h_{\max}]$, let
\[
    b_{i,h} \coloneqq \begin{cases}
        1 & \text{w.p. $p_h$} \\
        0 & \text{w.p. $1-p_h$}
    \end{cases},
\]
and let $m_h\coloneqq \sum_{i\in[n]}b_{i,h}$.
\end{dfn}

\begin{dfn}
For each $i\in[n]$, let $\Lambda_i\sim\{\pm1\}$. Let $H_0: [n]\to [N_0]$ and $H_h: [m_h]\to [N]$ for each $h\in[h_{\max}]$ be a random hash functions. 
\end{dfn}

\begin{dfn}[Random-boundary $M$-sketch]\label{def:M-sketch}
Let $\bfC^{(0)}$ be an $N_0\times n$ \textsf{CountSketch} matrix (Definition \ref{def:countsketch}) with random signs $\Lambda_i$ and hash function $H_0$, that is, 
\[
    \bfC^{(0)}_{H_0(i), i} \coloneqq \Lambda_i
\]
for every $i\in[n]$ and $0$s everywhere else. For each $h\in[h_{\max}]$, let $\bfS^{(h)}$ be the $m_h\times n$ scaled sampling matrix given by
\[
    \bfe_j^\top\bfS^{(h)}\bfe_i = \begin{cases}
        \frac1{p_h} & j = \sum_{i'\in[i]} b_{i',h} \\
        0 & \text{otherwise}
    \end{cases}.
\]
For each $h\in[h_{\max}]$, let $\bfC^{(h)}$ be an $N\times m_h$ \textsf{CountSketch} matrix with random signs $\Lambda_i$ and hash function $H_h$, that is,
\[
    \bfC^{(h)}_{H_h(i), i} \coloneqq \Lambda_i
\]
for each $\bfy_i$ that was sampled, i.e., $b_{i,h} = 1$, and $0$s everywhere else. Then, our random-boundary $M$-sketch is given by
\[
    \bfS \coloneqq \begin{pmatrix}
        \bfC^{(0)} \\
        \bfC^{(1)}\bfS^{(1)} \\
        \bfC^{(2)}\bfS^{(2)} \\
        \vdots \\
        \bfC^{(h_{\max})}\bfS^{(h_{\max})}
    \end{pmatrix}.
\]
\end{dfn}

\subsection{Notation for analysis}

We first recall some notation from the analysis of $M$-sketch in \cite{DBLP:conf/soda/ClarksonW15}, as well as a few other definitions. 

\begin{dfn}
Let $\bfy\in\mathbb R^n$ be a unit $\ell_1$ vector and let $q\in\mathbb N$. We define weight classes
\[
    W_q(\bfy)\coloneqq \braces*{\bfy_i : 2^{-q}\leq \abs{\bfy_i}\leq 2^{1-q}}.
\]
When the $\bfy$ is clear from context, we simply write $W_q$ for brevity. For a set $Q\subseteq\mathbb N$, we write
\[
    W_Q\coloneqq \bigcup_{q\in Q} W_q.
\]
We also write $\abs{W_q}$ for the size of $W_q$ and
\[
    \norm{W_q}_1 \coloneqq \sum_{y\in W_q}\abs{y}.
\]
\end{dfn}

\begin{dfn}
For $h\in[h_{\max}]$ and $k\in[N]$, we write $L_{h,k}$ for the multiset of elements that get sampled and hashed to the $k$th bucket in the $h$th level.
\end{dfn}

We briefly digress to recall $\ell_1$ leverage score vectors. 

\begin{dfn}[$\ell_1$ well-conditioned basis (Definition 2, \cite{DBLP:conf/soda/ClarksonDMMMW13}, see also \cite{DBLP:journals/siamcomp/DasguptaDHKM09})]\label{def:well-conditioned-basis}
    A basis $\bfU$ for the range of an $n\times d$ matrix $\bfA$ is $(\alpha,\beta)$-conditioned if $\norm{U}_1\leq \alpha$ and for all $\bfx\in\mathbb R^d$, $\norm{\bfx}_\infty\leq \beta\norm{\bfU\bfx}_1$. We say that $\bfU$ is well-conditioned if $\alpha$ and $\beta$ are low-degree polynomials $d$, independent of $n$. It is known that an Auerbach basis for $\bfA$ is $(d,1)$-conditioned. 
\end{dfn}

\begin{dfn}[$\ell_1$ leverage scores (Definition 3, \cite{DBLP:conf/soda/ClarksonDMMMW13})]\label{def:l1-leverage-scores}
Given a $(d,1)$-conditioned basis $\bfU$ (see Definition \ref{def:well-conditioned-basis}) for the column space of $\bfA\in\mathbb R^{n\times d}$, define the vector $\bflambda\in\mathbb R^n$ of normalized $\ell_1$ leverage scores of $\bfA$ to be
\[
    \bflambda_i \coloneqq \frac{\norm{\bfe_i^\top\bfU}_1}{d}.
\]
\end{dfn}
\begin{rem}
As noted in \cite{DBLP:conf/soda/ClarksonDMMMW13}, the $\ell_1$ leverage scores are not defined uniquely. We also note that for convenience of notation, our normalization of the leverage scores is off by a factor of $d$ from standard definitions in the literature.
\end{rem}

In our analysis, we consider weight classes $W_q(\bflambda)$ of the $\ell_1$ leverage score vector $\bflambda$. For each weight class $W_q$, we set
\[
    h_q\coloneqq \floor*{\log_B\abs{W_q}}
\]
so that $B^{h_q}\leq \abs{W_q} < B^{h_q+1}$.

\begin{dfn}
For a pair $(h,q)\in[h_{\max}]\times\mathbb N$ and an interval $I$, define the event
\[
    \mathcal E_{h,q}(I) \coloneqq \braces*{p_h\abs{W_q(\bflambda)}\in I}
\]
in which sampling the weight class $W_q(\bflambda)$ at rate $p_h$ has an expected number of items in the window $I$. 
\end{dfn}

\begin{dfn}[Scaled leverage score samples]\label{def:scaled-leverage-score-samples}
For each $(h,q)\in[h_{\max}]\times[q_{\max}]$ and an interval $I$, define the random variables
\begin{align*}
    \mathcal S_{h,q} &\coloneqq \frac1{p_h}\sum_{\bflambda_i\in W_q} b_{i,h}\bflambda_i \\
    \mathcal T_{h,q}(I) &\coloneqq \frac1{p_h}\sum_{\bflambda_i\in W_q} b_{i,h}\bflambda_i\mathbbm{1}(\mathcal E_{h,q}(I))
\end{align*}
\end{dfn}

In the following sections, we give upper bounds on the mass of the sketch depending on the weight class of the leverage scores that we look at. We have the following intervals:
\begin{itemize}
    \item \textbf{Dead levels} $p_h\abs{W_q(\bflambda)}\in [0, \delta/h_{\max}q_{\max})$: In this interval, we sample none of these entries with high probability. 
    \item \textbf{Badly concentrated levels} $p_h\abs{W_q(\bflambda)}\in [\delta/h_{\max}q_{\max}, m_{\mathrm{crowd}})$: The expected mass of leverage scores coming from this level is at most $O(\eps/d)$, which means that with constant probability, the mass contribution for all subspace vectors is $O(\eps)$. 
    \item \textbf{Golidlocks levels} $p_h\abs{W_q(\bflambda)}\in [m_{\mathrm{crowd}}, Bm_{\mathrm{crowd}})$: In this interval, we can show that the mass contribution is at most a $(1+\eps)$ factor more than the expected mass coming from this interval with high probability. This level is counted only once, since the size of the interval is less than a $B$ factor. 
    \item \textbf{Oversampled levels} $p_h\abs{W_q(\bflambda)}\in [Bm_{\mathrm{crowd}}, \infty)$: In this interval, we sample so many of these entries that it overcrowds the \textsf{CountSketch} hash buckets, which makes the mass contribution at most an $\eps$ fraction due to the random sign cancellations. 
\end{itemize}

\subsection{Bounding badly concentrated levels}
For levels with expected mass in the interval $[1/\alpha, \log\alpha]$ at subsampling rate $p_h$,  we cannot hope to reason about the mass contribution of this level with high enough probability to union bound over a net, since we need expectation at most $1/\alpha$ for the level to get completely missed by the sampling, and we need at least $\log \alpha$ in order to get concentration. However, we show that because of our randomization of subsampling rates, the leverage score mass contribution from these rows is only an $O(\eps/d)$ fraction of the total mass of the leverage scores in expectation, which means it is only an $O(\eps)$ fraction of the total mass of any subspace vector with constant probability by a combination of properties of leverage scores and a Markov bound.

\begin{lem}[Randomized sampling rates]\label{lem:choose-good-sampling-prob}
Let $\delta'\in(0,1)$, let $0<a<1$ and $b>1$, and let $B'\coloneqq (b/a)^{1/\delta'}$. Let $u\sim[0,1]$, $p = B'^{-u}$, and let $t\in\mathbb R$. Then,
\[
    \Pr(pt \in [a,b]) \leq \begin{cases}
        0 & \text{if $t\geq b$ or $B't\leq a$} \\
        \delta & \text{otherwise}.
    \end{cases}
\]
\end{lem}
\begin{proof}
The first bound follows from the fact that $t = B'^0t\leq pt\leq B'^1t = B't$. For the second bound, we calculate
\[
    \Pr(pt \in [a,b]) = \Pr\parens*{u \in \log_{B'} t + [-\log_{B'} b, -\log_{B'} a]}\leq \log_{B'}(b/a) = \delta'\frac{\log(b/a)}{\log(b/a)} = \delta'.\qedhere
\]
\end{proof}

\begin{cor}\label{cor:bad-event-probability}
For every $h\in[h_{\max}]$ and $q\in[q_{\max}]$,
\[
    \Pr\parens*{\mathcal E_i\parens*{[\delta/h_{\max}q_{\max}, m_{\mathrm{crowd}})}} = \Pr_u\parens*{p_h\abs{W_q}\in[\delta/h_{\max}q_{\max}, m_{\mathrm{crowd}})}\leq \begin{cases}
        0 & \text{if $h\notin\{h_q,h_q+1\}$} \\
        \frac{\delta\eps}{d} & \text{otherwise}
    \end{cases}.
\]
\end{cor}
\begin{proof}
Note that for $h\geq h_q + 2$,
\[
    B^{-h}\abs{W_q}\leq B^{-h+h_q+1}\leq B^{-1}\leq \frac\delta{h_{\max}q_{\max}}
\]
and for $h\leq h_q-1$,
\[
    B^{-h}\abs{W_q}\geq B^{-h+h_q}\geq B^{1}\geq m_{\mathrm{crowd}}
\]
so for $h\notin\{h_q, h_q+1\}$, 
\[
    \Pr_u\parens*{p_h\abs{W_q}\in[\delta/h_{\max}q_{\max}, m_{\mathrm{crowd}})} = \Pr_u\parens*{B^{-u}\parens*{B^{-h}\abs{W_q}}\in[\delta/h_{\max}q_{\max}, m_{\mathrm{crowd}})} = 0. 
\]
On the other hand, for $h\in\{h_q, h_q+1\}$,
\[
    \Pr_u\parens*{p_h\abs{W_q}\in[\delta/h_{\max}q_{\max}, m_{\mathrm{crowd}})} \leq \frac{\delta\eps}{d}
\]
by Lemma \ref{lem:choose-good-sampling-prob}. 
\end{proof}

Note that by Corollary \ref{cor:bad-event-probability}, $\mathcal E_{h,q}([\delta/h_{\max}q_{\max}, m_{\mathrm{crowd}}))$ has nonzero probability for only $h\in\{h_q,h_q+1\}$.

\begin{lem}[Expected mass of bad leverage scores]\label{lem:expected-badly-concentrated-leverage-scores}
\[
    \E_{u,b}\parens*{\sum_{q\in[q_{\max}]}\sum_{h\in[h_{\max}]} \mathcal T_{h,q}([\delta/h_{\max}q_{\max}, m_{\mathrm{crowd}}))}\leq \frac{4\delta\eps}{d}.
\]
\end{lem}
\begin{proof}
Let $I\coloneqq [\delta/h_{\max}q_{\max}, m_{\mathrm{crowd}})$. Then,
\begin{align*}
    \E_{u,b}\parens*{\sum_{q\in[q_{\max}]}\sum_{h\in[h_{\max}]} \mathcal T_{h,q}(I)} &= \E_{u,b}\parens*{\sum_{q\in[q_{\max}]}\sum_{h\in\braces*{h_q,h_q+1}} \mathcal T_{h,q}(I)} \\ 
    &= \sum_{q\in[q_{\max}]}\sum_{h\in\braces*{h_q,h_q+1}}\E_{u,b}\mathcal T_{h,q}(I) \\
    &= \sum_{q\in[q_{\max}]}\sum_{h\in\braces*{h_q,h_q+1}} \sum_{\bflambda_i\in W_q}\E_{u,b} \bracks*{\frac1{p_h}b_{i,h}\bflambda_i\mathbbm{1}(\mathcal E_{h,q}(I))} \\
    &\leq \sum_{q\in[q_{\max}]} \sum_{h\in\braces*{h_q,h_q+1}} \sum_{\bflambda_i\in W_q} 2^{1-q}\E_u(\mathbbm{1}(\mathcal E_{h,q}(I))) \\
    &\leq \sum_{q\in[q_{\max}]} \sum_{h\in\braces*{h_q,h_q+1}} \sum_{\bflambda_i\in W_q} 2^{1-q}\frac{\delta\eps}{d} \\
    &= \sum_{q\in[q_{\max}]}2^{2-q}\abs{W_q} \frac{\delta\eps}{d} \\
    &\leq \frac{4\delta\eps}{d}.\qedhere
\end{align*}
\end{proof}

\begin{lem}\label{lem:leverage-score-bound}
For any $\bfx\in\mathbb R^d$ and $i\in[n]$, we have that
\[
    \frac{\abs{\bfe_i^\top\bfA\bfx}}{\norm{\bfA\bfx}_1}\leq d\bflambda_i
\]
\end{lem}
\begin{proof}
Let $\bfy\in\mathbb R^d$ be such that $\bfA\bfx = \bfU\bfy$. Then,
\[
    \frac{\abs{\bfe_i^\top\bfA\bfx}}{\norm{\bfA\bfx}_1} = \frac{\abs{\bfe_i^\top\bfU\bfy}}{\norm{\bfU\bfy}_1}\leq \frac{\norm{\bfe_i^\top\bfU}_1\norm{\bfy}_\infty}{\norm{\bfy}_\infty} = \norm{\bfe_i^\top\bfU}_1 = d\bflambda_i
\]
where the first inequality follows from properties of well-conditioned bases. 
\end{proof}

\subsection{Bounding Goldilocks levels}
In this level, the expected sampled mass is large enough to get concentration, but not large enough to overflow the hash buckets of the \textsf{CountSketch}. In this level, we show that the mass contribution is at most a $(1+\eps)$ factor more than the expected mass. The main idea for getting concentration here is using the bounds on the leverage scores to bound outliers, and using a Bernstein bound to get concentration on the rest of the entries with a good bound on the variance. 

\begin{dfn}
Define $\bfA^{(q)}$ to be the $n\times d$ matrix formed by taking the rows of $\bfA$ that correspond to leverage scores belonging to weight class $W_q(\bflambda)$, and $0$s everywhere else.
\end{dfn}

\begin{lem}\label{lem:goldilocks-bound}
Let $(h,q)\in[h_{\max}]\times[q_{\max}]$ with $p_h\abs{W_q(\bflambda)}\geq 3d^2\eps^{-4}\log\alpha$ and let $\bfx\in\mathbb R^d$ with $\norm{\bfx}_1 =1$. Then with probability at least $1 - 2/\alpha$, we have that
\[
    \norm{\bfS^{(h)}\bfA^{(q)}\bfx}_1 = \sum_{\bflambda_i\in W_q(\bflambda)} \frac{\abs{\bfe_i^\top\bfA\bfx}}{p_h}b_{i,h}\leq (1+\eps)\norm{\bfA^{(q)}\bfx}_1 + 4\eps \norm{W_q}_1\norm{\bfA\bfx}_1.
\]
\end{lem}
\begin{proof}
The average absolute value of an entry of $\bfA^{(q)}\bfx$ is $\mu_q\coloneqq \norm{\bfA^{(q)}\bfx}_1/\abs{W_q(\bflambda)}$. Then by averaging, there is at most an $\eps/d$ fraction of rows with absolute value greater than $d\mu_q/\eps$. Now for each $\bflambda_i\in W_q(\bflambda)$, define the event
\[
    \mathcal F_i\coloneqq \braces*{\abs{\bfe_i^\top\bfA\bfx}\geq \frac{d\mu_q}{\eps}}
\]
and the sample
\[
    X = \sum_{\bflambda_i\in W_q} b_{i,h}\mathbbm{1}(\mathcal F_i).
\]
Note that
\[
    \E X = \frac{\eps p_h\abs{W_q}}{d} \geq \frac{d\log \alpha}{\eps^4}\geq 3\log\alpha
\]
so by the Chernoff bound,
\[
    \Pr\parens*{X \geq 2\E X}\leq \exp\parens*{-\frac{\E X}{3}}\leq \frac1\alpha.
\]
Conditioned on the complement event, the mass contribution from rows $i$ for which $\mathcal F_i$ happens is at most
\[
    \sum_{\bflambda_i\in W_q} \frac{\abs{\bfe_i^\top\bfA\bfx}}{p_h}b_{i,h}\mathbbm{1}(\mathcal F_i)\leq 2\frac{\eps p_h\abs{W_q}}{d}\frac{\abs{\bfe_i^\top\bfA\bfx}}{p_h}\leq 2\eps\frac{\abs{W_q}}{d}d\bflambda_i \norm{\bfA\bfx}_1\leq 4\eps 2^{-q}\abs{W_q}\norm{\bfA\bfx}_1\leq 4\eps \norm{W_q}_1\norm{\bfA\bfx}_1
\]
where the second to last inequality follows from Lemma \ref{lem:leverage-score-bound}. 

We now consider the sample
\[
    Y = \sum_{\bflambda_i\in W_q} Y_i
\]
where
\[
    Y_i \coloneqq \frac{\abs{\bfe_i^\top\bfA\bfx}}{p_h}b_{i,h}\mathbbm{1}(\neg\mathcal F_i).
\]
Note that
\begin{align*}
    \E Y &\leq \norm{\bfA^{(q)}\bfx}_1 = \abs{W_q}\mu_q \\
    Y_i &\leq \frac1{p_h}\frac{d\mu_q}{\eps} \\
    \Var(Y_i) &\leq p_h\parens*{\frac1{p_h}\frac{d\mu_q}{\eps}}^2 = \frac1{p_h}\parens*{\frac{d\mu_q}{\eps}}^2
\end{align*}
Then by Bernstein's inequality,
\begin{align*}
    \Pr\parens*{Y - \E Y \geq \eps \abs{W_q}\mu_q} &\leq \exp\parens*{-\frac12\frac{(\eps\abs{W_q}\mu_q)^2}{\abs{W_q}(d\mu_q/\eps)^2/p_h + (\eps\abs{W_q}\mu_q)(d\mu_q/\eps p_h)/3}} \\
    &= \exp\parens*{-\frac12\frac{p_h\abs{W_q}\eps^2}{(d/\eps)^2 + d/3}} \leq \exp\parens*{-\frac{p_h\abs{W_q}}{3d^2\eps^{-4}}} \leq \frac1\alpha.
\end{align*}
We conclude by combining the two bounds.
\end{proof}

\subsection{Bounding oversampled levels}
When we expect to sample a large enough number of entries per hash bucket from a level, these entries cancel each other out due to the random signs. These levels fall under this criterion. 

\begin{lem}\label{lem:overcrowd-countsketch}
Let $(h,q)\in[h_{\max}]\times[q_{\max}]$ with $p_h\abs{W_q(\bflambda)}\geq bN$ for $b = 12(\frac{d h_{\max}}{\eps})^2 \log(Nh_{\max}q_{\max}/\delta)$. Then with probability at least $1 - 4\delta/h_{\max}q_{\max}$,
\[
    \norm{\bfC^{(h)}\bfS^{(h)}\bfA^{(q)}\bfx}_1\leq \frac{\eps}{h_{\max}} \norm{W_q(\bflambda)}_1\norm{\bfA\bfx}_1.
\]
Similarly, if $\abs{W_q(\bflambda)}\geq bN_0$ , then with probability at least $1-4\delta$,
\[
    \norm{\bfC^{(0)}\bfA^{(q)}\bfx}_1\leq \frac{\eps}{h_{\max}} \norm{W_q(\bflambda)}_1\norm{\bfA\bfx}_1.
\]
\end{lem}
\begin{proof}
We just show the first bound since the second is nearly identical. Note that by Lemma \ref{lem:leverage-score-bound}, $\abs{\bfe_i^\top\bfA^{(q)}\bfx}/\norm{\bfA\bfx}_1\leq d\bflambda_i\leq d 2^{1-q}$ for all $\bflambda_i\in W_q(\bflambda)$. 

By Chernoff's bound, the probability that a bucket $L$ in level $h$ gets $X = (1\pm 1/2)p_h\abs{W_q}/N$ elements from $W_q$ is at least
\[
    \Pr\parens*{\abs{X - \frac{p_h\abs{W_q}}{N}}\geq \frac12 \frac{p_h\abs{W_q}}{N}} \leq 2\exp\parens*{-\frac{(1/2)^2p_h\abs{W_q}}{3}} = 2\exp\parens*{-\frac{p_h\abs{W_q}}{12}}\leq 2\delta.
\]
We condition on this event. Then by Hoeffding's bound, the inner product of $m$ elements $\{a_i\}_{i=1}^m$ in the interval $[d2^{-q}\norm{\bfA\bfx}_1, d2^{1-q}\norm{\bfA\bfx}_1]$ with random signs $\eps_i$ concentrates around its mean as
\begin{align*}
    \Pr\parens*{\sum_{i=1}^m \eps_i a_i > d2^{1-q}\norm{\bfA\bfx}_1\sqrt{m}\sqrt{\log(Nh_{\max}q_{\max}/\delta)}} &\leq \exp\parens*{-\frac{(d2^{1-q}\norm{\bfA\bfx}_1\sqrt{m}\sqrt{\log(Nh_{\max}q_{\max}/\delta)})^2}{2d^2 2^{2-2q}\norm{\bfA\bfx}_1^2m}} \\
    &\leq \frac\delta{Nh_{\max}q_{\max}}.
\end{align*}
Then by a union bound over $N$ buckets, with probability at least $1-2\delta/h_{\max}q_{\max}$, we have for every bucket $L$ at this level that
\begin{align*}
    \Abs{\frac1{p_h}\sum_{\bfy_i\in L}\Lambda_i b_{i,h} \bfy_i} &\leq d2^{1-q}\norm{\bfA\bfx}_1\sqrt{X}\sqrt{\log(N/\delta)} \\
    &\leq \frac1{p_h}d2^{1-q}\norm{\bfA\bfx}_1\sqrt{\frac32 \frac{p_h\abs{W_q}}{N}}\sqrt{\log(N/\delta)} \\
    &\leq \frac1{p_h}\frac{\eps}{dh_{\max}} d2^{-q} \norm{\bfA\bfx}_1\frac{p_h \abs{W_q}}{N} \\
    &\leq \frac{\eps}{h_{\max}} \frac{\norm{W_q}_1}{N}\norm{\bfA\bfx}_1
\end{align*}
which gives the desired bound upon summing over the $N$ buckets. The overall success probability is at least $1 - 4\delta/h_{\max}q_{\max}$.
\end{proof}

\subsection{Net argument}
In this section, we collect the bounds obtained in previous sections and conclude with a net argument.

\begin{lem}\label{lem:bounding-all-but-goldilocks-and-0}
With probability at least $1 - 6\delta$, we have for all $\bfx\in\mathbb R^d$ that
\[
    \sum_{h\in[h_{\max}]}\sum_{q\in[q_{\max}]}\norm{\bfC^{(h)}\bfS^{(h)}\bfA^{(q)}\bfx}_1\mathbbm{1}\parens*{\mathcal F_{h,q}}\leq 5\eps\norm{\bfA\bfx}_1
\]
where
\[
    \mathcal F_{h,q} = \braces*{p_h\abs{W_q}\in [0, m_{\mathrm{crowd}})\cup [Bm_{\mathrm{crowd}}, \infty)}
\]
\end{lem}
\begin{proof}
We case on $p_h\abs{W_q}$ by intervals $[0, \delta/h_{\max}q_{\max})$, $[\delta/h_{\max}q_{\max}, m_{\mathrm{crowd}})$, and $[Bm_{\mathrm{crowd}}, \infty)$. 
\begin{itemize}
    \item \textbf{Dead levels}: First consider the $h$ for which $p_h\abs{W_q} < \delta/h_{\max}q_{\max}$. In this case, the probability that we sample any row corresponding to some $\bflambda_i\in W_q$ is at most $p_h\abs{W_q} < \delta/h_{\max}q_{\max}$ by a union bound. Then by a further union bound over all $(h, q)\in [h_{\max}]\times[q_{\max}]$, this category of levels contributes no mass with probability at least $1-\delta$. 
    
    \item \textbf{Badly concentrated levels}: Consider the subsampling levels with $p_h\abs{W_q}\in [\delta/h_{\max}q_{\max}, m_{\mathrm{crowd}})$. By Lemma \ref{lem:expected-badly-concentrated-leverage-scores}, the total expected leverage score mass contribution from all such pairs $(h, q)\in [h_{\max}]\times[q_{\max}]$ is at most $4\delta\eps/d$. Then by Markov's inequality, with probability at least $1-\delta$, the total expected leverage score mass is at most $4\eps/d$. Conditioned on this event, we have that
    \begin{align*}
        &\sum_{h\in[h_{\max}]}\sum_{q\in[q_{\max}]}\frac1{p_h}\sum_{\bflambda_i\in W_q} b_{i,h}\abs{\bfe_i^\top\bfA\bfx}\mathbbm{1}\parens*{\mathcal E_{h,q}([\delta/h_{\max}q_{\max}, m_{\mathrm{crowd}})} \\
        \leq\; &d\norm{\bfA\bfx}_1\sum_{h\in[h_{\max}]}\sum_{q\in[q_{\max}]}\frac1{p_h}\sum_{\bflambda_i\in W_q} b_{i,h}\bflambda_i\mathbbm{1}\parens*{\mathcal E_{h,q}([\delta/h_{\max}q_{\max}, m_{\mathrm{crowd}})} && \text{Lemma \ref{lem:leverage-score-bound}} \\
        \leq\; &d\frac{4\eps}{d}\norm{\bfA\bfx}_1 = 4\eps\norm{\bfA\bfx}_1 && \text{Lemma \ref{lem:expected-badly-concentrated-leverage-scores}}
    \end{align*}
    
    \item \textbf{Oversampled levels}: Consider the subsampling levels with $p_h\abs{W_q}\in [Bm_{\mathrm{crowd}},\infty)$. Note that $Bm_{\mathrm{crowd}}\geq bN$ is large enough to apply Lemma \ref{lem:overcrowd-countsketch}. By union bounding and summing over $h$ and $q$ for the result of the lemma, we have that
    \[
        \sum_{h\in[h_{\max}]}\sum_{q\in[q_{\max}]}\norm{\bfC^{(h)}\bfS^{(h)}\bfA^{(q)}\bfx}_1\mathbbm{1}\parens*{\mathcal E_{h,q}([Bm_{\mathrm{crowd}}, \infty)}\leq \sum_{h\in[h_{\max}]}\sum_{q\in[q_{\max}]}\frac{\eps}{h_{\max}}\norm{W_q}_1\norm{\bfA\bfx}_1\leq \eps\norm{\bfA\bfx}_1
    \]
    with probability at least $1 - 4\delta$. 
\end{itemize}
We thus conclude by a union bound over the above three events. 
\end{proof}

\begin{lem}[Tiny weight classes]\label{lem:tiny-weight-classes}
Let $q > q_{\max}$. Then with probability at least $1 - \delta$, it holds for all $\bfx\in\mathbb R^d$ that
\[
    \sum_{h\in[h_{\max}]}\sum_{q > q_{\max}}\norm{\bfS^{(h)}\bfA^{(q)}\bfx}_1\leq \eps\norm{\bfA\bfx}_1.
\]
\end{lem}
\begin{proof}
For the weight classes $q > q_{\max}$, the total leverage score mass contribution is bounded by
\[
    \sum_{q > q_{\max}}\norm{W_q(\bflambda)}_1\leq \sum_{q > q_{\max}}2^{1-q}\abs{W_q}\leq \frac{\delta\eps}{dnh_{\max}}\sum_{q > q_{\max}}\abs{W_q}\leq \frac{\delta\eps}{dh_{\max}}.
\]
Then in expectation, the sum of the scaled leverage score samples (Definition \ref{def:scaled-leverage-score-samples}) is bounded by
\begin{align*}
    \E\parens*{\sum_{h\in[h_{\max}]}\sum_{q > q_{\max}} \mathcal S_{h,q}} &= \sum_{h\in[h_{\max}]}\sum_{q > q_{\max}} \sum_{\bflambda_i\in W_q}\E\parens*{\frac1{p_h} b_{i,h}\bflambda_i} \\
    &= \sum_{h\in[h_{\max}]}\sum_{q > q_{\max}} \norm{W_q(\bflambda)}_1 \\
    &\leq \sum_{h\in[h_{\max}]}\frac{\delta\eps}{dh_{\max}} \\
    &= \frac{\delta\eps}{d}.
\end{align*}
Then with probability at least $1-\delta$, the above sum is at most $\eps/d$. We condition on this event. Then, for all $\bfx\in\mathbb R^d$,
\begin{align*}
    \sum_{h\in[h_{\max}]}\sum_{q > q_{\max}}\norm{\bfS^{(h)}\bfA^{(q)}\bfx}_1 &= \sum_{h\in[h_{\max}]}\sum_{q > q_{\max}} \sum_{\bflambda_i\in W_q}\frac1{p_h} b_{i,h}(\bfe_i^\top\bfA\bfx) \\
    &\leq d\norm{\bfA\bfx}_1 \sum_{h\in[h_{\max}]}\sum_{q > q_{\max}} \sum_{\bflambda_i\in W_q}\frac1{p_h} b_{i,h}\bflambda_i && \text{Lemma \ref{lem:leverage-score-bound}} \\
    &\leq d\norm{\bfA\bfx}_1\frac{\eps}{d} = \eps\norm{\bfA\bfx}_1
\end{align*}
as desired.
\end{proof}

\begin{lem}\label{lem:1+eps-approx-high-prob}
There is an event with probability $1 - 11\delta$ such that conditioned on this event, for every $\bfx\in\mathbb R^d$,
\[
    \Pr\parens*{\norm{\bfS\bfA\bfx}_1\leq (1+8\eps)\norm{\bfA\bfx}_1}\geq 1 - \frac{2q_{\max}}\alpha.
\]
\end{lem}
\begin{proof}
By Lemma \ref{lem:tiny-weight-classes}, the contribution from weight classes $q > q_{\max}$ is at most $\eps\norm{\bfA\bfx}_1$ with probability at least $1-\delta$. We let this event be $\mathcal E_1$ and restrict our attention to $q\leq q_{\max}$. 

For each $q\in [q_{\max}]$, we bound the mass contribution of rows corresponding to $W_q(\bflambda)$ at each subsampling level $\{0\}\cup [h_{\max}]$. Note that by Lemma \ref{lem:bounding-all-but-goldilocks-and-0}, there is an event $\mathcal E_2$ with probability at least $1 - 6\delta$ such that all levels $h,q$ except for those such that $h = 0$ or $p_h\abs{W_q}\in [m_{\mathrm{crowd}}, Bm_{\mathrm{crowd}})$ are bounded by at most $5\eps\norm{\bfA\bfx}_1$, so it remains to bound these levels. These are the $0$th level of subsampling (i.e., no subsampling) and the Goldilocks levels. 

Note that there exists at most one Goldilocks level $h\in[h_{\max}]$ such that $p_h\abs{W_q}\in [m_{\mathrm{crowd}}, Bm_{\mathrm{crowd}})$. In this case, Lemma \ref{lem:goldilocks-bound} applies since $m_{\mathrm{crowd}}\geq 3d^2\eps^{-1}\log\alpha$, and we have that
\[
    \norm{\bfS^{(h)}\bfA^{(q)}\bfx}_1 = \sum_{\bflambda_i\in W_q(\bflambda)} \frac{\abs{\bfe_i^\top\bfA\bfx}}{p_h}b_{i,h}\leq (1+\eps)\norm{\bfA^{(q)}\bfx}_1 + 4\eps \norm{W_q}_1\norm{\bfA\bfx}_1.
\]
with probability at least $1-2/\alpha$. If such a Goldilocks subsampling level $h$ exists, then note that
\[
    p_h\abs{W_q}\geq m_{\mathrm{crowd}}\implies \abs{W_q}\geq B^{u+h-1}m_{\mathrm{crowd}}\geq B^um_{\mathrm{crowd}}\geq bN_0. 
\]
Then by Lemma \ref{lem:overcrowd-countsketch}, the $0$th level of subsampling level contributes mass at most $(\eps/h_{\max})\norm{W_q(\bflambda)}_1$ with probability at least $1-4\delta/h_{\max}q_{\max}$. Thus by a union bound over all $q$s with a Goldilocks level and summing over these, the $0$th level contributes at most
\[
    \sum_{q\in q_{\max}}\norm{\bfC^{(0}\bfA^{(q)}\bfx}_1\mathbbm{1}(\exists h : p_h\abs{W_q}\in [m_{\mathrm{crowd}}, Bm_{\mathrm{crowd}}))\leq \sum_{q\in q_{\max}}\frac{\eps}{h_{\max}}\norm{W_q(\bflambda)}_1\norm{\bfA\bfx}_1\leq \eps\norm{\bfA\bfx}_1.
\]
Let this be event $\mathcal E_3$. On the other hand, for the Goldilocks level itself, there is a $1-2q_{\max}/\alpha$ probability that
\begin{align*}
    &\sum_{h\in[h_{\max}]}\sum_{q\in[q_{\max}]}\norm{\bfS^{(h)}\bfA^{(q)}\bfx}\mathbbm{1}\parens*{\mathcal E_{h, q}([m_{\mathrm{crowd}}, Bm_{\mathrm{crowd}}))} \\
    \leq\; &\sum_{q\in[q_{\max}]}(1+\eps)\norm{\bfA^{(q)}\bfx}_1\mathbbm{1}(\exists h : p_h\abs{W_q}\in [m_{\mathrm{crowd}}, Bm_{\mathrm{crowd}})) + 4\eps \norm{W_q}_1\norm{\bfA\bfx}_1 \\
    \leq\; &4\eps\norm{\bfA\bfx}_1 + \sum_{q\in[q_{\max}]}(1+\eps)\norm{\bfA^{(q)}\bfx}_1\mathbbm{1}(\exists h : p_h\abs{W_q}\in [m_{\mathrm{crowd}}, Bm_{\mathrm{crowd}}))
\end{align*}
by a union bound over the at most $q_{\max}$ weight classes. 

Otherwise, if a weight class $q$ has no Goldilocks level, then we have by the triangle inequality that
\[
    \norm{\bfC^{(0}\bfA^{(q)}\bfx}_1\leq \norm{\bfA^{(q)}\bfx}_1
\]
and thus we simply bound the contribution of the $0$th level by $\norm{\bfA^{(q)}\bfx}_1$. 

Note that $\mathcal E_1\cap \mathcal E_2\cap \mathcal E_3$ occurs with probability at least $1 - 11\delta$. Then conditioned on this event, every $\bfx\in\mathbb R^d$ has a $1 - 2q_{\max}/\alpha$ probability that
\begin{align*}
    \norm{\bfS\bfA\bfx}_1 &= \bracks*{\sum_{q > q_{\max}}\norm{\bfS\bfA^{(q)}\bfx}_1} +  \sum_{q\in[q_{\max}]}\bracks*{\norm{\bfC^{(0}\bfA^{(q)}\bfx}_1 +\sum_{h\in[h_{\max}]}\norm{\bfC^{(h)}\bfS^{(h)}(\bfA^{(q)}\bfx}_1} \\
    &\leq \eps\norm{\bfA\bfx}_1 + \underbrace{(1+\eps)\norm{\bfA\bfx}_1}_{\text{Goldilocks or $0$th level}} + \underbrace{\eps\norm{\bfA\bfx}_1}_{\text{$0$th level if Goldilocks level exists}} + \underbrace{5\eps\norm{\bfA\bfx}_1}_{\text{badly concentrated and oversampled levels}} \\
    &\leq (1+8\eps)\norm{\bfA\bfx}_1
\end{align*}
which is the desired bound. 
\end{proof}

We conclude by a standard net argument. 
\begin{thm}[No expansion]
With probability at least $1-11\delta$, we have that for all $\bfx\in\mathbb R^d$,
\[
    \norm{\bfS\bfA\bfx}_1\leq (1+11\eps)\norm{\bfA\bfx}_1.
\]
\end{thm}
\begin{proof}
By Lemma \ref{lem:1+eps-approx-high-prob}, there is an event with probability at least $1-10\delta$ such that conditioned on this event, for each $\bfx$, there is a $1-2/\alpha$ probability that
\begin{equation}\label{eqn:1+eps-approx}
    \norm{\bfS\bfA\bfx}_1\leq (1+8\eps)\norm{\bfA\bfx}_1.
\end{equation}

It is well-known (see e.g., \cite{bourgain1989approximation}), that there exists an $\eps$-net $\mathcal N$ of size at most $(3/\eps)^d = \exp(d\log(3/\eps))$ over the set $\braces*{\bfA\bfx : \bfx\in\mathbb R^d, \norm{\bfA\bfx} = 1}$. Then by a union bound over the net, Equation \ref{eqn:1+eps-approx} holds for every $\bfA\bfx\in \mathcal N$ with probability at least $1-\delta$. 

Finally, let $\bfx\in\mathbb R^d$ be arbitrary with $\norm{\bfA\bfx}_1 = 1$. It is shown in \cite[Theorem 3.5]{DBLP:conf/soda/WangW19} that $\bfA\bfx = \sum_{i=0}^\infty \bfy^{(i)}$ where each nonzero $\bfy^{(i)}$ has $\bfy^{(i)}/\norm{\bfy^{(i)}}_1\in \mathcal N$ and $\norm{\bfy^{(i)}}_1\leq \eps^i$. We then have that
\[
    \norm{\bfS\bfA\bfx}_1 = \Norm{\bfS\sum_{i=0}^\infty\bfy^{(i)}}_1\leq \sum_{i=0}^\infty\Norm{\bfS\bfy^{(i)}}_1\leq (1+8\eps)\sum_{i=0}^\infty\Norm{\bfy^{(i)}}_1\leq (1+8\eps)\sum_{i=0}^\infty\eps^i\leq 1+11\eps. 
\]
We conclude by homogeneity. 
\end{proof}

\section{Near Optimal Trade-offs for \texorpdfstring{$\ell_1$}{l1} Entrywise Embeddings}\label{section:entrywise-embeddings}

In this section, we obtain algorithmic trade-offs between sketching dimension and distortion for $\ell_1$ entrywise embeddings, and show that this is nearly tight for $d\times d$ matrices. 

\subsection{Algorithm}

Our algorithm is an $M$-sketch with subsampling rates $p_h = B^{-h}$, where $B = (\frac{d}{\delta}\log n)^\alpha$ for $\alpha\in(0,1)$, and \textsf{CountSketch} hashes into $\tilde \Theta(\frac{B}{\delta}\log n)$ buckets. By homogeneity, we assume that $\norm{\bfA}_1 = 1$ throughout this section. 

\begin{dfn}[Useful constants]
\begin{align*}
    B &\coloneqq \parens*{\frac{d}{\delta}\log n}^\alpha \\
    h_{\max} &\coloneqq \log_B n \\
    q_{\max} &\coloneqq \log_2\frac{nd h_{\max}}{\delta} \\
    p_h &\coloneqq B^{-h}, && h\in[h_{\max}]
\end{align*}
\end{dfn}

\begin{thm}\label{thm:l1-entrywise-embedding}
Let $\delta\in (0,1)$ and $\alpha\in(0,1)$. Then there exists a sparse oblivious $\ell_1$ entrywise embedding $\bfS$ into $k$ dimensions with
\[
    k = \parens*{\frac{d}{\delta}\log n}^\alpha \poly(\delta^{-1},\log n)
\]
such that for any $\bfA\in\mathbb R^{n\times d}$,
\[
    \Pr\braces*{\Omega(1)\norm{\bfA}_1 \leq \norm{\bfS\bfA}_1 \leq O\parens*{\frac1{\delta\alpha}}\norm{\bfA}_1} \geq 1 - \delta.
\]
\end{thm}

Our analysis revolves around the vector of row norms.

\begin{dfn}[Row norms vector]
For an $n\times d$ matrix $\bfA$ with $\norm{\bfA}_1 = 1$, we define the row norms vector $\bfa\in\mathbb R^n$ by $\bfa_i = \norm{\bfe_i^\top\bfA}_1$. Using this vector, we define weight classes $W_q(\bfa)$ and restrictions $\bfA^{(q)}$ of $\bfA$ to our weight classes, analogously to the analysis in Section \ref{section:1+eps-l1-subspace-embedding}. 
\end{dfn}

In order to avoid shrinking the vector $\bfa$ by more than a constant factor with probability at least $\delta$, we apply Lemma \ref{thm:no-contraction-high-prob} with failure rate $\delta$ and constant $\eps$, which gives an $M$-sketch with $0$th level hash bucket size
\[
    N_0 = \tilde O\parens*{\frac{B}{\delta}\log\log n}
\]
and $h$th level hash bucket size
\[
    N = \tilde O\parens*{B\log n}.
\]

We now show that this does not dilate the entrywise $1$-norm of $\bfA$ by more than $O(1/\alpha)$. As in the analysis in \cite{verbin2012rademacher}, we use the Rademacher dimension. 

\begin{lem}[Rademacher dimension of $\ell_1^d$]\label{lem:rademacher-dimension-l1}
Let $\{\bfx_i\}_{i=1}^s\subseteq \mathbb R^d$ with $\norm{\bfx_i}_1\leq 1$ for each $i\in[s]$, and let $\delta\in (0,1)$. Let $\{\eps_i\}_{i=1}^s$ be independent Rademacher variables. Then with probability at least $1-\delta$,
\[
    \Norm{\sum_{i=1}^s \eps_i \bfx_i}_1\leq d\sqrt{\frac12\log\frac{2d}{\delta}}\sqrt{s}.
\]
\end{lem}
\begin{proof}
The proof uses standard concentration inequalities and is similar to \cite[Lemma 1]{verbin2012rademacher}. The details are deferred to Appendix \ref{section:appendix:entrywise_embeddings}. 
\end{proof}

We follow the approach of \cite{verbin2012rademacher}. Using the Rademacher dimension, we first show that if we sample too many elements, then the contribution from this level is at most a negligible fraction of the total mass. 
\begin{lem}\label{lem:entrywise-sign-cancellation}
Let $q\in[q_{\max}]$. Let $p_h\abs{W_q(\bfa)}\geq bN$ for
\[
    b = 2d^2h_{\max}^2q_{\max}^2\log\parens*{\frac{2dNh_{\max}q_{\max}}{\delta}}
\]
Then with probability at least $1 - \delta/q_{\max}$,
\[
    \sum_{h : p_h\abs{W_q(\bfa)}\geq bN} \norm{\bfC^{(h)}\bfS^{(h)}\bfA^{(q)}}_1 \leq \frac{1}{q_{\max}}\norm{\bfA^{(q)}}_1
\]
\end{lem}
\begin{proof}
By Chernoff bounds, the probability that we sample $(1\pm1/2)p_h\abs{W_q(\bfa)}/2N$ elements in a given bucket in the $h$th level is at most
\[
    \exp\parens*{-\frac{(1/2)^2 p_h\abs{W_q(\bfa)}/N}{3}}\leq \exp\parens*{-\frac{b}{12}}\leq  \exp\parens*{-\log\parens*{\frac{Nh_{\max}q_{\max}}{\delta}}} = \frac{\delta}{Nh_{\max}q_{\max}}
\]
so by a union bound over the $N$ buckets, this holds simultaneously for all buckets at the $h$th level with probability at least $\delta/h_{\max}q_{\max}$. 

We condition on the above event. Then, each bucket $L$ is a randomly signed sum of $s\geq b$ elements $\bfe_i^\top\bfA$ with $\norm{\bfe_i^\top\bfA}_1\leq 2^{1-q}$. Thus by Lemma \ref{lem:rademacher-dimension-l1}, with probability at least $\delta/Nh_{\max}q_{\max}$,
\begin{align*}
    \Norm{\sum_{\bfa_i\in L_{h,k}}\Lambda_i \bfe_i^\top\bfA}_1 &\leq 2^{1-q} d\sqrt{\frac12\log\frac{2dN h_{\max}q_{\max}}{\delta}}\sqrt{s} \\
    &\leq \frac{\norm{W_q(\bfa)}_1}{N}\frac{2d}{\sqrt{s}}\sqrt{\frac12\log\frac{2dN h_{\max}q_{\max}}{\delta}} \\
    &\leq \frac{\norm{W_q(\bfa)}_1}{Nh_{\max}q_{\max}}
\end{align*}
as we have set
\[
    \frac{2d}{\sqrt{s}}\sqrt{\frac12\log\frac{2dN h_{\max}q_{\max}}{\delta}}\leq \frac{2d}{\sqrt{b}}\sqrt{\frac12\log\frac{2dN h_{\max}q_{\max}}{\delta}}\leq \frac{\eps}{h_{\max}q_{\max}}.
\]
Summing over the buckets $k\in [N]$ and union bounding and summing over $h\in[h_{\max}]$ yields the desired result. 
\end{proof}

Next, we handle the remaining levels. We pay the price of having smaller hash buckets in the distortion at this point. 

\begin{lem}\label{lem:entrywise-expectation-argument}
Let $q\in [q_{\max}]$. Then with probability at least $1 - 2\delta/q_{\max}$,
\[
    \sum_{h : p_h\abs{W_q(\bfa)} < bN}\norm{\bfS^{(h)}\bfA^{(q)}}_1\leq O\parens*{\frac{1}{\delta\alpha}}\norm{\bfA^{(q)}}_1
\]
\end{lem}
\begin{proof}
Note that if $p_h\abs{W_q(\bfa)}\leq \delta/h_{\max}q_{\max}$, then by a union bound over the at most $h_{\max}$ levels of $h$, none of these levels $h$ sample any elements from weight class $q$ with probability at least $\delta/q_{\max}$. Then for each weight class $q$, only the subsampling levels $h$ for
\[
    \frac{\delta}{h_{\max}q_{\max}} \leq p_h\abs{W_q(\bfa)}\leq bN
\]
can contribute to the mass of the sketch $\norm{\bfS\bfA}_1$. Note that this is only
\[
    \log_B\parens*{\frac{bNh_{\max}q_{\max}}{\delta}} = \log_B\bracks*{\poly(d, \log n, \delta^{-1})} = O\parens*{\frac1\alpha}
\]
levels of subsampling, where each level contributes at most 
\[
    \E\norm{\bfC^{(h)}\bfS^{(h)}\bfA^{(q)}}_1\leq \E\norm{\bfS^{(h)}\bfA^{(q)}}_1 = \norm{\bfA^{(q)}}_1
\]
in expectation. We thus conclude by summing over $h$ with $p_h\abs{W_q(\bfa)} < bN$ and then applying a Markov bound. 
\end{proof}

Putting the above pieces together yield the following:
\begin{proof}[Proof of Theorem \ref{thm:l1-entrywise-embedding}]
As previously discussed in this section, the ``no contraction'' direction of $\norm{\bfS\bfA}_1\geq \Omega(1)\norm{\bfA}_1$ is handled in  Lemma \ref{thm:no-contraction-high-prob}, so we focus on proving the ``no dilation'' direction of $\norm{\bfS\bfA}_1\geq O(1/\delta\alpha)\norm{\bfA}_1$. 

We union bound and sum over the results from Lemmas \ref{lem:entrywise-sign-cancellation} and \ref{lem:entrywise-expectation-argument} for $q\in [q_{\max}]$ to see that with probability at least $1 - 3\delta$,
\[
    \sum_{h\in[h_{\max}]}\sum_{q\in[q_{\max}]}\norm{\bfC^{(h)}\bfS^{(h)}\bfA^{(q)}}_1\leq O\parens*{\frac1{\delta\alpha}}\sum_{q\in[q_{\max}]}\norm{\bfA^{(q)}}. 
\]
We also note that $\norm{\bfC^{(0}\bfA}_1\leq \norm{\bfA}_1$ by the triangle inequality. Finally, we have that in expectation, the weight classes $q > q_{\max}$ contribute at most
\begin{align*}
    \sum_{q > q_{\max}}\sum_{h\in[h_{\max}]}\E\norm{\bfC^{(h)}\bfS^{(h)}\bfA^{(q)}}_1 &\leq \sum_{q > q_{\max}}\sum_{h\in[h_{\max}]}\norm{\bfA^{(q)}}_1 \\
    &\leq \sum_{h\in [h_{\max}]}\frac{2\delta}{nd h_{\max}}\norm{\bfA}_1\sum_{q > q_{\max}}\abs{W_q(\bfa)}\\
    &\leq 2\delta\norm{\bfA}_1.
\end{align*}
Then by Markov's inequality, with probability at least $1-\delta$, these levels contribute at most $2\norm{\bfA}_1$. Summing these three results, we find that
\begin{align*}
    \norm{\bfS\bfA}_1 &\leq \norm{\bfC^{(0}\bfA}_1 + \sum_{h\in[h_{\max}]}\sum_{q\in[q_{\max}]}\norm{\bfC^{(h)}\bfS^{(h)}\bfA^{(q)}}_1 + \sum_{h\in[h_{\max}]}\sum_{q > q_{\max}}\norm{\bfC^{(h)}\bfS^{(h)}\bfA^{(q)}}_1 \\
    &\leq 3\norm{\bfA}_1 +  O\parens*{\frac1{\delta\alpha}}\sum_{q\in[q_{\max}]}\norm{\bfA}_1 \\
    &\leq O\parens*{\frac1{\delta\alpha}}\norm{\bfA}_1
\end{align*}
as desired.
\end{proof}

\subsection{Lower bound}

We show that for $d\times d$ matrices, the above trade-off between the sketching dimension and distortion is nearly optimal, up to log factors. Note that for constant $\delta$, the above result gives a $d^\alpha\poly\log d$ sized sketch with distortion $1/\alpha$. We show that with a sketch of size $r$, a distortion of $\Omega((\log d)/(\log r))$ is necessary. 

By Yao's minimax principle, we assume that the $r\times d$ sketch matrix $\bfS$ is fixed, and show that the distortion is $\Omega((\log d)/(\log r))$ with constant probability over a distribution over input matrices $\bfA$. 

The following simple lemma is central to our analysis:
\begin{lem}\label{lem:sketch-cauchy-matrix}
Let $\bfS$ be an $r\times n$ matrix, and let $\bfA$ be drawn as an $n\times d$ matrix with all of its columns drawn as i.i.d.\ Cauchy variables. Then,
\[
    \Pr\braces*{\Omega(d\log d)\norm{\bfS}_1\leq \norm{\bfS\bfA}_1 \leq O(d\log(rd))\norm{\bfS}_1} \geq \frac{99}{100}.
\]
\end{lem}
\begin{proof}
The proof relies on standard tricks and is deferred to Appendix \ref{section:appendix:entrywise_embeddings}.
\end{proof}

\begin{thm}\label{thm:l1-entrywise-embedding-lower-bound}
Let $\bfS$ be a fixed $r\times d$ matrix. Then there is a distribution $\mu$ over $d\times d$ matrices such that if
\[
    \Pr_{\bfA\sim\mu}\parens*{\norm{\bfA}_1\leq \norm{\bfS\bfA}_1\leq \kappa \norm{\bfA}_1}\geq \frac23
\]
then $\kappa = \Omega((\log d)/(\log r))$. 
\end{thm}
\begin{proof}
We draw our matrix $\bfA$ from $\mu$ as follows. Let $\mu_1$ be the distribution that draws $\bfA$ as a $d\times d$ i.i.d.\ matrix with Cauchy entries, and let $\mu_2$ be the distribution that draws $\bfA$ with its first $r$ columns as a $d\times r$ i.i.d.\ matrix with Cauchy entries scaled by $d/r$, and the rest of the $d-r$ columns all $0$s. Then, $\mu$ draws from $\mu_1$ with probability $1/2$ and $\mu_2$ with probability $1/2$. 

Note that by Lemmas 2.10 and 2.12 of \cite{DBLP:conf/soda/WangW19},
\begin{align*}
    \Pr_{\bfA\sim\mu_1}\parens*{\Omega(d^2\log d)\leq \norm{\bfA}_1 \leq O(d^2\log d)} &\geq \frac{99}{100} \\
    \Pr_{\bfA\sim\mu_2}\parens*{\frac{d}{r}\Omega(rd\log(rd))\leq\norm{\bfA}_1 \leq \frac{d}{r}O(rd\log(rd))} &\geq \frac{99}{100}
\end{align*}

By Lemma \ref{lem:sketch-cauchy-matrix}, if $\bfA\sim \mu_1$, then $\norm{\bfS\bfA}_1 = \Omega(d\log d)\norm{\bfS}_1$ with probability at least $99/100$. Now suppose for contradiction that $\norm{\bfS}_1 = \omega(\kappa d)$. Then with probability at least $1 - (1/3 + 1/2 + 1/100 + 1/100) > 0$, we have that
\[
    \omega(\kappa d^2\log d) = \Omega(d\log d)\norm{\bfS}_1\leq \norm{\bfS\bfA}_1\leq \kappa \norm{\bfA}_1 = O(\kappa d^2\log d)
\]
which is a contradiction. Thus, $\norm{\bfS}_1 = O(\kappa d)$. 

Now consider $\bfA\sim\mu_2$. By Lemma \ref{lem:sketch-cauchy-matrix}, $\norm{\bfS\bfA}_1\leq O(r\log r)\norm{\bfS}_1$ with probability at least $99/100$. Then, with probability at least $1 - (1/3 + 1/2 + 1/100 + 1/100) > 0$,
\[
    \Omega(d^2\log d)\leq \norm{\bfA}_1\leq \norm{\bfS\bfA}_1\leq O(d\log r)\norm{\bfS}_1 = O(\kappa d^2\log r)
\]
so
\[
    \kappa = \Omega\parens*{\frac{\log d}{\log r}}
\]
as desired.
\end{proof}

\section{Independence Testing in the \texorpdfstring{$\ell_1$}{l1} norm}

In this section, we present our result for estimating $\norm{P-Q}_1$, where $P$ is the joint distribution and $Q$ the product distribution defined by the marginals, which are determined by the stream items as introduced in Section~\ref{sec:intro}. We first prepare a heavy hitter data structure in Section~\ref{sec:HH} and present our $(1+\eps)$-approximation algorithm to the $\ell_1$ norm of order-$2$ tensors in Section~\ref{sec:1+eps}. To move to higher dimensions, we  need a rough estimator for the product distribution in Section~\ref{sec:rough}. Finally, we apply the result for order-$2$ tensors iteratively in Section~\ref{sec:tvd} to obtain a $(1+\eps)$-approximation to $\norm{P-Q}_1$.

\subsection{Heavy Hitters}\label{sec:HH}
This subsection is devoted to a data structure, called the \textsc{HeavyHitter} structure, which is analogous to the classical \textsf{CountSketch} data structure for a general functional $f$ on a general linear space.

Suppose that $f:\R\to \R$ is function satisfying the following properties:
\begin{enumerate}[topsep=0pt,itemsep=-1ex,partopsep=1ex,parsep=1ex]
\item $f(0) = 0$;
\item $f(x) = f(-x)$; 
\item $f(x)$ is increasing on $[0,\infty)$;
\item There exists a constant $C_f$ such that it holds for any integer $s\geq 1$ and any $x_1,\dots,x_s,y_1,\dots,y_s\in \R$ that $\sum_{i=1}^s f(x_i+y_i) \leq C_f(\sum_{i=1}^s (f(x_i) + f(y_i)))$.
\item There exists a function $h:[0,\infty)\to[0,\infty)$ such that
\begin{enumerate}[topsep=0pt,itemsep=-1ex,partopsep=1ex,parsep=1ex]
	\item $\lim_{\eps\to 0^+} h(\eps) = 0$;
	\item it holds for any integer $s\geq 1$ and any $x_1,\dots,x_s,y_1,\dots,y_s\in \R$ that $|\sum_{i=1}^s f(x_i+y_i) - \sum_{i=1}^s f(x_i)| \leq h(\eps)\sum_i f(x_i)$ whenever $\sum_i f(y_i)\leq \eps \sum_i f(x_i)$.
\end{enumerate}
\end{enumerate}

\smallskip

We abuse notation and define for $x\in \R^m$ that $f(x) = \sum_i f(x_i)$. 

We define a different Rademacher dimension as follows. The Rademacher dimension $B=B(f;\eta)$ is  the smallest integer such that the following holds for any integer $s\geq 1$. Let $\sigma_1,\dots,\sigma_s$ be i.i.d.\@ Rademacher variables and $\xi_1,\dots,\xi_s$ be i.i.d.\@ Bernoulli variables such that $\E\xi_i = 1/B$. It holds for any $x_1,\dots,x_s\in \R^m$ that
\[
\Pr\left\{ f\left(\sum_i \sigma_i \xi_i x_i\right) \leq \eta\sum_i f(x_i) \right\}\geq 0.9.
\]

%
\begin{lemma}\label{lem:heavy-hitter}
Let $\gamma,\zeta \in (0,1/3)$, there exists $r=r(\gamma,\zeta)$ and a randomized linear map $T:\R^m\to \R^r$, and a subrecovery  algorithm $\cB$ such that for each $x\in \R^m$, with probability at least $1-\zeta$, it holds that $(1-\gamma)f(x)\leq \cB(Tx)\leq  (1+\gamma)f(x)$.

Then, for $\theta,\delta\in(0,1/3)$, there exists a randomized linear function $M: (\R^m)^d\to \R^{S}$, where $S = O(B\log(d/\delta)\cdot r(\gamma,\zeta'))$ for $B=B(f;h^{-1}(\theta)\theta)$ and  $\zeta'=O(\zeta/(B\log(d/\delta))$, and a recovery algorithm $\cA$ satisfying the following. For any $x=(x_1,\dots,x_n)\in (\R^m)^d$ with probability $\geq 1-\delta-\zeta$, $\cA$ reads $Mx$ and outputs an estimate $\tilde f_i$ for each $i\in [d]$ such that 
\begin{enumerate}
\item $|\tilde f_i - f(x_i)| \leq (\gamma+\theta+\gamma\theta)f(x_i)$ whenever $f(x_i)\geq \theta f(x)$;
\item $|\tilde f_i|\leq C_f\theta(1+\gamma)(1+h(\theta))f(x)$ whenever $f(x_i) < \theta f(x)$.
\end{enumerate}
\end{lemma}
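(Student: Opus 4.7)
The construction is a \textsc{CountSketch}-style wrapper around $T$. Set $R = \Theta(\log(d/\delta))$, $B = B(f; h^{-1}(\theta)\theta)$, and $\zeta' = \Theta(\zeta/(RB))$; in layer $\ell \in [R]$ use a uniform hash $h_\ell : [d] \to [B]$, independent Rademacher signs $\sigma_{\ell,j}$ for $j \in [d]$, and for each bucket $k \in [B]$ an independent copy $T_{\ell,k}$ of the given sketch at failure rate $\zeta'$. Store $T_{\ell,k}(b_{\ell,k})$ where $b_{\ell,k} = \sum_{j : h_\ell(j) = k} \sigma_{\ell,j} x_j \in \mathbb{R}^m$. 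To recover $\tilde f_i$, apply $\mathcal{B}$ to the stored sketch of the bucket $b_{\ell, h_\ell(i)}$ containing $i$ in each layer $\ell$, and output the median of the $R$ resulting estimates. The total sketch size is $R B \cdot r(\gamma, \zeta') = O(B \log(d/\delta) \cdot r(\gamma, \zeta'))$, matching the claim.

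Fix $i$ and $\ell$ and write $b_{\ell, h_\ell(i)} = \sigma_{\ell,i} x_i + N_\ell$ where $N_\ell = \sum_{j \neq i,\, h_\ell(j) = h_\ell(i)} \sigma_{\ell,j} x_j$. Conditioning on $h_\ell(i)$ leaves, for each $j \neq i$, the indicator $\mathbf{1}[h_\ell(j) = h_\ell(i)]$ as an i.i.d.\ Bernoulli$(1/B)$ independent of the signs, so the Rademacher-dimension property applied to $\{x_j\}_{j \neq i}$ gives $f(N_\ell) \leq h^{-1}(\theta)\theta \cdot f(x)$ with probability at least $0.9$. Intersecting with the event that $T_{\ell, h_\ell(i)}$ succeeds (probability at least $1 - \zeta'$), with probability at least $0.9(1 - \zeta') \geq 4/5$ the sub-recovery $\mathcal{B}$ returns a $(1 \pm \gamma)$-approximation to $f(b_{\ell, h_\ell(i)}) = f(x_i + \sigma_{\ell,i} N_\ell)$ (using that $f$ is even).

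Case-split on $f(x_i)$. If $f(x_i) \geq \theta f(x)$, then $f(\sigma_{\ell,i} N_\ell) = f(N_\ell) \leq h^{-1}(\theta) f(x_i)$, so property~(5) with $\eps = h^{-1}(\theta)$ yields $f(b_{\ell, h_\ell(i)}) \in [1 - \theta,\, 1 + \theta] \cdot f(x_i)$. Composing with $\mathcal{B}$'s $(1 \pm \gamma)$ factor puts the per-layer estimate in $[(1-\gamma)(1-\theta),\, (1+\gamma)(1+\theta)] f(x_i)$, i.e.\ within $(\gamma + \theta + \gamma\theta) f(x_i)$ of $f(x_i)$. If $f(x_i) < \theta f(x)$, the coordinate-wise approximate triangle inequality (property~(4)) gives $f(b_{\ell, h_\ell(i)}) \leq C_f (f(x_i) + f(N_\ell)) \leq C_f \theta (1 + h^{-1}(\theta)) f(x)$, and after $\mathcal{B}$'s $(1 + \gamma)$ factor this yields $|\hat f_{\ell, i}| \leq C_f \theta (1 + \gamma)(1 + h^{-1}(\theta)) f(x)$, giving the claimed bound.

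A Chernoff bound over the $R = \Theta(\log(d/\delta))$ independent per-layer events (each succeeding with probability at least $4/5$) shows that strictly more than half the layers land in the target window except with probability at most $\delta/d$, and on that event the median falls in the same window. A union bound over the $d$ coordinates then contributes at most $\delta$ to the failure probability, and a separate union bound over the $RB$ invocations of $\mathcal{B}$ (with $\zeta' = \Theta(\zeta/(RB))$) contributes at most $\zeta$, giving the claimed $\delta + \zeta$. The delicate point is selecting the Rademacher parameter $\eta = h^{-1}(\theta)\theta$ so that a single noise bound on $f(N_\ell)$ simultaneously supports the modulus-of-continuity estimate in the heavy case (which needs $f(N_\ell) \leq h^{-1}(\theta) f(x_i)$) and the approximate-triangle bound in the light case, thereby avoiding any widening of $B$ or $r(\gamma,\zeta')$ in either regime.
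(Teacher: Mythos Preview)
Your proof is correct and follows essentially the same \textsf{CountSketch}-with-median approach as the paper: the construction, the heavy/light case split, and the key choice of Rademacher parameter $\eta = h^{-1}(\theta)\theta$ all match. One minor remark: in the light case you (correctly) derive the factor $1 + h^{-1}(\theta)$ rather than the stated $1 + h(\theta)$, which mirrors what the paper's own derivation actually yields.
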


\begin{proof}
The linear sketch $M$ is essentially a \textsf{CountSketch} data structure, which hashes $\{Tx_i\}_i$ into $B=B(f;\min\{\eps^2, h(\theta)\}\eta)$ buckets under a hash function $h$. The $b$-th bucket contains
\[
T_b = \sum_{i:h(i)=b} \sigma_i Tx_i.
\]
For $i^\ast$ such that $f(x_{i^\ast})\geq \theta f(x)$, the algorithm will just return $\tilde f_{i^\ast} = \cB(T_{h(i^\ast)})$. Next we analyse the estimation error. Let $b=h(i^\ast)$. Note that $\sum_{i:h(i)=b} \sigma_i Tx_i$ is identically distributed as $Tx_{i^\ast} + T\nu$, where $\nu = \sum_{i\neq i^\ast: h(i)=b} \sigma_i x_i$. Since $B=B(f;h^{-1}(\theta)\theta)$, it holds with probability at least $0.9$ that
\[
f(\nu) \leq h^{-1}(\theta) \theta f(x)\leq h^{-1}(\theta) f(x_{i^\ast}),
\]
which implies that
\[
\left(1-\theta\right)f(x_{i^\ast}) \leq f\left(x_{i^\ast} + \nu \right) \leq \left(1+\theta\right)f(x_{i^\ast}).
\]
and, with probability at least $0.9-\zeta$ that
\[
(1-\gamma)(1-\theta)\cB(Tx_{i^\ast}) \leq \cB(T_b) \leq (1+\gamma)(1+\theta)\cB(Tx_{i^\ast}).
\]
On the other hand, when $f(x_{i^\ast})\leq \theta f(x)$,
\[
f(x_{i^\ast}+\nu)\leq C_f(f(x_{i^\ast})+f(\nu))\leq C_f(\theta + h(\theta)\theta)f(x)
\]
and, with probability at least $1-\zeta$,
\[
\cB(T_b) \leq C_f\theta(1+\gamma)(1 + h(\theta))f(x).
\]
Repeat $\Theta(\log(d/\delta))$ times to drive the failure probability down to $\delta/d$ to take a union bound over all $i^\ast$.
%
\end{proof}

The data structure described in Lemma~\ref{lem:heavy-hitter} is our \textsf{HeavyHitter} structure, parameterized with $(\theta,\delta)$.

\begin{algorithm2e}
\caption{Data Structure for constant failure probability algorithm (\textsf{SubsamplingHeavyHitter})}
\label{alg:data structure}

\DontPrintSemicolon
\LinesNumbered

\SetKwInput{KwData}{Require}
\KwData{$\eps,\delta,K,N,t,\zeta$}

$L\gets \log(K N/\eps)$\;
$\hat L\gets \log N$\;
$\theta\gets \min\{\Theta(\eps^3/(C_f L^3)),h^{-1}(\alpha\eps/3),\alpha\eps/4\}$\;
$B \gets B(f; h^{-1}(\theta)\theta\})$\;
$Q\gets O(B(\hat L+1)\log(N \hat L))$\;
Instantiate a subsampling function $H$, which hashes $[N]$ into $\hat L$ levels such that the sampling probability for the $\ell$-th level is $2^{-\ell}$ and is pairwise independent\;
\For{each $\ell = 0,1,\dots,\hat L$}{
	Instantiate a \textsf{HeavyHitter} structure $\cD_{\ell}$ with parameters $(\theta,0.05/(\hat L+1))$, in which each bucket stores a vector of length $t = t(\alpha\eps/2,\zeta/Q)$\;
}
\end{algorithm2e}

\begin{algorithm2e}
\caption{Algorithm for an update to $x_i$ for our constant failure probability algorithm} \label{alg:update}

\DontPrintSemicolon
\LinesNumbered

\KwIn{an update of the form $x_i \gets x_i + \Delta x_i$}

\For{each $\ell=0,\dots,\hat L$}{
		\If(\cmt*[f]{Assume $H$ hashes every $i$ to level $0$}){$H$ hashes $i$ into level $\ell$}{
			$b_\ell \gets $ index of the bucket containing $i$ in $\cD_\ell$\;
			Add $T(\Delta x_i)$ to the $b_{\ell}$-th bucket\;
		}
}
\end{algorithm2e}

\begin{algorithm2e}
\caption{$(1+\eps)$-approximator to $f(x)$ with constant failure probability}\label{alg:ell_1}

\DontPrintSemicolon
\LinesNumbered

\SetKwInput{KwData}{Require}

\KwData{(i) A subsampling scheme $H$ such that the $i$-th level has subsampling probability $p_i = 2^{-i}$; (ii) $\hat L+1$ \textsf{HeavyHitter} structures $\cD_0,\dots,\cD_{\hat L}$ with the same parameters $(\theta, \delta)$, where $\hat L = \log N$, $\theta = \min\{\Theta(\frac{\eps^3}{C_f L^3}),\frac{\alpha\eps}{4},h^{-1}(\frac{\alpha\eps}{3})\}$ and $\delta = \frac{1}{20(\hat L+1)}$; (iii) an approximation $\widehat{M}$ such that $M\leq \widehat{M}$; (iv) an integer $K\geq 2$ which is a power of $2$. }

$L \gets \log(2N/\eps)$\;
$\hat L \gets \log N$\;
\For{$j=0,\dots,\hat L$}{
	$\Lambda_j \gets \text{top }\Theta(L^3/\eps^3)\text{ heavy hitters from }\cD_j$\;
}
$j_0\gets \log(4K\eps^{-3}L^3)$\;
$\zeta\gets$ uniform variable in $[1/2,1]$\;
\For{$j = 0,\dots, j_0$}{
	Let $\lambda_1^{(j)},\dots,\lambda_s^{(j)}$ be the elements in $\Lambda_0$ contained in $[(1+\eps)\zeta\frac{\widehat{M}}{2^j}, (2-\eps)\zeta\frac{\widehat{M}}{2^j}]$\;
	$\wtilde{M}_j \gets f(\lambda_1^{(j)})+\cdots+f(\lambda_s^{(j)})$\;
}
\For{$j = j_0 + 1,\dots, L$}{
	Find the biggest $\ell$ such that $\Lambda_\ell$ contains $s$ elements $\lambda^{(j)}_1,\dots,\lambda^{(j)}_s$ in $[(1+\eps)\zeta\frac{\widehat{M}}{2^j}, (2-\eps)\zeta\frac{\widehat{M}}{2^j}]$ for $(1-\sqrt{20}\eps)\frac{L^2}{\eps^2}\leq s\leq 2(1+\sqrt{20}\eps)\frac{L^2}{\eps^2}$\label{alg:deepest level}\;
	\eIf{such $\ell$ exists}{
		$\wtilde{M}_j \gets (f(\lambda^{(j)}_1)+\cdots+f(\lambda^{(j)}_s))2^\ell$\;
	}{
		$\wtilde{M}_j \gets 0$\;
	}
}
\Return $\wtilde{M}\gets \sum_j \wtilde{M}_j$\;
\end{algorithm2e}

\subsection{\texorpdfstring{$(1+\eps)$}{(1+eps)}-Approximator}\label{sec:1+eps}
Suppose that for any $\gamma,\zeta\in(0,1)$ that are small enough, there exist $t=t(\gamma,\zeta)$, a randomized linear map $T:\R^m\to \R^t$ and a subrecovery algorithm $\cB$ such that for each $x\in X$, with probability at least $1-\zeta$, it holds that $(1-\gamma)f(x)\leq \cB(Tx)\leq  (1+\gamma)f(x)$.

Let $x = (x_1,\dots,x_N)\in (\R^m)^N$. In this subsection, we consider the problem of approximating $M = \sum_i f(x_i)$ up to a $(1+\eps)$-factor. We also assume that we have an approximation $\widehat{M}$ to $M$ such that $M \leq \widehat{M}\leq KM$.

Our algorithm is inspired from arguments in \cite{earth_mover}. We prepare the following data structure (Algorithm~\ref{alg:data structure}) with the entry update algorithm (Algorithm~\ref{alg:update}). The recovery algorithm is presented in Algorithm~\ref{alg:ell_1}.

\begin{theorem}\label{thm:ell_1_main-simple}
Let $\eps \in (0,1)$ be small enough and $K\geq 2$ be a power of $2$. Let $\theta,B,Q$ be as defined in Algorithm~\ref{alg:data structure}. There exists an absolute constant $\alpha<1$ and a randomized linear sketch $\Pi: (\R^m)^{N} \to \R^S$, where $S = O(Q \cdot t(\alpha\eps/2, 0.05/Q))$ and a recovery algorithm $\cA$ satisfying the following. 

For any $x=(x_1,\dots,x_N)\in X^N$ and an approximation $\widehat{M}\geq M =\sum_i f(x_i)$, with probability at least $0.6$, $\cA$ reads $\Pi x$ and outputs $\wtilde{M}(x)$ such that
\begin{enumerate}
\item $(1-\eps)M \leq \wtilde{M}(x)\leq (1+\eps)M$ if $\widehat{M} \in [(K/2)M,KM]$;
\item $\wtilde{M}(x)\leq (1+\eps)M$ otherwise.
\end{enumerate}
\end{theorem}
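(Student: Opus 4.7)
The plan is to analyze $\wtilde{M} = \sum_j \wtilde{M}_j$ by decomposing $M$ into weight-class contributions relative to the random shift $\zeta$. For each $j\in\{0,\dots,L\}$, let $V_j(\zeta)$ be the set of indices with $f(x_i)\in[(1+\eps)\zeta\widehat{M}/2^j,(2-\eps)\zeta\widehat{M}/2^j]$, and let $E(\zeta)$ be the ``boundary'' indices whose $f$-values land in any of the $O(1/\eps)$-fraction gaps between consecutive $V_j$'s. Because $\zeta$ is a uniform shift in $[1/2,1]$, every fixed $x_i$ lies in $E(\zeta)$ with probability $O(\eps)$, so $\E_{\zeta}\sum_{i\in E(\zeta)}f(x_i)\leq O(\eps)M$ and Markov gives $\sum_{i\in E(\zeta)}f(x_i)\leq O(\eps)M$ with constant probability. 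Thus it suffices to show $\wtilde{M}_j$ approximates $\sum_{i\in V_j(\zeta)}f(x_i)$ up to $(1\pm O(\eps))$ with total failure probability $\leq 0.3$, on top of the boundary event.

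I would split the indices $j$ into two regimes. For $j\leq j_0=\log(4K\eps^{-3}L^3)$, the threshold $\zeta\widehat{M}/2^j \gtrsim \eps^3\widehat{M}/L^3 \gtrsim \eps^3 M/(KL^3)$; by the choice $\theta=\Theta(\eps^3/(C_f L^3))$ (and using $\widehat M\leq KM$ for the upper regime, or the fact that items below threshold contribute nothing in the other regime), every $i\in V_j(\zeta)$ is a $\theta$-heavy hitter in $\cD_0$. By Lemma~\ref{lem:heavy-hitter} applied to $\cD_0$ with parameters $(\theta,0.05/(\hat L+1))$, we recover estimates $\tilde f_i=(1\pm(\gamma+\theta+\gamma\theta))f(x_i)$ for heavy $i$, with additive bound $C_f\theta(1+\gamma)(1+h(\theta))M$ for non-heavy buckets; since the window boundaries are chosen as $(1+\eps)\zeta\widehat{M}/2^j$ and $(2-\eps)\zeta\widehat{M}/2^j$ (with an $\eps$ gap on each side), the recovered estimates still correctly classify each true $V_j(\zeta)$-element into bucket $j$ as long as $\gamma,\theta\leq O(\eps)$, which follows from $\theta\leq h^{-1}(\alpha\eps/3),\alpha\eps/4$ and $\gamma\leq \alpha\eps/2$.

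For $j>j_0$, the items in $V_j(\zeta)$ are too light to be heavy hitters at level $0$, so the algorithm searches for the deepest sampling level $\ell$ at which $V_j(\zeta)$ survivors become heavy. Let $\ell^*$ be the largest $\ell$ with $p_\ell|V_j(\zeta)|\geq (1-\sqrt{20}\eps)L^2/\eps^2$; a Chernoff bound gives that the number of survivors is in $[(1-\sqrt{20}\eps),2(1+\sqrt{20}\eps)]L^2/\eps^2$ with probability $1-N^{-\Omega(1)}$, and we union-bound over the $O(L)$ weight classes and $O(\hat L)$ sampling levels. At level $\ell^*$ the subsampled total mass is $\lesssim M/2^{\ell^*}$, and one checks that each survivor of $V_j(\zeta)$ carries an $\Omega(\eps^2/L^2)$ fraction of this mass, i.e.\ is $\theta$-heavy, so Lemma~\ref{lem:heavy-hitter} again recovers each with $(1\pm O(\eps))$ accuracy. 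Scaling back by $2^{\ell^*}$ turns the count into an unbiased estimate of $|V_j(\zeta)|$, and multiplication by the (accurate) recovered $f$-value then yields $(1\pm O(\eps))\sum_{i\in V_j(\zeta)}f(x_i)$. Finally, summing $\wtilde{M}_j$ and absorbing the boundary mass gives $(1\pm\eps)M$; in the ``otherwise'' case where $\widehat{M}>KM$, the same argument shows each $\wtilde{M}_j$ is at most $(1+O(\eps))\sum_{i\in V_j(\zeta)}f(x_i)$, so summing gives $\wtilde{M}\leq (1+\eps)M$ without any lower bound.

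The main obstacle is balancing the multiplicative per-item HeavyHitter error $O(\gamma+\theta)$, the additive bucket-noise bound $C_f\theta(1+h(\theta))M$ summed over the $O(L^3/\eps^3)$ reported heavy hitters and $O(L)$ levels, the boundary-randomization error, and the Chernoff error in counting survivors, while keeping the total failure probability below $0.4$. The parameter choices $\theta=\min\{\Theta(\eps^3/(C_fL^3)),h^{-1}(\alpha\eps/3),\alpha\eps/4\}$, $\gamma=\alpha\eps/2$, and bucket count $Q=\tilde\Theta(B(\hat L+1))$ are precisely engineered so that each of these error sources contributes $O(\eps)M$ and each layer in the union bound costs at most $0.05/(\hat L+1)$.
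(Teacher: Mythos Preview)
Your high-level plan matches the paper's: randomize level boundaries via $\zeta$, handle the top $j\le j_0$ levels with $\cD_0$, handle deep levels by subsampling until survivors become $\theta$-heavy, and sum. Two points deserve correction.

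First, you invoke a Chernoff bound to control the number of $V_j(\zeta)$-survivors at level $\ell$ and claim failure probability $N^{-\Omega(1)}$. The subsampling function $H$ in Algorithm~\ref{alg:data structure} is only \emph{pairwise} independent, so Chernoff is unavailable. The paper instead uses Chebyshev (Lemma~\ref{lem:subsampling}): since $\Var(N_{j,\ell})\le \E N_{j,\ell}=\Theta(L^2/\eps^2)$, one gets failure probability $O(1/L^2)$ per $(j,\ell)$ pair, which is just enough to union-bound over $O(L)$ values of $j$ and $O(\hat L)$ values of $\ell$. Relatedly, your final step ``summing $\wtilde{M}_j$'' is not automatic: each deep-layer $\wtilde{M}_j$ only has its \emph{expectation} sandwiched between $M_j^{\ast\ast}$ and $M_j^\ast$, and the paper needs a second Chebyshev argument on the sum (using $\sum_{i\in S_j^\ast} f(x_i)^2 = O((\eps^2/L^2)(M_j^\ast)^2)$) to conclude that $\sum_{j>j_0}\wtilde{M}_j$ concentrates; see the lemma following Lemma~\ref{lem:top_layers}.

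Second, your ``otherwise'' case only treats $\widehat{M}>KM$. The hypothesis $\widehat{M}\ge M$ also permits $\widehat{M}\in[M,(K/2)M)$ when $K>2$, and you need the upper bound $\wtilde{M}\le(1+\eps)M$ there as well. The paper handles both regimes by a level-shifting comparison (Lemmas~\ref{lem:ell_1_bigger_overestimate} and~\ref{lem:ell_1_smaller_overestimate}): a bad $\widehat{M}$ amounts to running the algorithm with the magnitude levels shifted by $j^\ast=\pm\log_2(\widehat{M}/\widehat{M}^\ast)$ relative to a good overestimate $\widehat{M}^\ast$, so some levels may be \emph{missed} but none are overcounted, which is exactly what yields the one-sided bound.
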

\begin{proof}
There are $\Theta(\log(1/\delta))$ repetitions. In each repetition, there are $(\hat L+1)$ \textsf{HeavyHitter} structures of $O(B\log(N\hat L))$ buckets. 
There are $O(B(\hat L+1)\log(N\hat L))$ buckets in each repetition. Each bucket stores a sketch of length $t(\alpha\eps/2,0.05/Q)$. The total space complexity follows.

%
Since for each bucket the failure probability is $0.05/Q$, we can take a union bound over all buckets and assume that $\cB$ gives accurate answers on all buckets in a repetition with probability at least $0.95$. Then the claimed result follows from Theorem~\ref{thm:ell_1_algorithm} for $\hat M\in [(K/2)M,KM]$ and from Lemma~\ref{lem:ell_1_bigger_overestimate} for $\widehat{M}> KM$ and Lemma~\ref{lem:ell_1_smaller_overestimate} for $\widehat{M} < (K/2)M$.
%
%
\end{proof}

Next we extend the algorithm to handle the case where $\widehat{M} < (K/2)M$.

\begin{theorem}\label{thm:ell_1_main}
Let $\eps,\theta,B,Q,S$ be as in Theorem~\ref{thm:ell_1_main-simple} and $\delta \in (0,1)$. There exists an absolute constant $\alpha<1$ and a randomized linear sketch $\Pi: (\R^m)^{N} \to \R^{S'}$, where $S' = O(S \log K \cdot\log(\delta^{-1}\log K))$ and a recovery algorithm $\cA$ satisfying the following. 

For any $x=(x_1,\dots,x_N)\in X^N$ and an approximation $\widehat{M}$ such that $M\leq \widehat{M}\leq KM$, where $M=\sum_i f(x_i)$, with probability at least $1-\delta$, $\cA$ reads $\Pi x$ and outputs $\wtilde{M}(x)$ such that $(1-\eps)M \leq \wtilde{M}(x)\leq (1+\eps)M$.
\end{theorem}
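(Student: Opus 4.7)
My plan is to boost Theorem~\ref{thm:ell_1_main-simple} in two orthogonal ways: amplify the success probability from $0.6$ to $1-\delta$ via independent repetitions plus a median, and remove the restriction that $\widehat{M}$ already lies in the good window $[(K/2)M, KM]$ by running a geometric sequence of rescaled guesses and taking the maximum of the resulting estimates.

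Since $M \leq \widehat{M} \leq KM$, I will set $\widehat{M}_j \coloneqq 2^j \widehat{M}$ for $j = 0, 1, \ldots, \lceil \log_2 K \rceil$. Then $\widehat{M}_j/M \in [2^j, 2^j K]$, and since $M/\widehat{M} \in [1/K, 1]$ there is at least one index $j^\star$ such that $\widehat{M}_{j^\star}/M \in [K/2, K]$, i.e., the hypothesis of Theorem~\ref{thm:ell_1_main-simple} is satisfied at that guess. For each $j$, I instantiate $R = \Theta(\log(\log K / \delta))$ independent copies of the sketch $\Pi$ from Theorem~\ref{thm:ell_1_main-simple}, invoked with upper-bound parameter $\widehat{M}_j$, producing estimates $\wtilde{M}_j^{(1)}, \ldots, \wtilde{M}_j^{(R)}$. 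Define $\wtilde{M}_j$ to be the median of these $R$ values, and output $\wtilde{M}(x) \coloneqq \max_{j} \wtilde{M}_j$. The final sketch is the concatenation of all of these independent sketches, with total dimension $O(\log K \cdot R \cdot S) = O(S \log K \cdot \log(\delta^{-1} \log K))$, matching the stated bound $S'$.

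For correctness, Theorem~\ref{thm:ell_1_main-simple} guarantees that each individual copy satisfies $\wtilde{M}_j^{(r)} \leq (1+\eps)M$ with probability $\geq 0.6$, with the additional lower bound $\wtilde{M}_j^{(r)} \geq (1-\eps)M$ when $j = j^\star$. A standard Chernoff argument over $R$ independent trials shows that the median $\wtilde{M}_j$ inherits each of these properties with failure probability at most $\delta/(2\lceil \log_2 K \rceil)$. A union bound across the $O(\log K)$ guesses then yields, with probability at least $1-\delta$, that $\wtilde{M}_j \leq (1+\eps)M$ for every $j$ and simultaneously $\wtilde{M}_{j^\star} \geq (1-\eps)M$. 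Taking the maximum gives $(1-\eps)M \leq \wtilde{M}(x) \leq (1+\eps)M$ as required.

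The only subtlety, and the step I would be careful about, is ensuring that the ``no overshoot'' guarantee $\wtilde{M}_j^{(r)} \leq (1+\eps)M$ survives the median operation for guesses outside the good window. This does hold: the event $\wtilde{M}_j^{(r)} \leq (1+\eps)M$ has probability $\geq 0.6$ for every $j$ (not just $j = j^\star$) because Theorem~\ref{thm:ell_1_main-simple} provides this bound in the ``otherwise'' case as well. Once this observation is in hand, the rest is routine Chernoff plus a union bound, and the theorem follows by combining median boosting with the geometric sweep over scales.
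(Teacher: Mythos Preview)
Your proposal is correct and follows essentially the same approach as the paper: median-of-repetitions to boost the constant success probability of Theorem~\ref{thm:ell_1_main-simple}, then a geometric sweep of $O(\log K)$ guesses combined via the maximum, with a union bound tying everything together. The only cosmetic difference is that you sweep by rescaling $\widehat{M}_j = 2^j\widehat{M}$ while keeping $K$ fixed, whereas the paper keeps $\widehat{M}$ fixed and instead varies the algorithm's $K$ parameter over $2,4,\dots,K$; since the correctness analysis in Section~\ref{sec:correctness} shows these two reparameterizations are equivalent (scaling $K$ merely shifts the magnitude levels), both routes are valid and yield the stated sketch size $S'$.
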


\begin{proof}
First, in view of Theorem~\ref{thm:ell_1_main-simple}, repeating the Algorithm~\ref{alg:ell_1} $\Theta(\log(1/\zeta))$ times and taking the median reduces the failure probability of a single run to $\zeta$. Hence, with sketch length $O(S\log(1/\zeta))$, we have an algorithm outputting $\wtilde{M}$ such that $(1-\eps)M \leq \wtilde{M}(x)\leq (1+\eps)M$, provided that $(K/2)M\leq \widehat{M}\leq KM$. 

For a general $\widehat{M}$, we run $\log K$ instances of the aforesaid algorithm in parallel, where the parameter $K$ in Algorithm~\ref{alg:ell_1} takes values $2,4,8,\dots,K$, respectively. Note that $\widehat{M}\in [(K/2)M,KM]$ in one of these instances and, with probability at least $1-\zeta$, the output $\widehat{M}$ of this instance satisfies that $\widehat{M}\in [(1-\eps)M, (1+\eps)M]$. For each other instance, with probability at $1-\zeta$, the outputted $\widehat{M}\leq (1+\eps)M$. Setting $\zeta = \delta/(\log K)$ and taking the maximum output $\widehat{M}$ among the $\log K$ instances with a union bound over $\log K$ instances, we obtain an estimate in $[(1-\eps)M, (1+\eps)M]$ with probability at least $1-\delta$, as desired.
\end{proof}
\subsection{Rough Approximator for \texorpdfstring{$\ell_1$}{l1}-Norm}\label{sec:rough}
Consider the problem of estimating $\|x\|_1$ up to a constant factor for $x\in \Z^{d^q}$ in the turnstile streaming model, where each update changes a coordinate by a $+1$ or a $-1$. Let $N = d^q$.
The following result is due to Braverman and Ostrovsky~\cite{BO09}.

\begin{theorem}[Rough approximation; Corollary 6.6 and Lemma 6.7 in \cite{BO09}]\label{thm:rough-approx}
There exists a randomized linear sketch $\Pi:\Z^N\to \Z^S$ for $S=\tilde{O}(q\log(md))$ and a recovery algorithm $\cA$ satisfying the following. For any $x\in \Z^N$ given in the aforementioned turnstile streaming model of length $m$, with probability at least $0.95$, $\cA$ reads $\Pi x$ and outputs $\widehat{M}$ such that $\|x\|_1\leq \widehat{M}\leq 4^{q^2}(\log d)^q \|x\|_1$.
\end{theorem}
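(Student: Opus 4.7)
The simplest plan would be to apply Indyk's $\ell_1$-sketch with $1$-stable (Cauchy) random variables directly to $x \in \Z^{d^q}$, viewed as a flat vector in $\Z^N$ with $N = d^q$. A standard calculation shows that the median-of-absolute-values over $O(1)$ independent Cauchy projections yields a constant-factor approximation of $\|x\|_1$ with failure probability at most $0.05$. The sketch is linear, which makes it trivially compatible with the turnstile stream of $\pm 1$ updates, and each coordinate of $\Pi x$ is bounded by $\mathrm{poly}(Nm)$, so it can be discretized to $O(\log(md))$ bits of precision while distorting $\|x\|_1$ by at most a constant factor. Using Nisan's pseudorandom generator (or appropriate $k$-wise independence) to represent the Cauchy matrix implicitly, the total space is $\tilde O(\log N) = \tilde O(q\log(md))$. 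This already meets the space bound, with a much sharper approximation factor than $4^{q^2}(\log d)^q$.

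The weaker stated factor strongly suggests that \cite{BO09} instead use an inductive, mode-by-mode construction that is useful for their subsequent tensor-product arguments. The inductive plan is: the base case $q=1$ is the constant-factor Indyk sketch above. For the inductive step, view $x$ as a $d \times d^{q-1}$ matrix $X$, note $\|x\|_1 = \sum_{i=1}^d \|X_{i,\cdot}\|_1$, and build a composite linear sketch in which an outer mode-$1$ Cauchy sketch is applied to the $d$ rows while an inductive mode-$(q-1)$ sketch is applied within each row. The approximation degrades per level because (i) the outer median estimator only identifies the $\ell_1$-mass up to an $O(\log d)$ factor when the row norms are highly imbalanced, and (ii) passing through the nonlinear median introduces a $4^q$ slack from triangle-inequality and tail-truncation steps needed to absorb the noise in the buckets. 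Unrolling a recurrence of the form $\alpha_q \le 4^q (\log d)\, \alpha_{q-1}$ gives exactly the stated $\alpha_q = 4^{O(q^2)}(\log d)^q$. Space analysis is additive across levels, contributing $\tilde O(\log(md))$ per level and $\tilde O(q\log(md))$ total.

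The main obstacle is reconciling the nonlinearity of the median estimator with the requirement that $\Pi$ be a linear map into $\Z^S$ compatible with a turnstile stream. This is handled by making sure that every median/recovery operation happens only inside $\cA$ at query time, while the stream updates touch only the linear sketch $\Pi x$. Secondary technical work involves (i) discretizing the Cauchy entries to $O(\log(md))$ bits without blowing up the approximation factor, (ii) implementing these discretized Cauchys via Nisan's PRG so that the seed fits in $\tilde O(q\log(md))$ bits and the streaming-space analysis of \cite{indyk2006stable} goes through unchanged, and (iii) boosting the per-level success probability to $1 - O(q^{-1})$ by $O(\log q)$-fold repetition so that a union bound over the $q$ inductive levels still yields overall success $\geq 0.95$.
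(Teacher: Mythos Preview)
The paper does not prove this statement itself; it is quoted as a black box from Braverman--Ostrovsky \cite{BO09} (their Corollary~6.6 and Lemma~6.7), so there is no in-paper proof to compare against line by line.

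That said, there is a real gap in your first plan. A flat Cauchy sketch on $\R^{d^q}$ would prove the theorem \emph{as literally stated} (arbitrary $x$ under $\pm 1$ turnstile updates), and with a far better approximation factor. But look at how the paper actually \emph{uses} the result in Section~\ref{sec:tvd} and the proof of Theorem~\ref{thm:independence-testing}: the rough approximator must be applied to (a rescaling of) $P - Q$ with $Q = P_1 \otimes \cdots \otimes P_q$. The stream delivers $\pm 1$ updates to $P^f$ and to each marginal $P_j^f$, but a single stream item changes $Q^f = P_1^f \otimes \cdots \otimes P_q^f$ in many coordinates at once, so $P - Q$ is \emph{not} presented as a $\pm 1$ turnstile stream. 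The sketch $\Pi$ must therefore allow $\Pi Q^f$ to be assembled at the end of the stream from small per-marginal sketches, which forces a mode-by-mode (tensor-product) structure on $\Pi$. A single dense Cauchy matrix has no such factorization. So the inductive construction is not merely ``useful for subsequent tensor-product arguments'' as you put it --- it is a hard requirement, and without it the rough estimator cannot be run on $P - Q$ at all in this model.

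Your second plan is therefore the relevant one, and your recurrence $\alpha_q \le 4^q(\log d)\,\alpha_{q-1}$ correctly accounts for the stated $4^{q^2}(\log d)^q$ blow-up. The specific mechanism you offer for the per-level $\log d$ loss (an outer median being fooled by imbalanced row norms) is off --- Indyk's median estimator gives a constant-factor approximation regardless of how the mass is spread --- but the overall inductive shape and the space bookkeeping match what is needed.
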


\subsection{Estimation of Total Variation Distance}\label{sec:tvd}
Now we wish to estimate $\|P-Q\|_1$. Recall that $P$ is a general joint distribution and $Q$ the product distribution induced by the marginals of $P$. 

We apply the data structure iteratively in Section~\ref{sec:1+eps}. For $\ell_1$-norm, $f(x) = h(x) = x$, $C_f = 1$, $B(f;\eps) = \Theta(1/\eps^2)$. Therefore, in Theorem~\ref{thm:ell_1_main}, one can take $B_i=(L/\eps)^c$ for some absolute constant $c\geq 4$. The basic setup is presented in Algorithm~\ref{alg:data structure P}. For each $i$, we apply Theorem~\ref{thm:ell_1_main} and obtain a linear sketch $\Pi^{(i)}$ and a recovery algorithm $\cA_i$. The sub-recovery algorithm for $\cD^{(i)}_\ell$ is $\cA_{i-1}$. The entry update calls \texttt{EntryUpdate}($i_1,\dots,i_q,\Delta,q$) on the final sketch (see Algorithm~\ref{alg:update P}) if there is an entry update of $\Delta$ at position $(i_1,\dots,i_q)$. For notational convenience, we assume it is always true that a subsampling hash function hashes all coordinates into level $0$. The overall decoding algorithm calls \texttt{Decode}($q$), see Algorithm~\ref{alg:decode P}.

\begin{algorithm2e}\caption{Data Structure for $P$} \label{alg:data structure P}
\DontPrintSemicolon
\LinesNumbered

$\eps_q\gets \eps$, $\delta_q\gets \delta$, $K\gets 4^{q^2}\log^q d$, $N_q \gets d$, $L_q\gets \log(KN_q/\eps_q)$\;
\For{each $i=q-1,\dots,1$}{
	$\eps_i\gets \alpha\eps_{i+1}$\;
	$N_i\gets d$\;
	$L_i\gets \log(K N_i/\eps_i)$\;
	$\delta_i\gets O(1/(L_i/\eps_i)^c)$\;
}
$t_0\gets 1$\;
\For{each $i=1,\dots,q$}{
	$t_i\gets O((L_i/\eps_i)^c t_{i-1}\log K\log(K/\delta_i))$ \cmt*{sketch length in each bucket}
	$R_i\gets \Theta(\log(1/\delta_i))$ \cmt*{number of repetitions}
	\For{each $r = 1,\dots,R_i$}{
		Initialize $H^{(i,r)}, \cD^{(i,r)}_0,\dots,\cD^{(i,r)}_{\log N+1}$ as in Algorithm~\ref{alg:data structure} for parameters $\eps_i,\delta_i,K,N_i,t_{i-1}$\;
	}
}
\end{algorithm2e}

\begin{algorithm2e}\caption{Update algorithm for $P$: an entry update of $\Delta$ at position $(i_1,\dots,i_q)$} \label{alg:update P}

\DontPrintSemicolon
\LinesNumbered
\SetKwFunction{FEntryUpdate}{EntryUpdate}
\SetKwProg{Fn}{Function}{:}{}

\Fn(\cmt*[f]{invoked on some sketch structure}){\FEntryUpdate{$i_1,\dots,i_q,\Delta,d$}}{
	\For{each pair $(r,\ell)\in [R_d]\times \{0,\dots,\log N_d\}$}{
			\If{$H^{(d,r)}$ hashes $i_d$ into the $\ell$-th level}{
				$B\gets$ set of indices of buckets containing $i_d$ in $\cD^{(d,r)}_{\ell}$\;
				\For{each bucket $b\in B$}{
					\eIf{$d>1$}{
						$\Delta' \gets $ \FEntryUpdate{$i_1,\dots,i_q,\Delta,d-1$} on bucket $b$\;
						Add $\Delta'$ to $b$\;
					}{
						Add $T(\Delta)$ to $b$\;
					}
				}
			}
	}
	\Return the incremental vector to the sketch under $\Pi^{(d)}$\;
}
\end{algorithm2e}

\begin{algorithm2e}\caption{Decoding algorithm $\cA_d$ (for $P$ and $P-Q$)} \label{alg:decode P}
\DontPrintSemicolon
\LinesNumbered
\SetKwFunction{FDecode}{Decode}
\SetKwProg{Fn}{Function}{:}{}

\Fn(\cmt*[f]{This is $\cA_d$}){\FDecode{$d$}}{ 
	\For{each $r=1,\dots,R_d$}{
		$Z_r \gets $ Output of Algorithm~\ref{alg:ell_1} with subdecoding algorithm $\cA_{d-1}$\;
	}
	\Return $\median_r Z_r$\;
}
\end{algorithm2e}

\begin{algorithm2e}
\caption{Data Structure for $Q$} \label{alg:data structure Q}

\DontPrintSemicolon
\LinesNumbered

Let $\eps_i,\delta_i,K,R_i$ be the same as in Algorithm~\ref{alg:data structure P}\;
Let the \textsf{HeavyHitter} sketches $\hat\cD^{(i,r)}_\ell$ be the same as $\cD^{(i,r)}_\ell$ in Algorithm~\ref{alg:data structure} (same hash functions) except for $t_i=1$\;
\end{algorithm2e}

\begin{algorithm2e}
\DontPrintSemicolon
\LinesNumbered
\caption{Entry update for $Q$} \label{alg:update Q}

\KwIn{an update of $\Delta$ at position $(i_1,\dots,i_q)$}
	\For{each $d=1,\dots,q$}{
		\For{each $(r,\ell) \in [R_d]\times\{0,\dots,\log N_d\}$}{
			\If{$H^{(d,r)}$ hashes $i_d$ in level $\ell$}{
				$B\gets $ set of indices of buckets containing $i_d$ in $\hat\cD^{(j,r)}_\ell$\;
				Add $\Delta$ to bucket $b$ for every $b\in B$\;
			}
		}
	}
\end{algorithm2e}

\begin{algorithm2e}\caption{Tensorization of $Q$: construct the sketch for $P_1^f\otimes\cdots\otimes P_n^f$}\label{alg:tensorize Q}

\DontPrintSemicolon
\LinesNumbered

$v^{(0)} = 1$\;
\For{each $i=1,\dots,d$}{
	\For{each $(r,\ell) \in [R_d]\times\{0,\dots,\log N_d\}$}{
		\If{$H^{(d,r)}$ hashes $i_d$ in level $\ell$}{
			$B\gets $ set of indices of buckets containing $i_d$ in $\cD^{(j,r)}_\ell$\;
			\For{each $b\in B$}{
				$a\gets $ bucket value of bucket $b$ in $\hat\cD^{(j,r)}_\ell$\;
				Add $a\cdot v^{(i-1)}$ to bucket $b$\;
			}
		}
	}
	Form $v^{(i)}$ which conforms to the structure of $\Pi^{(i)}$\;
}
\end{algorithm2e}

\begin{theorem}\label{thm:independence-testing}
Suppose that the stream length $m = \poly(d^q)$. There is a randomized sketching algorithm which outputs a $(1\pm\eps)$-approximation to $\|P-Q\|_1$ with probability at least $0.9$, using $\exp(O(q^2 + q\log(q/\eps)+q\log\log d))$ bits of space. The update time is $\exp(O(q^2 + q\log(q/\eps)+q\log\log d))$.
\end{theorem}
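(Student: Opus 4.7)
The plan is to combine the components developed throughout this section. First, I would invoke Theorem~\ref{thm:rough-approx} (applied to the joint distribution stream and a symbolic representation of $P-Q$) to obtain a rough overestimate $\widehat{M}$ with $\|P-Q\|_1 \leq \widehat{M} \leq K\|P-Q\|_1$, where $K = 4^{q^2}(\log d)^q$. This costs only $\tilde O(q\log(md))$ bits and provides the input $\widehat M$ required by the refinement algorithm.

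Next, I would maintain, in parallel over the stream, (i) the recursive sketch $\Pi^{(q)}(P)$ of the joint distribution via Algorithms~\ref{alg:data structure P} and~\ref{alg:update P}, and (ii) small marginal-sketch structures $\hat\cD^{(i,r)}_\ell$ via Algorithms~\ref{alg:data structure Q} and~\ref{alg:update Q} that use the same hash functions as $\Pi^{(q)}$ but with a single scalar bucket value. These marginal structures suffice to reconstruct $\Pi^{(q)}(Q) = \Pi^{(q)}(P_1\otimes\cdots\otimes P_q)$ after the stream by the tensorization in Algorithm~\ref{alg:tensorize Q}; this works because each \textsf{CountSketch}-style bucket value at level $i$ is a linear combination of tensored contributions from the lower level, so the tensor product $P_1\otimes\cdots\otimes P_q$ can be assembled one mode at a time inside each bucket. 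By linearity of $\Pi^{(q)}$, computing $\Pi^{(q)}(P) - \Pi^{(q)}(Q)$ yields the sketch $\Pi^{(q)}(P-Q)$, on which the recovery routine $\cA_q$ of Algorithm~\ref{alg:decode P} is invoked to produce the final estimate.

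Correctness follows by induction on $q$: Theorem~\ref{thm:ell_1_main} states that a one-mode estimator with black-box sub-approximator of error $(1\pm\eps_{i-1})$ and failure probability $\delta_{i-1}$ applied per bucket gives a $(1\pm\eps_i)$-approximation at failure probability $\delta_i$, provided $\eps_{i-1} \leq \alpha \eps_i$ for the constant $\alpha$ from that theorem and $\delta_{i-1}$ is small enough to union-bound over all buckets touched by the outer level. Setting $\eps_i = \alpha^{q-i}\eps$ and $\delta_i = O((\eps_i/L_i)^c)$ as in Algorithm~\ref{alg:data structure P} ensures these hypotheses are satisfied at every level; unwinding the recursion starting from the trivial one-dimensional base case $t_0=1$ gives overall error $(1\pm\eps)$ and overall failure probability $\leq 0.1$.

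The space and update time follow from tracking the recursion $t_i = O((L_i/\eps_i)^c\, t_{i-1}\,\log K\,\log(K/\delta_i))$. Since $\log K = O(q^2 + q\log\log d)$ and $L_i/\eps_i = \poly(q,1/\eps,\log d)$ for all $i$, this recurrence solves to $t_q = \exp(O(q^2 + q\log(q/\eps) + q\log\log d))$, which absorbs the $O(\log K)$ parallel copies used for the rough approximator phases and matches the claimed bound. The main obstacle, and the place where the analysis needs the most care, is the error/failure-probability bookkeeping in the recursion: the sub-decoder error at level $i-1$ enters the level-$i$ estimate inside each of the $O(B_i \log(N_i/\delta_i))$ heavy-hitter buckets, so one must verify that the shrinking schedule $\eps_{i-1}=\alpha\eps_i$ together with the union bound over buckets precisely matches what Lemma~\ref{lem:heavy-hitter} and Theorem~\ref{thm:ell_1_main} require at the next outer level, and that the non-heavy buckets contribute only $O(\eps_i)$ error after rescaling.
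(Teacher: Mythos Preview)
Your proposal is correct and follows essentially the same approach as the paper: run the rough approximator from Theorem~\ref{thm:rough-approx} in parallel, maintain $\Pi^{(q)}$ on the joint frequency vector and the marginal sketches with shared hash functions, tensorize at the end to form $\Pi^{(q)}(Q)$, subtract, and decode via $\cA_q$; the space bound comes from solving the same recursion on $t_i$. The paper spells out one technicality you left implicit, namely that one actually sketches the integer vector $m^q(P-Q)$ (so that $K\le 2m^q=d^{\Theta(q^2)}$ under the hypothesis $m=\poly(d^q)$, giving $\log K=O(q^2\log d)$), but your bookkeeping already reflects this.
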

\begin{proof}
Let $P^f$ be the frequency vector of the empirical distribution of the input stream and $P_i^f$ be the corresponding frequency vector for the marginal on $X_i$. We have $P = P^f/m$ and $P_i = P_i^f/m$.

Let $\Pi^{(q)}$ be the final linear sketch described above. In parallel we run the rough approximator (Theorem~\ref{thm:rough-approx}), which applies in our setting because the stream items are samples from the distribution and we are counting the empirical frequency. We maintain $\Pi(P^f)$ as described in Algorithm~\ref{alg:update}. For the marginals $P_i^f$, we maintain sketches $S^{(i)}_0 P_i^f,\dots, S^{(i)}_{L'} P_i^f$ as in Algorithm~\ref{alg:data structure Q} and Algorithm~\ref{alg:update Q}. At the end of the stream, we construct $\Pi^{(q)} Q^f$ for $Q^f = P_1^f\otimes\cdots \otimes P_n^f$ as in Algorithm~\ref{alg:tensorize Q}. Then we compute $m^{q-1}\Pi^{(q)} P^f - \Pi^{(q)} Q^f = m^q\Pi(P-Q)$, from which we can recover an approximation to $\|P-Q\|_1$ by invoking $\cA_q$.

Next we analyze the space complexity. Let $N_i = d$. Since we are sketching $m^q (P-Q)$, whose $\ell_1$ norm is an integer and is at most $2m^q$, we see that $K\leq 2m^q = d^{\Theta(q^2)}$ by our assumption that $m = \poly(d^q)$. Set $\eps_q = \eps$ and $\delta_q = O(1)$, then
\[
\eps_{i-1} = \alpha\eps_i,\quad \delta_{i-1} = \poly\left(\frac{\eps_i}{\log(K N_i/\eps_i)}\right)
\]
for all $i$. This implies that
\[
\eps_{q-i} = \alpha^{i-1}\eps,\quad \delta_{q-i} = \frac{\eps^{\Theta(i)}}{q^{\Theta(i)}\log^{\Theta(i)}(\frac{qd}{\eps})}
\]
Therefore the target dimension of $\Pi^{(i)}$ is
\begin{align*}
t_{i+1} &\leq C\left(\frac{\log(K N_i/\eps_{i+1})}{\eps_{i+1}}\right)^c \cdot t_i \cdot\log K \log\left(\frac{K}{\delta_{i+1}}\right)\\
&\leq C' \left(\frac{q^2}{\alpha^{q-i}\eps}\right)^{\Theta(1)} \cdot t_i\cdot (q-i)\polylog\left(\frac{qd}{\eps}\right)
\end{align*}
with $t_1=1$. This implies that
\[
t_n \leq (C')^q\frac{q^{\Theta(q)}}{\alpha^{\Theta(q^2)}\eps^{\Theta(q)}}\cdot q!\cdot \log^{O(q)}\left(\frac{qd}{\eps}\right) =\exp\left(O\left(q^2 + q\log\frac{q}{\eps} + q\log\log d \right)\right).
\]
This space dominates the space needed by the rough estimator. Each coordinate requires $O(\log(m^{q})) = O(q^2\log d)$ bits and the overall space complexity (in bits) follows.

The update time is clearly dominated by the update time for $P$, which is dominated by the sketch length.
\end{proof}

\subsection{Correctness of Algorithm~\ref{alg:ell_1}}\label{sec:correctness}

We adopt the notation from Section~\ref{sec:1+eps}. Recall that our goal is to estimate $M = \sum_i f(x_i)$  up to a $(1+\eps)$-factor and we also assume that we have an approximation $\widehat{M}$ to $M$ which satisfies that $(K/2)M\leq \widehat{M}\leq KM$. 

Let $\zeta$ be a uniform random variable on $[1/2,1]$. For a magnitude level $j$, define
\[
T_j = \zeta\frac{\widehat{M}}{2^j},
\]
and
\[
S_j = \left\{i\in U: f(x_i) \in \left(T_j, 2T_j\right]\right\},\quad s_j = |S_j|.
\]
Observe that if we scale $K$ by a factor of $2^t$, the magnitude levels are shifted by $t$ levels (new top levels are empty). It is easy to see that the behaviour of Algorithm~\ref{alg:ell_1} is invariant under the concurrent scaling of $K$ and shifting of the magnitude levels (since the bucket contents in the \textsf{HeavyHitter} structures remain the same), we may, with loss of generality, assume that $K=2$ and $M \leq \widehat M\leq 2M$.

Observe that $\sum_{j\geq 1}\sum_{i\in S_j} f(x_i) = M$. Note that each element in level $j > \log(2N/\eps)$ is at most $\widehat{M}/2^j < (\eps/(2N))\widehat{M} < (\eps/N)M$, so it contribute at most $\eps M$ and thus can be omitted. That is, we only need to consider the levels up to $L = \log(2N/\eps)$.

We call a level $j$ important if
\[
\frac{s_j}{2^j} \geq \frac{\eps}{2 L}
\]
and we let $\cJ$ denote the set of important levels $j$. The non-important levels contributes at most
\[
\sum_{j\not\in \cJ} \sum_{i\in S_j} f(x_i) \leq \sum_{j\not\in \cJ} \frac{2\eps}{2 L} M \leq \eps \widehat{M} \leq 2\eps M.
\]

The goal of this section is to prove the following theorem.
\begin{theorem}\label{thm:ell_1_algorithm}
Algorithm~\ref{alg:ell_1} returns an estimate $\wtilde{M}$, which with probability at least $0.7$ (over $\zeta$ and subsampling) satisfies that
\[
(1-O(\eps))M \leq \wtilde{M} \leq (1+O(\eps))M.
\]
\end{theorem}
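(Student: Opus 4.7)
By the invariance noted just before the theorem statement, WLOG $K=2$ and $M\le \widehat M\le 2M$. The discussion preceding the theorem already shows that magnitude levels $j>L$ contribute at most $\eps M$ in total and that non-important levels contribute at most $2\eps M$, so it suffices to approximate $\sum_{j\in \mathcal J}\sum_{i\in S_j} f(x_i)$ by $\sum_{j\in\mathcal J}\wtilde M_j$ up to a $(1\pm O(\eps))$ factor, where $\mathcal J$ is the set of important levels. To handle the randomized boundaries, observe that for each element $x_i$, the probability over $\zeta\sim U[1/2,1]$ that $f(x_i)$ lands within a multiplicative $(1\pm\eps)$ band of either endpoint $T_j$ or $2T_j$ of its magnitude level is $O(\eps)$; in expectation this ``boundary'' mass is $O(\eps)M$, and a Markov bound makes it $O(\eps)M$ with probability at least $0.95$. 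I restrict attention to the complementary ``core'' set of elements, which sit cleanly inside $[(1+\eps)T_j,(2-\eps)T_j]$ and therefore pass the algorithm's inclusion test deterministically.

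\textbf{Per-level analysis.} For each important $j\in\mathcal J$, I define the ideal subsampling level $\ell_j^\ast\in\{0,\ldots,\hat L\}$ so that $s_j/2^{\ell_j^\ast}$ lies in $[\alpha L^2/\eps^2, 2\alpha L^2/\eps^2]$ for a suitable small constant $\alpha$. Importance of $j$ ensures $\ell_j^\ast\le \hat L$, and the definition $j_0=\log(4K\eps^{-3}L^3)$ ensures $\ell_j^\ast=0$ precisely for $j\le j_0$, which aligns with the algorithm's two cases. At level $\ell_j^\ast$, a Chernoff bound gives that the actual number of core $S_j$-survivors is $(1\pm \sqrt{20}\,\eps)$ times its expectation with failure probability $1/\poly(L/\eps)$, small enough to union-bound over the $\le L$ important levels. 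I then verify that each surviving core element is an $\Omega(\eps^3/L^3)$-heavy hitter in the $\ell_j^\ast$-th level sketch, by bounding the competing mass from all coarser magnitude levels $j'\le j$ (in expectation at most $O(L/\alpha)\cdot T_j\cdot L^2/\eps^2$ at this subsampling rate) and invoking Lemma~\ref{lem:heavy-hitter} with $\theta=\Theta(\eps^3/(C_f L^3))$, which recovers its $f$-value to within multiplicative $(1+O(\eps))$ plus a negligible additive error by property~(5) of $f$.

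\textbf{Level selection, summation, and main obstacle.} The algorithm's rule of picking the deepest $\ell$ whose count in the window $[(1+\eps)T_j,(2-\eps)T_j]$ lies in $[(1-\sqrt{20}\,\eps)L^2/\eps^2,\,2(1+\sqrt{20}\,\eps)L^2/\eps^2]$ then locks onto $\ell_j^\ast$ with high probability: by the same Chernoff bounds, shallower levels overflow the upper count bound while deeper levels fall below the lower bound. Once $\ell=\ell_j^\ast$ is selected, scaling $\sum_i f(\lambda_i^{(j)})$ by $2^{\ell_j^\ast}$ yields an estimate of the core $S_j$-mass accurate to $1\pm O(\eps)$, after using property~(5) of $f$ again to convert the additive \textsf{HeavyHitter} noise into a multiplicative error. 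Summing over $\mathcal J$ and absorbing the $O(\eps)M$ losses from boundaries, non-important levels, and tail levels gives $\wtilde M=(1\pm O(\eps))M$, with total failure probability at most $0.3$ (Markov on the boundary mass, union-bounded Chernoff across levels, and the $0.05/Q$-per-bucket \textsf{HeavyHitter} guarantee). The main obstacle I anticipate is showing that collisions with non-$S_j$ survivors in the \textsf{HeavyHitter} bucket at level $\ell_j^\ast$ do not perturb a core $S_j$-element outside the deterministic window $[(1+\eps)T_j,(2-\eps)T_j]$; this is precisely what the aggressive choice $\theta=\Theta(\eps^3/(C_f L^3))$ and the $\eps$-wide boundary buffer are engineered to prevent, and formalizing it relies crucially on the small-perturbation stability property~(5) of $f$ together with the bound on competing mass derived above.
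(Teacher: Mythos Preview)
Your overall structure matches the paper's: reduce to $K=2$, discard tail and non-important levels, handle boundaries via $\zeta$ and Markov, identify the ideal subsampling level per important $j$, verify heavy-hitter recovery there, and sum. One step, however, fails as written. You invoke a Chernoff bound to show the number of core $S_j$-survivors at level $\ell_j^\ast$ concentrates with failure probability $1/\poly(L/\eps)$, and again to rule out the other subsampling levels $\ell$. But Algorithm~\ref{alg:data structure} specifies that the subsampling hash $H$ is only \emph{pairwise} independent, so Chernoff is unavailable. The paper (Lemma~\ref{lem:subsampling}) uses Chebyshev instead: with $\E N_{j,\ell}\ge L^2/\eps^2$ and $\Var N_{j,\ell}\le \E N_{j,\ell}$, a $(1\pm\sqrt{20}\,\eps)$ deviation occurs with probability at most $1/(20L^2)$, which still survives the union bound over the $\le L$ magnitude levels $j$ and the $\hat L\le L$ subsampling levels $\ell$. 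The same fix is needed for your claim that ``scaling $\sum_i f(\lambda_i^{(j)})$ by $2^{\ell_j^\ast}$ yields an estimate of the core $S_j$-mass accurate to $1\pm O(\eps)$'': concentration of the \emph{count} does not by itself give concentration of the \emph{mass}, since $f(x_i)$ ranges over a factor of $2$ within $S_j$; you need a separate Chebyshev on the sampled mass.

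On that last point, the paper organizes the endgame a bit differently from you. Rather than establishing per-level high-probability concentration of each $\wtilde M_j$, it sandwiches the (conditional) expectation $\E\wtilde M_j$ between $M_j^{\ast\ast}$ and $M_j^\ast$ (Lemmas~\ref{lem:deep_layers} and~\ref{lem:top_layers}) and then applies a single Chebyshev inequality to the whole sum $\sum_{j\in\cJ,\,j>j_0}\wtilde M_j$, using $\sum_{i\in S_j^\ast}f(x_i)^2\le (4/|S_j^\ast|)(M_j^\ast)^2=O((\eps^2/L^2)(M_j^\ast)^2)$ for the variance. Either route works once Chebyshev replaces Chernoff, and your identification of the ``main obstacle'' (perturbation of core elements by bucket noise in the \textsf{HeavyHitter}) is exactly what Lemma~\ref{lem:deep_layers} resolves via the choice $\theta=\min\{\Theta(\eps^3/(C_f L^3)),\alpha\eps/4,h^{-1}(\alpha\eps/3)\}$.
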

The rest of the section is devoted to the proof of the theorem. 
We assume that all \textsc{Count-Min} structures return correct values, at the loss of $0.05$ probability. The main argument is decomposed into the following lemmas.

\begin{lemma}\label{lem:subsampling}
With probability at least $0.95$ (over subsampling), the following holds for all $j > j_0$ and $j\in \cJ$. There exists an $\ell$ such that the substream induced by $H_\ell$ contains at least $(1-O(\eps))L^2/\eps^2$ and at most $2(1+O(\eps))L^2/\eps^2$ elements of $S_j$. Furthermore, it holds $\frac{3}{4}L^2/\eps^2\leq s_j2^{-\ell}\leq \frac{9}{4}L^2/\eps^2$ for any such $\ell$.
\end{lemma}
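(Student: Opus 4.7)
The plan is to identify, for each important magnitude level $j > j_0$, a canonical subsampling level $\ell^*_j$ whose expected surviving count from $S_j$ sits in a convenient constant-factor window, and then to invoke Chebyshev's inequality on the pairwise independent hash $H$ to convert this expectation into an actual count inside the required window. For the ``furthermore'' clause, I will additionally argue that no subsampling level whose expectation lies far from this window can accidentally produce a count inside the target range.

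To establish existence of a good $\ell^*_j$, note that if $j \in \cJ$ and $j > j_0 = \log(4K\eps^{-3}L^3)$, then the importance condition $s_j \geq (\eps/(2L))2^j$ combined with $2^j > 4K\eps^{-3}L^3$ gives $s_j > 2K L^2/\eps^2 \geq 4L^2/\eps^2$. Since $s_j 2^{-\ell}$ halves as $\ell$ increments by one, some $\ell^*_j \in \{0,1,\ldots,\hat L\}$ must satisfy $s_j 2^{-\ell^*_j} \in [L^2/\eps^2,\, 2L^2/\eps^2]$. Let $X_{j,\ell}$ denote the number of survivors of $S_j$ at subsampling level $\ell$. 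Since $H$ is pairwise independent, $\E X_{j,\ell} = s_j 2^{-\ell}$ and $\Var X_{j,\ell} \leq \E X_{j,\ell}$. At $\ell = \ell^*_j$, Chebyshev's inequality gives
\[
    \Pr\!\left[|X_{j,\ell^*_j} - \E X_{j,\ell^*_j}| \geq \eps\, \E X_{j,\ell^*_j}\right] \leq \frac{1}{\eps^2\, \E X_{j,\ell^*_j}} \leq \frac{1}{L^2},
\]
so $X_{j,\ell^*_j} \in [(1-\eps)L^2/\eps^2,\, 2(1+\eps)L^2/\eps^2]$ with probability at least $1 - 1/L^2$. Union bounding over the at most $L$ important $j > j_0$ costs $O(1/L)$, giving the existence part.

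For the ``furthermore'' clause, I would show that if $\E X_{j,\ell} \notin [(3/4)L^2/\eps^2,\, (9/4)L^2/\eps^2]$, then $X_{j,\ell}$ falls in $[(1-O(\eps))L^2/\eps^2,\, 2(1+O(\eps))L^2/\eps^2]$ only with small probability. In the regime $\E X_{j,\ell} > (9/4)L^2/\eps^2$, the required relative deviation is at least $1/9 - O(\eps)$, and Chebyshev yields failure probability $O(1/\E X_{j,\ell}) = O(\eps^2/L^2)$. In the regime $\E X_{j,\ell} < (3/4)L^2/\eps^2$, the required additive deviation is $\Omega(L^2/\eps^2)$, and Chebyshev yields $O(\E X_{j,\ell}\, \eps^4/L^4) = O(\eps^2/L^2)$. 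Union bounding over the at most $O(L^2)$ pairs $(j,\ell)$ costs $O(\eps^2)$, and combining with the existence bound keeps the total failure probability below $0.05$.

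The main obstacle is keeping the accumulated failure probability under $0.05$: Chebyshev only yields a $1/L^2$ tail per event, which is nearly tight for the $L$ existence events, so the argument relies on $L = \log(2N/\eps)$ being sufficiently large (as it is in the setting of interest). If $L$ is small, strengthening $H$ to $4$-wise independence and invoking a fourth moment bound would supply additional slack without altering the structure of the proof.
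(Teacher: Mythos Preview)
Your proposal is correct and follows essentially the same approach as the paper: use the importance bound to get $s_j \geq \Omega(L^2/\eps^2)$, pigeonhole on the geometric sequence $s_j 2^{-\ell}$ to locate a level with expectation in the target window, apply Chebyshev via pairwise independence for concentration, and for the ``furthermore'' clause show (again by Chebyshev) that levels with expectation outside $[\tfrac{3}{4},\tfrac{9}{4}]L^2/\eps^2$ do not accidentally land in the target range, then union bound over $(j,\ell)$. Your per-event tail bounds in the ``furthermore'' direction are in fact a bit sharper than the paper's uniform $O(1/L^2)$, and your closing worry about small $L$ is unnecessary in context since $L = \log(2N/\eps)$ is always large enough.
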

\begin{proof}
Since $j\in \cJ$ and $j > j_0$,
\[
s_j \geq \frac{\eps}{2L}\cdot 2^{j_0} = \frac{4}{\eps^2} L^2.
\]
Let $N_{j,\ell}$ denote the number of survivors in the $\ell$-th subsampling level. For $\ell=1$, we have
\[
\E N_{j,\ell} = s_j 2^{-\ell} \geq \frac{4L^2}{\eps^2} 2^{-\ell} = \frac{2L^2}{\eps^2}.
\]
Note that $s_j\leq 2^j$, and thus for $\ell = j - \log(\eps^{-2}L^2) > 0$, we have 
\[
\E N_{j,\ell} = s_j 2^{-\ell} \leq  2^{j-\ell} = \eps^{-2} L^2
\]
survivors after sampling. Hence, there exists $\ell$ such that $\eps^{-2}L^2\leq \E N_{j,\ell} \leq 2\eps^{-2} L^2$. For any such $\ell$, since $H_{\ell}$ is pairwise independent, we have $\Var(N_{j,\ell})\leq \E N_{j,\ell}$ and it follows from Chebyshev's inequality that with probability at least $1-1/(20L^2)$, 
\[
N_{j,\ell} = \E N_{j,\ell}  \pm \sqrt{20L^2\cdot \E N_{j,\ell}},
\]
that is,
\begin{equation}\label{eqn:N_{j,l}}
(1-\sqrt{20}\eps)\frac{L^2}{\eps^2} \leq N_{j,\ell} \leq 2(1+\sqrt{20}\eps)\frac{L^2}{\eps^2}.
\end{equation}
A similar argument shows that for each $\ell$, with probability at least $1-1/(20L^2)$, we have $N_{j,\ell} \geq \frac{9}{4}(1-\sqrt{20}\eps)L^2/\eps^2$ if $\E N_{j,\ell} \geq \frac{9}{4}L^2/\eps^2$ and $N_{j,\ell} \leq \frac{3}{4}(1+\sqrt{20}\eps)L^2/\eps^2$ if $\E N_{j,\ell} \leq \frac{3}{4}L^2/\eps^2$. Taking a union bound over all $L$, we have that with probability at least $1-1/(20L)$ there exists a unique $\ell$ such that \eqref{eqn:N_{j,l}} holds; furthermore, $s_j2^{-\ell} = \E N_{j,\ell} = \Theta(\eps^{-2}L^2)$ for this $\ell$.

The claimed result follows from a union bound over $j$.
\end{proof}


%

Let $\alpha \in (0,1)$ be a small constant. Define
\begin{gather*}
S_j^\ast = \left\{i\in S_j: f(x_i) \in [(1+(1-\alpha)\eps)T_j, (2-(1-\alpha)\eps)T_j] \right\}\\
S_j^{\ast\ast} = \left\{i\in S_j: f(x_i) \in [(1+(1+\alpha)\eps)T_j, (2-(1+\alpha)\eps)T_j] \right\}
\end{gather*}
and
\[
M_j^\ast = \sum_{i\in S_j^\ast} f(x_i),\quad M_j^{\ast\ast} = \sum_{i\in S_j^{\ast\ast}} f(x_i).
\]
Suppose that the event in Lemma~\ref{lem:subsampling} occurs. 

\begin{lemma}\label{lem:deep_layers}
With probability at least $0.9$ (over the subsamplings), it holds for each $j\in \cJ$ and $j > j_0$ that $(1-O(\eta))M_j \leq \E\wtilde{M}_j\leq (1+O(\eta))$, where the expectation is taken over the subsampling.
\end{lemma}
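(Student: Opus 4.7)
The plan is to condition on the high-probability subsampling event of Lemma~\ref{lem:subsampling}, then analyze the estimator $\wtilde M_j$ level-by-level to show that, in expectation over the remaining randomness (principally the random offset $\zeta$ and the heavy-hitter hashing), $\E\wtilde M_j$ lies within a $(1\pm O(\eps))$ factor of $M_j$ for each important deep level $j \in \cJ$ with $j > j_0$.

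First I would identify the correct subsampling layer: by Lemma~\ref{lem:subsampling}, there is a unique layer $\ell = \ell(j)$ for which $s_j 2^{-\ell} = \Theta(L^2/\eps^2)$, and the acceptance test $s \in [(1-\sqrt{20}\eps)L^2/\eps^2,\,2(1+\sqrt{20}\eps)L^2/\eps^2]$ in line~\ref{alg:deepest level} selects exactly this layer (the same concentration argument as in Lemma~\ref{lem:subsampling} rules out deeper layers from satisfying the count test). Thus $\wtilde M_j$ equals $2^{\ell}$ times the sum of the $f$-estimates returned by $\cD_{\ell}$ that fall into the acceptance window $[(1+\eps) T_j,\,(2-\eps) T_j]$.

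Next I would classify each survivor according to whether its estimate lands in the acceptance window. By Lemma~\ref{lem:heavy-hitter} with $\theta$ as chosen in Algorithm~\ref{alg:data structure} and $\gamma = \alpha\eps/2$, each of the top-$\Theta(L^3/\eps^3)$ heavy hitters returned by $\cD_{\ell}$ has its $f$-value estimated to within a $(1\pm \alpha\eps/2)$ multiplicative factor. Hence every surviving $i \in S_j^{\ast\ast}$ is counted, every survivor with $f(x_i) \notin [(1+(1-\alpha)\eps)T_j,\,(2-(1-\alpha)\eps) T_j]$ is rejected, and only the ``ambiguous'' elements in the boundary set
\[
B_j(\zeta) \coloneqq \Set{i : f(x_i) \in [(1+(1-\alpha)\eps)T_j,\,(1+(1+\alpha)\eps)T_j] \cup [(2-(1+\alpha)\eps)T_j,\,(2-(1-\alpha)\eps)T_j]}
\]
may or may not be included. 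Taking expectation over the level-$\ell$ subsampling, each survivor contributes with probability $2^{-\ell}$, so the $2^{\ell}$ rescaling cancels exactly, giving $\E\wtilde M_j$ sandwiched between $(1-\alpha\eps/2) M_j^{\ast\ast}$ and $(1+\alpha\eps/2)(M_j^{\ast\ast} + \sum_{i \in B_j(\zeta)} f(x_i))$.

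Finally I would average over $\zeta \sim [1/2,1]$. For any fixed $i$, the condition $i \in B_j(\zeta)$ carves out a $\zeta$-subinterval of $[1/2,1]$ of length $O(\alpha\eps)$, so $\E_\zeta \sum_{i \in B_j(\zeta)} f(x_i) \leq O(\alpha\eps) M_j$ once one notes that only elements from $S_j$ or its immediate neighbors $S_{j\pm 1}$ can ever lie in $B_j(\zeta)$. Combined with the analogous bound $\E_\zeta M_j^{\ast\ast} \geq (1-O(\alpha\eps)) M_j$, this yields $\E\wtilde M_j = (1\pm O(\eps)) M_j$ upon choosing $\alpha$ to be a sufficiently small absolute constant. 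The main obstacle I expect is controlling the leakage from the adjacent levels $S_{j\pm 1}$, since the multiplicative heavy-hitter error $\alpha\eps/2$ can push a neighboring-level element across the acceptance boundary; the $\zeta$-averaging argument handles this only because the algorithm uses the shrunken window $[(1+\eps)T_j,(2-\eps)T_j]$ rather than $[T_j,2T_j]$. A secondary subtlety is that the heavy-hitter guarantee applies only to the top $\Theta(L^3/\eps^3)$ elements of $\cD_{\ell}$, so one must verify that all survivors in $S_j^{\ast\ast}$ indeed fall among these top elements, which follows from the size cap $s_j 2^{-\ell} = \Theta(L^2/\eps^2)$ supplied by Lemma~\ref{lem:subsampling}.
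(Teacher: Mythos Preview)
Your overall skeleton is right, but the step you flag as a ``secondary subtlety'' is in fact the crux of the argument, and your justification for it does not work. The heavy-hitter guarantee of Lemma~\ref{lem:heavy-hitter} applies to elements with $f(x_i)\ge \theta\, f(x_{I_\ell})$, so to know that survivors in $S_j$ are recoverable you must bound the \emph{total} mass $f(x_{I_\ell})$ of the sub-stream, which includes contributions from all magnitude levels, not just $S_j$. The size cap $s_j 2^{-\ell}=\Theta(L^2/\eps^2)$ from Lemma~\ref{lem:subsampling} says nothing about this by itself. The paper's proof combines the size cap with the importance condition $s_j/2^j\ge \eps/(2L)$ (from $j\in\cJ$) to obtain $2^j/2^\ell = O(L^3/\eps^3)$, then writes $\E f(x_{I_\ell})=f(x)/2^\ell=(f(x)/2^j)(2^j/2^\ell)=O(L^3/\eps^3)\,T_j$, and finally applies Markov's inequality to get $f(x_{I_\ell})\le O(L^3/\eps^3)\,T_j$ with probability $1-O(1/L)$ per $j$. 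This Markov step is also where part of the ``probability $0.9$'' budget is spent (via a union bound over $j$), and without it the $(1\pm\alpha\eps)$ estimation guarantee cannot be invoked, so your ``always counted / never counted / boundary'' trichotomy collapses.

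A smaller point: your final $\zeta$-averaging step goes beyond what the lemma actually proves. The paper stops at the sandwich $(1-O(\eps))M_j^{\ast\ast}\le \E\wtilde M_j\le (1+O(\eps))M_j^{\ast}$ (the statement contains typos); converting $M_j^{\ast\ast}$ and $M_j^{\ast}$ into $M$ via the randomness of $\zeta$ is deferred to the wrap-up of Theorem~\ref{thm:ell_1_algorithm}, where it is done once globally rather than per level.
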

\begin{proof}
Let $I_\ell\subseteq [N]$ be the set of indices in subsampling level $\ell$, where $\ell$ is found in Step~\ref{alg:deepest level} of Algorithm~\ref{alg:ell_1}. Then
\[
\E f(x_{I_\ell}) = \frac{f(x)}{2^\ell}.
\]
By Lemma~\ref{lem:subsampling} and our choice of $\beta$, we have
\begin{equation}\label{eqn:s_j}
\frac{s_j}{2^\ell} \leq \frac{9}{4}\cdot \frac{L^2}{\eps^2}.
\end{equation}
Together with the assumption that $j\in\cJ$,
\[
2^j\leq \frac{L}{\eps}s_j \leq \frac{9}{4}\cdot \frac{L^3}{\eps^3} 2^\ell,
\]
which implies that (by adjusting constants)
\[
\frac{\eps^3}{L^3}\E f(x_{I_\ell}) \leq \frac{9}{4}\frac{f(x)}{2^j} = \frac{9}{4}\frac{1}{\zeta}T_j\leq \frac{9}{2}T_j.
\]
Except with probability $0.05/L$, we have
\[
\frac{\eps^3}{180L^3} f(x_{I_\ell}) \leq T_j.
\]
Now, let $\theta = \min\{\eps^3/(180C_f L^3),\alpha\eps/4,h^{-1}(\alpha\eps/3)\}$ in Lemma~\ref{lem:heavy-hitter}, we have the guarantees that (1) if $f(x_i)\geq \theta T_j$ then it is estimated up to an additive error of at most 
\[
(\gamma+\theta+\gamma\theta) f(x_i) \leq (\gamma+2\theta)f(x_i) \leq \left(\frac{\alpha\eps}{2} + 2\cdot\frac{\alpha\eps}{4}\right)f(x_i) = \alpha\eps f(x_i),
\]
and (2) if $f(x_i)\leq \theta T_j$ we obtain an estimate at most 
\begin{align*}
C_f\theta (1+\gamma)(1+h(\theta))f(x_{I_\ell}) \leq \frac{\eps^3}{40L^3} (1+\gamma) (1+h(\theta))f(x_{I_\ell}) &\leq (1+\gamma)(1+h(\theta))T_j \\
&\leq \left(1+\frac{\alpha\eps}{2}\right)\left(1+\frac{\alpha\eps}{3}\right) T_j \\
&\leq (1+\alpha\eps) T_j.
\end{align*}
Hence, all survivors in level $S_j^{\ast\ast}$ will be recovered and all survivors in the higher levels will not be mistakenly recovered in level $j$; 
survivors from lower levels will not collude to form a heavy hitter. 

Let $R^{(j)} = \{i_1, \dots, i_s\}$, we have $\lambda^{(j)}_r = (1+O(\eps))f(x_{i_r})$ for all $r\in [s]$. Then
\[
\wtilde{M}_j = (\lambda^{(j)}_1 + \cdots + \lambda^{(j)}_s)2^\ell = (1\pm (O(\eps))\wtilde{M}_j',
\]
where
\[
\wtilde{M}_j' = 2^\ell \sum_{r=1}^s f(x_{i_r})
\]
will be our focus. Combining Lemma~\ref{lem:subsampling} with the recovery guarantee of $\cD_\ell$, we see that all elements in $S_j$ that survives the subsampling at level $\ell$ will be recovered. 
Hence, $\Pr\{i\in R^{(j)}\}\leq 2^{-\ell}$ for $i\in S_j^{\ast}$ (because it may not be recovered in our range) and $\Pr\{i\in R^{(j)}\}= 2^{-\ell}$ for $i\in S_j^{\ast\ast}$ (because if it survives the subsampling it would be recovered). Hence
\[
\E \wtilde{M}_j' = 2^\ell \sum_{i\in S_j^{\ast}} f(x_i) \Pr\{i\in R^{(j)}\} \leq 2^\ell \sum_{i\in S_j^\ast} f(x_i) 2^{-\ell} = M_j^\ast
\]
and
\[
\E \wtilde{M}_j' \geq 2^\ell \sum_{i\in S_j^{\ast\ast}} f(x_i) \Pr\{i\in R^{(j)}\} = 2^\ell \sum_{i\in S_j^{\ast\ast}} f(x_i) 2^{-\ell} = M_j^{\ast\ast}
\]
\end{proof}

\begin{lemma}\label{lem:top_layers}
With probability at least $0.95$ (over the subsamplings), it holds for all $j\leq j_0$ that $(1-O(\eps))M_j^{\ast\ast} \leq \wtilde{M}_j\leq (1+O(\eps))M_j^\ast$. 
\end{lemma}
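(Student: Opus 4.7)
My plan is to show that for the top layers $j \leq j_0$ the analysis reduces almost entirely to the accuracy guarantee of the single \textsf{HeavyHitter} structure $\cD_0$, since no subsampling is involved for these magnitudes. The ``randomness over subsamplings'' in the statement is really just the internal randomness of $\cD_0$.

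First I would observe that every element in $\bigcup_{j \leq j_0} S_j$ is $\theta$-heavy relative to $M$. After the WLOG rescaling $K=2$ that was already used in the setup, an element $i \in S_j$ with $j \leq j_0$ satisfies $f(x_i) > T_j \geq T_{j_0} = \zeta\widehat M / 2^{j_0}$, and since $\zeta \geq 1/2$, $\widehat M \geq M$, and $j_0 = \log(4K\eps^{-3}L^3)$, we get $T_{j_0} = \Omega(\eps^3 M / L^3)$. Choosing the hidden constant in $\theta = \min\{\Theta(\eps^3/(C_f L^3)), \alpha\eps/4, h^{-1}(\alpha\eps/3)\}$ small enough ensures $T_{j_0} \geq \theta M$. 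Also, the number of such heavy indices is at most $M/T_{j_0} = O(L^3/\eps^3)$, so they all lie in the set $\Lambda_0$ of top $\Theta(L^3/\eps^3)$ heavy hitters kept by $\cD_0$. Applying Lemma~\ref{lem:heavy-hitter} with $\gamma = \alpha\eps/2$ and $\delta = 1/(20(\hat L + 1))$, with probability at least $0.95$ the structure $\cD_0$ simultaneously produces $(1\pm\gamma)$-accurate estimates $\tilde f_i$ for every such $i$; condition on this event for the remainder.

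Next, the upper bound. An index contributes to $\wtilde M_j$ only if its reported estimate lies in the window $[(1+\eps) T_j,\, (2-\eps) T_j]$. Combining this with $\tilde f_i \in [(1-\gamma)f(x_i), (1+\gamma)f(x_i)]$, a short calculation yields
\[
f(x_i) \;\geq\; \frac{(1+\eps) T_j}{1+\alpha\eps/2} \;\geq\; (1+(1-\alpha)\eps)\, T_j
\quad\text{and}\quad
f(x_i) \;\leq\; \frac{(2-\eps) T_j}{1-\alpha\eps/2} \;\leq\; (2-(1-\alpha)\eps)\, T_j,
\]
for $\eps$ small enough, so the contributing indices are contained in $S_j^\ast$, giving
\[
\wtilde M_j \;\leq\; \sum_{i\in S_j^\ast}\tilde f_i \;\leq\; (1+\gamma)\,M_j^\ast \;=\; (1+O(\eps))\,M_j^\ast.
\]
For the lower bound, if $i \in S_j^{\ast\ast}$ then $f(x_i) \in [(1+(1+\alpha)\eps)T_j,\, (2-(1+\alpha)\eps)T_j]$, and the symmetric arithmetic shows $\tilde f_i \in [(1+\eps)T_j,\, (2-\eps)T_j]$, so $i$ is selected. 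Hence
\[
\wtilde M_j \;\geq\; \sum_{i\in S_j^{\ast\ast}}\tilde f_i \;\geq\; (1-\gamma)\, M_j^{\ast\ast} \;=\; (1-O(\eps))\, M_j^{\ast\ast}.
\]

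The only subtle point is to verify that the absolute constant $\alpha$ (together with the constant inside $\theta$) can be chosen once and for all so that both sandwich inclusions hold consistently with the $\gamma = \alpha\eps/2$ accuracy of $\cD_0$; this is routine as long as $\alpha < 1$ is taken small enough. Note in particular that no subsampling concentration argument is needed here, and the randomization of $\zeta \sim [1/2,1]$ plays no direct role in the argument for the top layers beyond its appearance in the definitions of $T_j$, $S_j^\ast$, $S_j^{\ast\ast}$; its effect will be harvested later when combining the layerwise bounds to handle boundary leakage between adjacent magnitude levels.
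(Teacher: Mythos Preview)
Your approach matches the paper's: observe that every element in $\bigcup_{j\le j_0}S_j$ is $\theta$-heavy (so $\cD_0$ estimates it to within a $(1\pm\alpha\eps/2)$-factor), count that there are only $O(L^3/\eps^3)$ such elements, and then run the sandwich arithmetic with $S_j^{\ast\ast}\subseteq\{\text{selected}\}\subseteq S_j^\ast$. The paper's proof is in fact terser and simply says ``similar to the preceding lemma,'' so your write-up is more explicit.

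There is one small logical gap in your upper-bound paragraph. You write ``Combining this with $\tilde f_i\in[(1-\gamma)f(x_i),(1+\gamma)f(x_i)]$'' for any index whose estimate lands in the window, but that multiplicative accuracy is only guaranteed by Lemma~\ref{lem:heavy-hitter} for indices with $f(x_i)\ge\theta f(x)$; you have not yet established that every index with estimate in $[(1+\eps)T_j,(2-\eps)T_j]$ is $\theta$-heavy. The missing step is the second guarantee of Lemma~\ref{lem:heavy-hitter}: if $f(x_i)<\theta f(x)$ then $|\tilde f_i|\le C_f\theta(1+\gamma)(1+h(\theta))f(x)$, which with the constant in $\theta=\Theta(\eps^3/(C_fL^3))$ chosen small enough is strictly below $(1+\eps)T_{j_0}$, so such an index can never land in any window $j\le j_0$. (This is exactly the ``survivors from lower levels will not collude to form a heavy hitter'' observation in the paper's proof of the preceding lemma.) Once you add that sentence, the upper-bound argument is airtight.
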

\begin{proof}
The argument is similar to the preceding lemma. Note that there are at most $2^{j_0+1} = 4L^3/\eps^3$ elements of interest in this case, and $\cD_0$ is guaranteed to recover all of them, since 
\[
f(x_i)\geq \xi 2^{-j_0}f(x) \geq \frac{\eps^3}{4L^3}f(x)
\]
and we choose $\theta=\min\{\eps^3/(4 C_f L^3),\alpha\eps/4,h^{-1}(\alpha\eps/3)\}$ for $\cD_0$, where $C$ is an absolute constant. Each $f(x_i)$ is estimated up to an $(1+O(\eps))$-factor.
\end{proof}

\begin{lemma}
With probability at least $0.8$ (over subsamplings) it holds that
\[
(1-O(\eps))\sum_{j\in\cJ} M_j^{\ast\ast} - O(\eps M)\leq \sum_{j\in\cJ} \wtilde{M}_j\leq (1+O(\eps) )M.
\]
\end{lemma}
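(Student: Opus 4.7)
My plan is to split the sum $\sum_{j\in\cJ}\wtilde M_j$ into contributions from the ``top'' levels $j\le j_0$ and the ``deep'' levels $j>j_0$, control each piece using Lemmas~\ref{lem:top_layers} and~\ref{lem:deep_layers} respectively, and glue the two together via a second-moment argument for the deep levels.

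\textbf{Top levels.} For each $j\in\cJ$ with $j\le j_0$ I apply Lemma~\ref{lem:top_layers} directly, yielding $(1-O(\eps))M_j^{\ast\ast}\le\wtilde M_j\le (1+O(\eps))M_j^\ast$ with overall probability at least $0.95$. Summing over these $j$ gives
\[
(1-O(\eps))\sum_{\substack{j\in\cJ\\ j\le j_0}} M_j^{\ast\ast}\ \le\ \sum_{\substack{j\in\cJ\\ j\le j_0}}\wtilde M_j\ \le\ (1+O(\eps))\sum_{\substack{j\in\cJ\\ j\le j_0}} M_j^{\ast}.
\]

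\textbf{Deep levels.} For $j\in\cJ$ with $j>j_0$, Lemma~\ref{lem:deep_layers} only gives the one-sided mean bounds $M_j^{\ast\ast}\le \E\wtilde M_j'\le M_j^{\ast}$ and the sandwich $\wtilde M_j=(1\pm O(\eps))\wtilde M_j'$. To upgrade expectation to a high-probability statement on the sum, I would bound the variance of $\wtilde M_j'$ directly: writing $\wtilde M_j'=2^\ell\sum_i \xi_i f(x_i)$ with $\xi_i\sim\text{Bernoulli}(2^{-\ell})$ and the surviving indices lying in levels close to $S_j$, I get
\[
\Var(\wtilde M_j')\ \le\ 2^{\ell+1} T_j\, M_j,
\]
and Lemma~\ref{lem:subsampling} bounds $2^{\ell}\le O((\eps/L)^2) s_j$, so after using $T_j s_j\le M_j$ I obtain $\Var(\wtilde M_j')\le O(\eps^2/L^2)\,M_j^2$. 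Since $\sum_j M_j^2\le M\cdot\max_j M_j\le M^2$, summing over the deep levels (and bounding cross-covariances between different levels $j$ by Cauchy--Schwarz using the same pointwise variance bound) gives $\Var(\sum_{j>j_0,j\in\cJ}\wtilde M_j')\le O(\eps^2/L^2)\cdot O(L)\cdot M^2 \ll O(\eps^2 M^2)$. Chebyshev with a constant gap then yields, with probability at least $0.9$,
\[
\Bigl|\sum_{\substack{j\in\cJ\\ j>j_0}}\wtilde M_j' - \E\!\!\sum_{\substack{j\in\cJ\\ j>j_0}}\wtilde M_j'\Bigr|\ \le\ O(\eps)\,M.
\]

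\textbf{Combining.} On the intersection of the good events from Lemmas~\ref{lem:subsampling}, \ref{lem:deep_layers}, \ref{lem:top_layers}, the correctness of all invoked \textsf{HeavyHitter} structures, and the Chebyshev event just above (union bounding the failure probabilities to at most $0.2$), I feed the expectation sandwich through:
\[
\sum_{\substack{j\in\cJ\\ j>j_0}}\!\!\wtilde M_j\ =\ (1\pm O(\eps))\!\!\sum_{\substack{j\in\cJ\\ j>j_0}}\!\!\wtilde M_j'\ \in\ \Bigl[(1-O(\eps))\!\!\sum_{\substack{j\in\cJ\\ j>j_0}}\!\! M_j^{\ast\ast}-O(\eps)M,\ (1+O(\eps))\!\!\sum_{\substack{j\in\cJ\\ j>j_0}}\!\! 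M_j^{\ast}+O(\eps)M\Bigr].
\]
Adding the top-level bound and using $\sum_{j\in\cJ} M_j^\ast\le \sum_j M_j\le M$ for the upper bound yields $\sum_{j\in\cJ}\wtilde M_j\le (1+O(\eps))M$, while the lower half gives $\sum_{j\in\cJ}\wtilde M_j\ge (1-O(\eps))\sum_{j\in\cJ}M_j^{\ast\ast}-O(\eps)M$, as required.

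\textbf{Main obstacle.} The delicate step is the variance analysis for the deep levels: even though the per-level variance is clean, the same pairwise-independent subsampling hashes are shared across the different effective levels $\ell=\ell(j)$, so I must either argue pairwise independence still suffices to control cross-covariances or replace the Chebyshev step with a more careful covariance bound that exploits $s_j 2^{-\ell(j)}=\Theta(L^2/\eps^2)$ from Lemma~\ref{lem:subsampling}. Everything else is bookkeeping of constants to drive the final failure probability below $0.2$.
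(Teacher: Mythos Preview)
Your approach is essentially the paper's: split into top ($j\le j_0$) and deep ($j>j_0$) levels, invoke Lemmas~\ref{lem:top_layers} and~\ref{lem:deep_layers}, bound the per-level variance for deep $j$, apply Chebyshev to the sum, and close with $\sum_j M_j^\ast\le M$. Your route to the per-level variance via $\Var(\wtilde M_j')\le 2^{\ell+1}T_jM_j$ together with $2^\ell=O(\eps^2/L^2)\,s_j$ and $T_j s_j\le M_j$ is equivalent to (and arguably cleaner than) the paper's factor-of-two bound $\sum_{i\in S_j^\ast}f(x_i)^2\le (4/|S_j^\ast|)(M_j^\ast)^2$.

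The ``main obstacle'' you flag is not one. The sets $S_j^\ast\subseteq S_j$ are pairwise disjoint across $j$ (each index lies in a single $S_j$), and on the good event of Lemma~\ref{lem:subsampling} the level $\ell(j)$ picked by the algorithm equals a deterministic $\ell^\ast(j)$ determined only by $s_j$. Consequently $\sum_{j>j_0,\,j\in\cJ}\wtilde M_j'$ is a single sum over distinct indices $i$, each term being $2^{\ell^\ast(j(i))}f(x_i)$ times a survival indicator for $i$; pairwise independence of the subsampling hash across distinct indices then makes the cross-covariances vanish, so the variances simply add and you get $\Var\le O(\eps^2/L^2)\sum_j M_j^2\le O(\eps^2/L^2)M^2$ directly. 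No Cauchy--Schwarz workaround, and no stray $O(L)$ factor, is needed --- this is exactly what the paper does implicitly.
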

\begin{proof}
Note that $f(x_i)$ are within a factor of $2$ from each other for $i\in S_j^\ast$, thus
\[
\sum_{i\in S_j^\ast} f(x_i)^2 \leq \frac{4}{|S_j^\ast|}  \left(\sum_{i\in S_j^\ast} f(x_i)\right)^2 = \frac{4}{|S_j^\ast|} (M_j^\ast)^2.
\]
When $j>j_0$ and $j\in \cJ$, we showed that $|S_j^\ast| \geq \eps^{-2}L^2$ (Lemma~\ref{lem:subsampling}), thus
\[
\sum_{i\in S_j^\ast} f(x_i)^2 = O\left(\frac{\eps^2}{L^2} (M_j^\ast)^2\right).
\]
It follows from Chebyshev's inequality that with probability at least $0.95$,
\[
\left|\sum_{j>j_0,j\in\cJ}\wtilde{M}_j - \E\sum_{j>j_0,j\in\cJ}\wtilde{M}_j\right| = O\left(\eps \sum_{j>j_0,j\in\cJ}M_j^\ast\right)
\]
Combining with Lemma~\ref{lem:deep_layers}, we have with probability at least $0.85$,
\[
(1-O(\eps))\sum_{j>j_0,j\in\cJ}M_j^{\ast\ast} - O(\eps)\sum_{j>j_0,j\in\cJ}M_j^\ast \leq 
\sum_{j>j_0,j\in\cJ}\wtilde{M}_j \leq (1+O(\eps)) \sum_{j>j_0,j\in\cJ}M_j^\ast
\]
Further combining with Lemma~\ref{lem:top_layers}, we have with probability at least $0.8$,
\[
(1-O(\eps))\sum_{j\in\cJ}M_j^{\ast\ast} - O(\eps)\sum_{j>j_0,j\in\cJ}M_j^\ast \leq 
\sum_{j\in\cJ}\wtilde{M}_j \leq (1+O(\eps)) \sum_{j\in\cJ}M_j^\ast
\]
The result follows from the observation that $\sum_{j} M_j^\ast\leq M$.
\end{proof}

Note that the levels $j\not\in \cJ$ contribute at most $O(\eps M)$ in expectation to the total norm. By Markov's inequality, except with probability 0.05 (over subsampling), they contribute at most $O(\eps M)$. Combining with the preceding lemma, we have concluded that with probability at least $0.75$,
\[
(1-O(\eps))\sum_{j\geq 1}M_j^{\ast\ast} - O(\eps M) \leq 
\sum_{j\geq 1}\wtilde{M}_j \leq (1+O(\eps)) M.
\]
Over the randomness of $\zeta$, for each $i$, with probability at least $1-O(\eps)$, we have $i\in S_{j'}^{\ast\ast}$ for some $j'$. This implies that 
\[
\E\left(M - \sum_{j\geq 1} M_j^{\ast\ast}\right) = O(\eps M).
\]
By Markov's inequality, we have with probability (over $\zeta$) at least $0.95$ that
\[
(1-O(\eps))M\leq \sum_{j\geq 1} M_j^{\ast\ast} \leq M.
\]
Finally, combining with the failure probability of the \textsf{HeavyHitter} structures,  we conclude that with probability at least $0.7$,
\[
(1-O(\eps))M \leq \sum_{j\geq 1} \wtilde{M}_j  \leq (1+O(\eps))M.
\]

%
%

\subsection{Analysis of Algorithm~\ref{alg:ell_1} with Bad \texorpdfstring{$\widehat{M}$}{hat M}}

We have proved that Algorithm~\ref{alg:ell_1}, when provided a good overestimate $\widehat{M}$, gives a good estimate $\wtilde{M}$ to $M$ in the preceding Section~\ref{sec:correctness}. In this section, we show that the algorithm does not overestimate when $\widehat{M}$ is bad. We follow the notations in the preceding section and assume likewise that $K=2$.

\begin{lemma}\label{lem:ell_1_bigger_overestimate}
Suppose that $\widehat{M} > 2M$. Algorithm~\ref{alg:ell_1} returns an estimate $\wtilde{M}$, which with probability at least $0.7$ (over $\zeta$ and subsampling) satisfies that $\wtilde{M} \leq (1+O(\eps))M$.
\end{lemma}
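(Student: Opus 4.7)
The plan is to reuse the upper bound half of the analysis developed in Section~\ref{sec:correctness}, observing that only the lower bound direction relied on $\widehat{M}$ being comparable to $M$. I adopt the notation from that section (and again assume $K=2$ without loss of generality, so the hypothesis becomes $\widehat{M}>2M$), and track $\wtilde{M}=\sum_j\wtilde{M}_j$. The goal is to establish $\wtilde{M}\leq (1+O(\eps))M$ without invoking $\widehat{M}\leq 2M$ at any step.

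The first ingredient is the structural fact that the windows $[(1+\eps)T_j,(2-\eps)T_j]$ are pairwise disjoint across $j$ (because $T_{j+1}=T_j/2$ makes $(2-\eps)T_{j+1}<(1+\eps)T_j$ for small $\eps$). This forces $\sum_j M_j^\ast \leq \sum_i f(x_i)=M$, irrespective of $\widehat{M}$. The second ingredient is that the per-level upper bound estimates stay valid. For $j\leq j_0$, I check that the heavy-hitter threshold $\theta M$ is small enough to recover every element landing in $[(1+\eps)T_j,(2-\eps)T_j]$: since $\theta=\Theta(\eps^3/L^3)$ and $T_j\geq \zeta\widehat{M}\eps^3/(4L^3)\geq \eps^3 M/(8L^3)$ for $\zeta\geq 1/2$, any such element satisfies $f(x_i)\geq \theta M$, so Lemma~\ref{lem:top_layers} applies verbatim to give $\wtilde{M}_j\leq (1+O(\eps))M_j^\ast$. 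For $j>j_0$, the expectation calculation $\E\wtilde{M}_j\leq M_j^\ast$ inside Lemma~\ref{lem:deep_layers} and the subsampling bound of Lemma~\ref{lem:subsampling} depend only on the values $T_j$ and the sampling rates, not on whether $\widehat{M}$ overestimates $M$; in particular, for levels where no valid subsampling depth $\ell$ is found, $\wtilde{M}_j=0$ and nothing needs to be done.

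Combining the per-level bounds yields $\E\wtilde{M}\leq \sum_j M_j^\ast\leq M$. The Chebyshev step at the end of Section~\ref{sec:correctness} carries over essentially unchanged: each level with a valid $\ell$ has $\Var(\wtilde{M}_j)\leq O(\eps^2/L^2)(M_j^\ast)^2$ (the bound used only that the surviving count is $\Theta(L^2/\eps^2)$ and each element is at most $2T_j$), and levels that output $0$ contribute no variance, so $\Var(\wtilde{M})\leq O(\eps^2/L^2)\sum_j (M_j^\ast)^2\leq O(\eps^2/L^2)\,M\cdot\max_j M_j^\ast\leq O(\eps^2 M^2/L^2)$. Chebyshev then gives $\wtilde{M}\leq M+O(\eps M)=(1+O(\eps))M$ with probability at least $0.9$, and after subtracting the $0.05$ heavy-hitter failure budget and the $0.05$ subsampling failure budget (from Lemma~\ref{lem:subsampling}), the claimed $0.7$ probability remains.

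The one step of Section~\ref{sec:correctness} that genuinely used $\widehat{M}\leq 2M$ is the bound $\sum_{j\notin\cJ}\sum_{i\in S_j} f(x_i)\leq \eps\widehat{M}\leq 2\eps M$ on unimportant-level mass. This is also the step I expect to require the most care to sidestep, but the disjointness observation makes it unnecessary: since $\E\wtilde{M}\leq\sum_j M_j^\ast\leq M$ sums over \emph{all} levels at once, any nonzero contribution from an unimportant level is already absorbed. The lower-bound route through $M_j^{\ast\ast}$ and the randomness of $\zeta$, which really does need $\widehat{M}\approx M$, is simply never invoked in the one-sided conclusion.
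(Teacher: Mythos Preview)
Your approach is different from the paper's, and it has a genuine gap.

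The paper argues by a shift/coupling: pick $j^\ast$ with $\widehat M^\ast=\widehat M/2^{j^\ast}\in[M,2M]$, observe that $T_j(\widehat M)=T_{j-j^\ast}(\widehat M^\ast)$, so under the same randomness the two runs compute literally the same per-level quantities up to an index shift. Theorem~\ref{thm:ell_1_algorithm} then applies to the $\widehat M^\ast$-run; any levels shifted outside the for-loop range are simply omitted, which can only decrease $\wtilde M$.

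Your approach instead tries to re-verify the upper-bound half of Section~\ref{sec:correctness} directly. The problem is your claim that for every $j>j_0$ the bound $\E\wtilde M_j\le M_j^\ast$ from Lemma~\ref{lem:deep_layers} ``depends only on the values $T_j$ and the sampling rates''. That inequality needs the recovered set $R^{(j)}$ to lie inside $S_j^\ast$, i.e., no sub-threshold element ($f(x_i)<\theta f(x_{I_\ell})$) may have its heavy-hitter estimate land in the window $[(1+\eps)T_j,(2-\eps)T_j]$. In the proof of Lemma~\ref{lem:deep_layers} this no-false-positive guarantee is derived from $2^j\le (2L/\eps)s_j$ (which is exactly $j\in\cJ$) together with $s_j 2^{-\ell}\le (9/4)L^2/\eps^2$ from Lemma~\ref{lem:subsampling} (also stated only for $j\in\cJ$), yielding $\theta f(x_{I_\ell})\ll T_j$. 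When $\widehat M\gg M$, the importance condition becomes much harder to meet: a level with content $S_{j'}^\ast$ now sits at index $j=j'+j^\ast$, so $s_j/2^j$ shrinks by a factor $2^{j^\ast}$, and many non-empty levels fall out of $\cJ$ while still admitting a valid $\ell$. For such $j$ you have not ruled out false positives, so $\E\wtilde M_j\le M_j^\ast$ is unjustified; the disjointness observation $\sum_j M_j^\ast\le M$ does not help, because it is the per-level inequality that is missing. The same issue undercuts your variance step, since $|S_j^\ast|\ge L^2/\eps^2$ also came from Lemma~\ref{lem:subsampling} under $j\in\cJ$. The paper's coupling sidesteps all of this by transferring the whole analysis to the $\widehat M^\ast$-run.
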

\begin{proof}
There exists $j^\ast < K$ such that $\widehat{M}^\ast = \widehat{M}/2^{j^\ast} \in [M,2M]$. We compare the behavior of Algorithm~\ref{alg:ell_1} on estimate $\widehat{M}$  and $\widehat{M}^\ast$, under the same randomness in the subsampling functions, heavy hitter structures and $\zeta$. 
Denote the magnitude levels associated with $\widehat{M}^\ast$ by $S^\ast_1, S^\ast_2, \dots$ and the levels associated with $\widehat{M}$ by $S_1,S_2\dots$. It is clear that $S_1 = \cdots = S_{j^\ast} = \emptyset$ and $S_j = S^\ast_{j - j^\ast}$ for $j > j^\ast$. Hence for $j\leq j_0$, we can still recover all items in $S^\ast_1,\dots,S^\ast_{j_0^\ast}$ for $j_0^\ast = j_0 - j^\ast$, that is, all items in $S_1,\dots,S_{j_0^\ast+j^\ast}$. Observe that $j^\ast < \log K$ and so $j_0^\ast + j^\ast < j_0$, and so it is possible that we miss the levels $S_j$ for $j = j_0^\ast+j^\ast+1,\dots,j_0$ since the subsequent for-loop starts with $S_{j_0+1}$. All the recovered levels are within $(1\pm O(\eps))$-factor of their true values, according to the proof of Theorem~\ref{thm:ell_1_algorithm}, with probability at least $0.7$. Therefore, we shall never overestimate, that is, $\wtilde{M} \leq (1+O(\eps))M$.
\end{proof}

\begin{lemma}\label{lem:ell_1_smaller_overestimate}
Suppose that $\widehat{M} < M$. Algorithm~\ref{alg:ell_1} returns an estimate $\wtilde{M}$, which with probability at least $0.7$ (over $\zeta$ and subsampling) satisfies that $\wtilde{M} \leq (1+O(\eps))M$.
\end{lemma}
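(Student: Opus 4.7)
The plan is to mirror the reduction in Lemma~\ref{lem:ell_1_bigger_overestimate}, but in the opposite direction. Let $j^\ast \geq 1$ be the unique integer with $\widehat{M}^\ast := 2^{j^\ast}\widehat{M} \in [M, 2M]$, and compare Algorithm~\ref{alg:ell_1} run with estimate $\widehat{M}$ versus with $\widehat{M}^\ast$ under identical randomness in the subsampling functions, heavy-hitter structures, and $\zeta$. Because $T^\ast_{j'} = \zeta \widehat{M}^\ast/2^{j'} = T_{j' - j^\ast}$, the range $[(1+\eps)T_j, (2-\eps)T_j]$ processed at level $j$ of the $\widehat{M}$-run coincides with the range at level $j + j^\ast$ of the $\widehat{M}^\ast$-run, and likewise $S_j = S^\ast_{j + j^\ast}$. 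Thus the $\widehat{M}$-run is, in spirit, the $\widehat{M}^\ast$-run with its top levels $S^\ast_1, \ldots, S^\ast_{j^\ast}$ stripped away; these correspond precisely to items with $f(x_i) > \widehat{M}$, which lie outside every range of the $\widehat{M}$-run and thus contribute nothing to $\wtilde{M}$.

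Since those excised levels only contribute non-negative mass to $\wtilde{M}^\ast$, and since Theorem~\ref{thm:ell_1_algorithm} gives $\wtilde{M}^\ast \leq (1+O(\eps))M$ with probability at least $0.7$, one expects $\wtilde{M} \leq \wtilde{M}^\ast \leq (1+O(\eps))M$. Two technicalities intervene. First, levels $j$ with $j_0 - j^\ast < j \leq j_0$ of the $\widehat{M}$-run use $\Lambda_0$ directly, while the corresponding levels $j + j^\ast > j_0$ of the $\widehat{M}^\ast$-run use subsampled structures $\Lambda_\ell$, so the contributions are not literally equal. Second, levels $j$ of the $\widehat{M}$-run with $j + j^\ast > L$ have no counterpart at all in the $\widehat{M}^\ast$-run (which processes only $L+1$ levels).

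For the first technicality, we observe that both the $\Lambda_0$-based and the subsampled estimates of a shared range can be bounded above by $(1+O(\eps))$ times the true mass in that range, plus a spurious error. The accurate-recovery part follows from the same argument as in Lemma~\ref{lem:top_layers}, using that items falling in the range have estimates within a $(1+O(\eps))$ factor of their true values. The spurious part, controlled by Lemma~\ref{lem:heavy-hitter}, contributes at most $O(L^2/\eps^2) \cdot C_f \theta (1+O(\eps)) f(x_{I_\ell}) \cdot 2^\ell = O(\eps/L) M$ per level with the parameter choice $\theta = \Theta(\eps^3/L^3)$, summing to $O(\eps)M$ over all levels. For the second technicality, the affected ranges have upper endpoint at most $\widehat{M}/2^{L-1} \leq \eps\widehat{M}/N \leq \eps M/N$, so the total mass of items lying in these ``overflow'' ranges is at most $N \cdot \eps M/N = \eps M$, a negligible additive error.

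Combining these contributions and using that the ranges $[(1+\eps)T_j, (2-\eps)T_j]$ over $j \in \{0, 1, \ldots, L\}$ are pairwise disjoint subsets of $(0, (2-\eps)\zeta\widehat{M}] \subseteq (0, M]$, the total true mass captured is at most $M$, and the overall bound $\wtilde{M} \leq (1+O(\eps))M$ follows. The main obstacle is the bookkeeping in the case analysis of technicality (i), where the two runs process the same range through different sub-structures; the saving grace is that neither branch can overestimate the true mass in its range by more than the $(1+O(\eps))$ multiplicative factor plus the $O(\eps)M/L$ spurious budget, both of which are already accounted for in the upper-bound analysis of Theorem~\ref{thm:ell_1_algorithm}.
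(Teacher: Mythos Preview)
Your proposal is correct and follows the same shifting/comparison argument as the paper: define $\widehat{M}^\ast = 2^{j^\ast}\widehat{M}\in[M,2M]$, observe $S_j = S^\ast_{j+j^\ast}$, and conclude that the $\widehat{M}$-run simply omits the bands $S^\ast_1,\dots,S^\ast_{j^\ast}$ while all remaining recovered levels are still $(1\pm O(\eps))$-accurate by the analysis of Theorem~\ref{thm:ell_1_algorithm}. The paper's own proof is a two-line pointer back to Lemma~\ref{lem:ell_1_bigger_overestimate}; your version is more explicit about the $j_0$-boundary mismatch and the overflow beyond level $L$, which the paper leaves implicit.
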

\begin{proof}
There exists $j^\ast$ such that $2^{j^\ast}\widehat{M} = \widehat{M}^\ast \in [M, 2M]$. Similar to the proof of Lemma~\ref{lem:ell_1_bigger_overestimate}, we compare the behavior of Algorithm~\ref{alg:ell_1} on estimate $\widehat{M}$ and $\widehat{M}^\ast$, under the same randomness in the subsampling functions, heavy hitter structures and $\zeta$. Let $\{S_j^\ast\}$ and $\{S_j\}$ be as defined in the proof of Lemma~\ref{lem:ell_1_bigger_overestimate}. Now we have $S_j = S^\ast_{j+j^\ast}$ and may miss the bands $S^\ast_1,\dots,S^\ast_{j^\ast}$. The rest follows as in Lemma~\ref{lem:ell_1_bigger_overestimate}.
\end{proof}

\section{\texorpdfstring{$\ell_1$}{l1} Subspace Embeddings for i.i.d.\ Random Design Matrices}\label{section:subspace_embeddings_random}

In this section we present oblivious $\ell_1$ subspace embeddings for i.i.d.\ random design matrices. This allows us to achieve a polynomial-sized sketch without paying the general case distortion lower bound of $\Omega(d/\log^2 r)$ of \cite{DBLP:conf/soda/WangW19}. 

In consideration of practical applications, we specifically focus on \emph{heavy-tailed distributions}. In fact, as we will see, these are the most interesting from a theoretical perspective as well. Our model for our heavy-tailed distributions will be symmetric power law distributions of index $p$, which are distributions that satisfy
\[
    1 - F(x) \sim cx^{-p}
\]
for a constant $c$. In the literature, works such as \cite{DBLP:conf/nips/ZhangZ18, balkema2018linear} have considered linear regression in the $\ell_1$ norm with heavy tailed i.i.d.\ design matrices. 

\begin{table}[ht]
    \centering
    \begin{tabular}{|c|c|c|}
        \hline
        $p$ & Distortion upper bound & Distortion lower bound \\
        \hline
        $p\in (0,1)$ & $O(1)$ (Theorem \ref{thm:subspace-embedding-p<1}) & $1$ \\
        \hline
        $p = 1$ & $O\parens*{\frac{\log n}{\log(r/d^2\log d)}}$ (Theorem \ref{thm:sketch-cauchy}) & $\Omega\parens*{\frac{\log n}{\log r}}$ (Theorem \ref{thm:lower-bound-cauchy}) \\
        \hline
        $p\in (1, 2)$ & \begin{tabular}{c}
            $O\parens*{\frac{d^{1/p}}{(r/d^2)^{1-1/p}}}$, $n^{1-1/p}>d^{1/p}\log d$ (Theorem \ref{thm:1<p<2-large-n}) \\
            $O\parens*{\frac{d^{1/p}\log d}{(r/d^2\log d)^{1-1/p}}}$, $n^{1-1/p}\leq d^{1/p}\log d$ (Theorem \ref{thm:p-1-2-countsketch})
        \end{tabular} & $\Omega\parens*{\frac{d^{1/p}}{r^{1-1/p}}}$ (Theorem \ref{thm:lower-bound-p-stable}) \\
        \hline
        $p\geq 2$ & $1+\eps$ (Theorem \ref{thm:p>=2-upper-bound}) & $1$ \\
        \hline
    \end{tabular}
    \caption{Results for i.i.d.\ symmetric power law design matrices}
    \label{tab:my_label}
\end{table}

Throughout this section, let $\mathcal D$ be a symmetric power law distribution with index $p\geq 0$, and let $\bfA\sim \mathcal D^{n\times d}$ be a matrix drawn with i.i.d.\ entries drawn from $\mathcal D$, unless noted otherwise. 

\subsection{Setup for analysis}

Let $\bfv\in\mathbb R^n$ be a vector. We will frequently refer to the $k$th level set $\bfv_{(k)}$ of $\bfv$, which takes on the values of $\bfv$ whenever it has absolute value in $[2^k, 2^{k+1})$, and $0$ otherwise. 
\begin{dfn}[Level sets of a vector]
We define the $k$th level set $\bfv_{(k)}$ of $\bfv\in\mathbb R^n$ coordinate-wise by
\[
    \bfe_i^\top\bfv_{(k)} \coloneqq 
    \begin{cases}
        \bfe_i^\top\bfv & \text{if $\abs{\bfe_i^\top\bfv}\in [2^k, 2^{k+1})$} \\
        0 & \text{otherwise}
    \end{cases}.
\]
For $k = 0$, we set
\[
    \bfe_i^\top\bfv_{(0)} \coloneqq 
    \begin{cases}
        \bfe_i^\top\bfv & \text{if $\abs{\bfe_i^\top\bfv}\in [0, 2)$} \\
        0 & \text{otherwise}
    \end{cases}.
\]
\end{dfn}

We will repeatedly make use of the following simple lemmas about \textsf{CountSketch} and symmetric power law distributions. 

\begin{lem}[No expansion]\label{lem:no-expansion}
Let $\bfS$ be drawn as an $r\times n$ \textsf{CountSketch} matrix with random signs $\sigma:[n]\to\{\pm1\}$ and hash functions $h:[n]\to[r]$. Then for all $\bfv\in\mathbb R^n$,
\[
    \norm{\bfS\bfv}_1\leq \norm{\bfv}_1. 
\]
\end{lem}
\begin{proof}
\[
    \norm{\bfS\bfv}_1 = \sum_{i=1}^r\Abs{\sum_{j: h(j) = i}^n \sigma_i v_i}\leq \sum_{i=1}^r \sum_{j:h(j)=i}^n \abs{\sigma_i v_i} = \sum_{j=1}^n \abs{v_i} = \norm{\bfv}_1\qedhere
\]
\end{proof}

\begin{lem}\label{lem:chernoff-power-law-levels}
Let $\mathcal D$ be a symmetric power law distribution with index $p > 0$. Then, for $k$ a large enough constant depending on $\mathcal D$,
\[
    \Pr_{X\sim\mathcal D}\parens*{\abs{X}\in [2^k, 2^{k+1})} = \Theta(2^{-kp})
\]
and
\[
    \Pr_{\bfv\sim \mathcal D^n}\parens*{ \norm{\bfv_{(k)}}_0 = \Theta(n2^{-kp})} \geq 1 - \exp\parens*{\Theta(n2^{-kp})}
\]
\end{lem}
\begin{proof}
For a large enough $k$, we have that
\[
    \Pr_{X\sim\mathcal D}\parens*{\abs{X}\in [2^k, 2^{k+1})} = \overline F(2^{k}) - \overline F(2^k) = \Theta\parens*{\frac1{2^{{k}p}} - \frac1{2^{{(k+1)}p}}} = \Theta(2^{-kp}).
\]
Then in expectation,
\[
    \norm{\bfv_{(k)}}_0 = \Theta(n2^{-kp})
\]
so we conclude by Chernoff bounds.
\end{proof}

\begin{lem}\label{lem:power-law-union-bound}
Let $\mathcal D$ be a symmetric power law distribution with index $p > 0$. Then,
\[
    \Pr_{\bfA\sim\mathcal D^{n\times d}}\parens*{\norm{\bfA}_\infty \leq O\parens*{\parens*{nd/\delta}^{1/p}}}\geq 1 - \delta
\]
\end{lem}
\begin{proof}
Each entry is at most $O((nd/\delta)^{1/p})$ with probability at least $\delta/nd$, so we conclude by a union bound over the $nd$ entries. 
\end{proof}

\begin{dfn}[Truncation]
For $T>0$ and $x\in\mathbb R$, define
\[
    \mathsf{trunc}_T(x)\coloneqq \begin{cases}
        x & \text{$\abs{x}\leq T$} \\
        0 & \text{otherwise} 
    \end{cases}.
\]
For a distribution $\mathcal D$, we define $\mathsf{trunc}_T(\mathcal D)$ to be the distribution that draws $\mathsf{trunc}_T(X)$ for $X\sim\mathcal D$. 
\end{dfn}

\begin{lem}[Moments of truncated power laws]\label{lem:truncated-p-moments}
Let $\mathcal D$ be a power law distribution with index $p>0$. Let $T>0$ be sufficiently large. Then
\begin{align*}
    \E_{X\sim\mathsf{trunc}_T(\mathcal D)} \abs{X} &= \begin{cases}
        \Theta(T^{1-p}) & \text{if $p \in (0,1)$} \\
        \Theta(\log T) & \text{if $p = 1$} \\
        \Theta(1) & \text{if $p > 1$}
    \end{cases} \\
    \E_{X\sim\mathsf{trunc}_T(\mathcal D)} X^2 &= \begin{cases}
        \Theta(T^{2-p}) & \text{if $p \in (0,2)$} \\
        \Theta(\log T) & \text{if $p = 2$} \\
        \Theta(1) & \text{if $p > 2$}
    \end{cases} \\
\end{align*}
\end{lem}
\begin{proof}
The proof is deferred to Appendix \ref{section:appendix:subspace_embeddings_random}. 
\end{proof}

\begin{dfn}\label{def:AH-AL}
Let $\bfA\in\mathbb R^{n\times d}$ and let $T > 0$. Then, we write $\bfA = \bfA^H + \bfA^L$ where $\bfA^H$ is the submatrix of $\bfA$ formed by the rows containing an entry with absolute value at least $T$, and $\bfA^L$ is the rest of the rows. 
\end{dfn}

\subsection{Algorithms for \texorpdfstring{$p<1$}{p < 1}}
We first present the results that for tails that are very heavy admit $O(1)$ distortion embeddings in $\poly(d)$ dimensions for a very simple reason: when $p < 1$, then the largest entry in every vector is a good approximation of the entire $\ell_1$ mass of the vector. 

\begin{thm}\label{thm:subspace-embedding-p<1}
Let $\mathcal D$ be a symmetric power law distribution with index $p\in(0, 1)$. Let $\bfS$ be drawn as a \textsf{CountSketch} matrix with $r = O(d^2\log^2 d)$ rows. Then
\[
    \Pr_{\bfS, \bfA}\parens*{\Omega(1)\norm{\bfA\bfx}_1\leq \norm{\bfS\bfA\bfx}_1\leq \norm{\bfA\bfx}_1, \forall \bfx\in\mathbb R^d}\geq \frac{99}{100}.
\]
\end{thm}
\begin{proof}
The proof proceeds similarly to the case of $p = 1$, and is deferred to Appendix \ref{section:appendix:subspace_embeddings_random}.
\end{proof}

Thus, we focus on the regime of $p\geq 1$. 

\subsection{Algorithms for \texorpdfstring{$p = 1$}{p = 1}}

In this section, we prove the following:

\begin{thm}\label{thm:sketch-cauchy}
Let $\mathcal D$ be a symmetric power law distribution with index $p = 1$. Let $\bfS\in\mathbb R^{r\times n}$ be drawn as a \textsf{CountSketch} matrix. Then, for any $C (d\log d)^2\leq r\leq o(\sqrt n)$ for $C$ a large enough constant, we have
\[
    \Pr_{\bfA\sim\mathcal D^{n\times d}}\braces*{\frac1\kappa\norm{\bfA\bfx}_1\leq\norm{\bfS\bfA\bfx}_1\leq\norm{\bfA\bfx}_1, \forall \bfx\in\mathbb R^d} \geq \frac{99}{100}
\]
for
\[
    \kappa = O\parens*{\frac{\log n}{\log(r/d^2\log d)}}.
\]
\end{thm}

The idea is that with $r$ rows of \textsf{CountSketch}, we can preserve the top $r$ entries of $\bfA\bfx$, which has mass approximately $\Omega(n\log r)\norm{\bfx}_1$, while the rest of the entries have mass at most $O(n\log n)\norm{\bfx}_1$. We formalize this idea in the following several lemmas. 

\begin{lem}[Mass of small entries]\label{lem:p=1-small-entries}
Let $\mathcal D$ be a power law distribution with index $p = 1$ and let $\bfA\sim \mathcal D^{n\times d}$. Let $\poly(d)\leq T\leq n$ and let $\bfA = \bfA^H + \bfA^L$ as in Definition \ref{def:AH-AL}. Then,
\[
    \Pr\parens*{\norm{\bfA^L\bfx}_1\leq O(n\log T)\norm{\bfx}_1, \forall \bfx\in\mathbb R^d}\geq 0.99
\]
\end{lem}
\begin{proof}
Note that $\bfA^L$ is drawn i.i.d.\ from $\mathsf{trunc}_T(\mathcal D)$ so by Lemma \ref{lem:truncated-p-moments}, it has entries with first two moments
\begin{align*}
    \mu &\coloneqq \E_{X\sim\mathcal D}\parens*{\abs{X}\mid \abs{X}\leq T} = \Theta(\log T) \\
    \sigma^2 &\coloneqq \E_{X\sim\mathcal D}\parens*{X^2\mid \abs{X}\leq T} = \Theta(T)
\end{align*}
Then for a single column $\bfA^L\bfe_j$ for $j\in[d]$, by Bernstein's inequality,
\[
    \Pr\parens*{\norm{\bfA^L\bfe_j}_1 \geq 2\mu n}\leq \exp\parens*{-\frac12\frac{(\mu n)^2}{\sigma^2 n + \mu n T/3}}\leq \exp(-\Omega(\log T)) = \frac1{\poly(T)}.
\]
Then since $T \geq \poly(d)$, we may union bound over the $d$ columns so that $\norm{\bfA^L\bfe_j}_1 = O(\mu n) = O(n\log T)$ for all columns $j\in[d]$ with probability at least $1 - 1/\poly(d)$. Conditioned on this event, we have for all $\bfx\in\mathbb R^d$ that
\[
    \norm{\bfA^L\bfx}_1\leq \sum_{j=1}^d \abs{\bfx_j}\norm{\bfA^L\bfe_j}_1 = O(n\log T)\norm{\bfx}_1.\qedhere
\]
\end{proof}

\begin{lem}[Unique hashing of large entry rows]\label{lem:p=1-unique-hashing-rows}
Let $\mathcal D$ be a power law distribution with index $p>0$ and cdf $F$, and let $\bfA\sim \mathcal D^{n\times d}$. Let $t$ and $r$ be parameters such that $r \geq C(d\log d)^2$ for a sufficiently large constant $C$ and $r = o(\sqrt n)$, and define 
\begin{align*}
    \tau_1 &\coloneqq F^{-1}\parens*{1-\frac{C'\log d}{n}} = \Theta((n/\log d)^{1/p}) \\
    \tau_2 &\coloneqq F^{-1}\parens*{1-\frac{r/d^2\log d}{n}} = \Theta((nd^2\log d/r)^{1/p}) \\
    \mathcal R_1 &\coloneqq \braces*{i\in[n] : \exists j\in[d], \abs{\bfe_i^\top\bfA\bfe_j} > \tau_1} \\
    \mathcal R_2 &\coloneqq \braces*{i\in[n] : \exists j\in[d], \abs{\bfe_i^\top\bfA\bfe_j} > \tau_2}
\end{align*}
for a sufficiently large constant $C'$. Then if we hash each row of $\bfA$ into $O(r)$ hash buckets, with probability at least $0.95$:
\begin{itemize}
    \item Every row of $\mathcal R_1$ is hashed to a bucket with no other row from $\mathcal R_2$.
    \item For every column $j\in[d]$ and every integer $\log_2 \tau_2\leq k\leq \log_2 \tau_1$ has $\Theta(n/2^{kp})$ rows with a large entry with absolute value in $[2^k, 2^{k+1})$ that are hashed to a bucket with no other row from $\mathcal R_2$.
    \item Let $\mathcal R$ be the set of rows which are hashed with no other row from $\mathcal R_2$, which we refer to as \emph{uniquely hashed} rows. Then $\abs{\mathcal R} = \Theta(r/d\log d)$ and every one of these large entries is on a distinct row.
\end{itemize}
\end{lem}
\begin{proof}
For each $j\in[d]$, the number of expected entries in the $j$th column with absolute value at least $\tau_1$ is $C'\log d$, so by Chernoff bounds, with probability at least $1 - \exp(\Theta(\log d)) = 1 - 1/\poly(d)$, there are at most $O(\log d)$ such entries. By a union bound over the $d$ columns, this is true for all $d$ columns with probability at least $0.99$.

Similarly, for $\log_2\tau_2\leq k\leq \log_2\tau_1$, we have by Lemma \ref{lem:chernoff-power-law-levels} that 
\[
    \Pr\parens*{\norm{(\bfA\bfe_j)_{(k)}}_0 = \Theta(n/2^{kp})} \geq 1 - \exp\parens*{\Theta(n/2^{kp})}.
\]
Then summing over $k$, we have that
\begin{align*}
    \Pr\parens*{\bigcap_{k=\log_2\tau_2}^{\log_2\tau_1} \braces*{\norm{(\bfA\bfe_j)_{(k)}}_0 = \Theta(n/2^{kp})}} &\geq 1 - \sum_{k=\log_2\tau_2}^{\log_2\tau_1}\exp(\Theta(n/2^{kp})) \\
    &\geq 1 - \exp(\Theta(C'\log d)) = 1 - \frac1{\poly(d)}.
\end{align*}
By a union bound over the $d$ columns, every large entry level set of every column $\bfA\bfe_j$ has the expected number of elements, up to constant factors, simultaneously with probability at least $99/100$. Conditioned on this event, $\abs{\mathcal R_1} = O(d\log d)$. 

Note that across the $d$ columns, there are $\Theta(r/d\log d)$ rows corresponding to level sets $k$ for $\log_2 \tau_2\leq k\leq \log_2 \tau_1$. Then with $O(r)$ hash buckets, each pair of rows from $\mathcal R_1\times\mathcal R_2$ is hashed to a separate bucket with probability $O(1/\abs{\mathcal R_1\times\mathcal R_2}) = O(1/r)$, so every row in $\mathcal R_1$ is uniquely hashed with probability $0.99$ by a union bound. Furthermore, with $O(r) = \omega(r/d\log d)$ hash buckets, we have by Lemma \ref{lem:concentration-unique-hashing} that for each level set $\bfv_{(k)}$, half of the $\Theta(n/2^{kp})$ rows in the $k$th level set get hashed to a bucket with no other row from $\mathcal R$ with probability at least 
\[
    1 - 2\exp\parens*{\frac{(1/2)^2}{12}\Theta(n/2^{kp})} = 1 - \exp\parens*{\Theta(n/2^{kp})}.
\]
Then again by a union bound over the level sets and columns, every large entry level set of every column has at least half of their rows hashed with no other row from $\mathcal R$, simultaneously with probability at least $99/100$. 

The probability that any two of the large entries lie on the same row is $O(\abs{\mathcal R_2}^2/n) = o(1)$. Then by a union bound, the total success probability for the entire lemma is at least $0.95$. 
\end{proof}

We apply the lemma above to show that when we write $\bfA = \bfA^H + \bfA^L$ as in Definition \ref{def:AH-AL}, then $\norm{\bfS\bfA^H\bfx}_1 = \Omega(n\log(r/d))\norm{\bfx}_1$ for all $\bfx\in\mathbb R^d$ when we choose $T = nd^2\log d/r$. 

\begin{lem}[Mass of large entries]\label{lem:p=1-large-entry-lower-bound}
Let Let $\bfA = \bfA^H + \bfA^L$ as in Definition \ref{def:AH-AL} with $T = nd^2\log d/r$. Let $\bfS$ be a \textsf{CountSketch} matrix with $r$ rows. Then with probability at least $0.95$, 
\[
    \norm{\bfS\bfA^H\bfx}_1\geq \norm{\bfS\bfB'\bfx}_1 \geq \Omega(\norm{\bfA^H\bfx}_1) \geq \Omega(n\log(r/d^2\log d))\norm{\bfx}_1
\]
for all $\bfx\in\mathbb R^d$, where $\bfB'$ is the subset of uniquely hashed rows of $\bfA^H$ given by Lemma \ref{lem:p=1-unique-hashing-rows}. 
\end{lem}
\begin{proof}
Let $\bfB'$ be the subset of rows of $\bfA^H$ given by Lemma \ref{lem:p=1-unique-hashing-rows} that are hashed to locations without any other rows of $\bfA^H$. Recall also $\tau_1$ and $\tau_2$ from the lemma. 

We first have that $\norm{\bfS\bfB'\bfx}_1 = \Omega(\norm{\bfA^H\bfx}_1)$ since the rows containing entries larger than $\tau_1$ are perfectly hashed, while rows containing entries between $\tau_2$ and $\tau_1$ are preserved up to constant factors. 

Let $\bfB' = \bfB_{>T}' + \bfB_{\leq T}'$ where $\bfB'_{>T}$ contains the entries of $\bfB'$ that have absolute value greater than $T$ and $\bfB'_{\leq T}$ contains the rest of the entries. Note then that $\bfB'_{>T}$ has at most one nonzero entry per row, and $\bfB'_{\leq T}$ has at most $O(d\cdot r/d\log d) = O(r/\log d)$ nonzero entries and thus by Lemma \ref{lem:power-law-union-bound}, $\norm{\bfB'_{\leq T}}_\infty\leq O(r)$ with probability at least $0.99$. We condition on this event. Then for all $\bfx$, 
\begin{align*}
    \norm{\bfS\bfA^H\bfx}_1 &\geq \norm{\bfS\bfB'\bfx}_1 \\
    &\geq \norm{\bfS\bfB_{> T}'\bfx}_1 - \norm{\bfS\bfB_{\leq T}'\bfx}_1 \\
    &= \sum_{j=1}^d \abs{\bfx_j}\norm{\bfB_{>T}'\bfe_j}_1 - \norm{\bfS\bfB_{\leq T}'\bfx}_1 && \text{Since the $\bfB_{>T}'\bfe_j$ have disjoint support} \\
    &\geq \sum_{j=1}^d \abs{\bfx_j}\sum_{k=\log_2 \tau_2}^{\log_2 \tau_1}2^k\Theta(n/2^k) - \norm{\bfB_{\leq T}'\bfx}_1 && \text{Lemmas \ref{lem:p=1-unique-hashing-rows} and \ref{lem:no-expansion}} \\
    &= \Omega(n(\log_2 \tau_1 - \log_2 \tau_2))\norm{\bfx}_1 - O(r)\norm{\bfB_{\leq T}'}_\infty\norm{\bfx}_1 && \text{H\"older's inequality} \\
    &= \Omega(n\log (r/d^2\log d))\norm{\bfx}_1 - O(r^2)\norm{\bfx}_1 \\
    &= \Omega(n\log (r/d^2\log d))\norm{\bfx}_1
\end{align*}
as desired. 
\end{proof}

The last thing we need to bound is the mass contribution of the rows of $\bfA^L$ that are hashed together with the uniquely hashed rows of $\bfA^H$. We first bound the columns of the matrix $\bfS'\bfA^L$, where $\bfS'$ is a subset of hash buckets. 

\begin{lem}\label{lem:0<p<2:small-hash-bucket-columns}
Let $\mathcal D$ be a symmetric power law distribution with index $p\in(0, 2)$ with cdf $F$. Let $T\coloneqq F^{-1}(1-r/nd^2\log d) = \Theta((nd^2\log d/r)^{1/p})$, $r'<r$. Let $\bfS'$ be a subset of $r'$ rows of a $r\times n$ \textsf{CountSketch} matrix $\bfS$. Let $\bfC\sim \mathsf{trunc}_T(\mathcal D)^{n\times d}$. Then for each $j\in [d]$,
\[
    \Pr\parens*{\norm{\bfS'\bfC\bfe_j}_1 \leq O\parens*{r' + \lambda\sqrt{r'}}(d^2\log d)^{1/p - 1/2}(n/r)^{1/p}} \geq 1 - \frac1\lambda.
\]
\end{lem}
\begin{proof}
The proof is just a second moment bound and is deferred to Appendix \ref{section:appendix:subspace_embeddings_random}.
\end{proof}

Given the above bounds, the rest of the proof is just H\"older's inequality.

\begin{lem}\label{lem:p=1-small-SC1x}
Let $r \geq C (d\log d)^2$ for a large enough constant $C$. Let $\bfA = \bfA^H + \bfA^L$ as in Definition \ref{def:AH-AL} with $T = (nd^2\log d/r)^{1/p}$. Let $\bfS$ be a \textsf{CountSketch} matrix with $r$ rows. Let $\bfA^L = \bfC_1 + \bfC_2$, where $\bfC_1$ is the submatrix formed by the rows of $\bfA^L$ that are hashed together with the uniquely hashed rows of $\bfA^H$ by $\bfS$ (c.f.\ Lemma \ref{lem:p=1-unique-hashing-rows}), and $\bfC_2$ is the submatrix formed by the rest of the rows. Then with probability at least $0.99$, for all $\bfx\in\mathbb R^d$,
\[
    \norm{\bfS\bfC_1\bfx}_1 \leq O\parens*{\frac{1}{\sqrt{\log d}}\frac{n^{1/p}}{(r/d^2\log d)^{1/p-1}}}\norm{\bfx}_1.
\]
\end{lem}
\begin{proof}
Let $\bfS'$ be the submatrix of $\bfS$ formed by the set of $r'$ uniquely hashed rows from Lemma \ref{lem:p=1-unique-hashing-rows}, with $r' = O(r/d\log d)$. Setting $\lambda = 100 d$, we have that 
\[
    r' + \lambda\sqrt{r'} = O(r') = O\parens*{\frac{r}{d\log d}}
\]
so
\begin{align*}
    \Pr\parens*{\norm{\bfS\bfC_1\bfe_j}_1 \leq O\parens*{r'}\parens{d^2\log d}^{1/p-1/2}\parens*{\frac{n}{r}}^{1/p}} &= \Pr\parens*{\norm{\bfS'\bfA^L\bfe_j}_1\leq O\parens*{\frac{1}{\sqrt{\log d}}\frac{n^{1/p}}{(r/d^2\log d)^{1/p-1}}}} \\
    &\geq 1 - \frac1{100d}.
\end{align*}
By a union bound over the $d$ columns, this is true for all $k\in[d]$ with probability at least $0.99$. Conditioned on this event, we have by the triangle inequality that
\[
    \norm{\bfS\bfC_1\bfx}_1 \leq \sum_{k=1}^d \abs{\bfx_k} \sum_{i\in S}Y_{i,k} \leq O\parens*{\frac{1}{\sqrt{\log d}}\frac{n^{1/p}}{(r/d^2\log d)^{1/p-1}}}\norm{\bfx}_1
\]
as desired. 
\end{proof}

With the above lemmas in place, we prove Theorem \ref{thm:sketch-cauchy}.

\begin{proof}[Proof of Theorem \ref{thm:sketch-cauchy}]
The ``no dilation'' bound is just Lemma \ref{lem:no-expansion}. We thus focus on the ``no contraction'' bound. 

We condition on the results of Lemmas \ref{lem:p=1-small-entries}, \ref{lem:p=1-large-entry-lower-bound}, and \ref{lem:p=1-small-SC1x}. By a union bound, these all hold simultaneously with probability at least $0.9$. Then, we have for all $\bfx$ that
\[
    \frac1{\kappa} \geq \frac{\norm{\bfS\bfA\bfx}_1}{\norm{\bfA\bfx}_1}\geq \frac{\norm{\bfS\bfA^H\bfx}_1 - \norm{\bfS\bfC_1\bfx}_1}{\norm{\bfA^H\bfx}_1 + \norm{\bfA^L\bfx}_1} \geq \frac{\Omega(\norm{\bfA^H\bfx}_1 + n\log (r/d^2\log d))\norm{\bfx}_1}{O(\norm{\bfA^H\bfx}_1 + n\log(nd^2\log d/r))\norm{\bfx}_1} \geq \Omega\parens*{\frac{\log (r/d^2\log d)}{\log n}}.\qedhere
\]
\end{proof}

\subsection{Algorithms for \texorpdfstring{$p\in(1,2)$}{p in (1,2)}}

For power law distributions with index $p\in (1,2)$, we need different algorithms based on the parameter regime: when $n$ is rather large, then the distribution looks relatively flat so that sampling is approximately optimal, while when $n$ is rather small, then the variance is large enough so that \textsf{CountSketch} helps capture and preserve large values that make up a significant fraction of the mass. 

\subsubsection{Large \texorpdfstring{$n$}{n}: sampling}
When $n$ is large, we shall see that by concentration, sampling alone will give us nearly tight distortion bounds. 

We first prove concentration in the upper tail.

\begin{lem}[Upper tail concentration]\label{lem:sampling-upper-bound}
Let $\mathcal D$ be a power law distribution with index $p\in (1,2)$ and let $\bfA\sim \mathcal D^{n\times d}$. Then,
\[
    \Pr\braces*{\norm{\bfA\bfx}_1\leq  O\parens*{1+\frac{d^{1/p}\log d}{n^{1-1/p}}}\norm{\bfx}_1 n, \forall \bfx\in\mathbb R^d}\geq 0.99. 
\]
\end{lem}
\begin{proof}
By a union bound, $\norm{\bfA}_\infty\leq B = O((nd)^{1/p})$ with probability at least $0.999$. Conditioned on this event, each entry of $\bfA$ is distributed as $\mathsf{trunc}_B(\mathcal D)$. Note that for a random variable $X\sim \mathsf{trunc}_B(\mathcal D)$, we have by Lemma \ref{lem:truncated-p-moments} that
\begin{eqn*}
    \E\abs{X} &= \Theta(1) \\
    \E\abs{X}^2 &= \Theta(B^{2-p})
\end{eqn*}
Now let $\bfv\sim \mathsf{trunc}_B(\mathcal D)^n$. By the upper tail Bernstein bound,
\[
    \Pr\braces*{\norm{\bfv}_1-n\E\abs{X}\geq \lambda}\leq \exp\parens*{-\frac12\frac{\lambda^2}{B^{2-p}n + B\lambda/3}}.
\]
Then with $\lambda = \kappa n$ for
\[
    \kappa = \frac{d^{1/p}\log d}{n^{1-1/p}},
\]
we have
\[
    \lambda = \frac{d^{1/p}\log d}{n^{1-1/p}}n =  (nd)^{1/p}\log d\geq B\log d
\]
and
\[
    \lambda^2 = \frac{d^{2/p}\log^2 d}{n^{2-2/p}}n^2 = (nd)^{2/p}\log^2 d = (nd)^{\frac1p(2-p)}(nd)\log^2 d\geq B^{2-p}n\log d
\]
so we have that with probability at least $1-1/\poly(d)$,
\[
    \norm{\bfv}_1-n\E\abs{X}\leq \lambda \implies \norm{\bfv}_1\leq n\E\abs{X} + \kappa n = \Theta\parens*{1 + \frac{d^{1/p}\log d}{n^{1-1/p}}}n.
\]
By a union bound over the $d$ columns of $\bfA$, this holds simultaneously for all columns of $\bfA$ with probability at least $1-1/\poly(d)$. We condition on this event. It then follows that for every $\bfx\in\mathbb R^d$,
\[
    \norm{\bfA\bfx}\leq \norm{\bfx}_1\max_{j=1}^d \norm{\bfA\bfe_j}_1\leq \Theta\parens*{1 + \frac{d^{1/p}\log d}{n^{1-1/p}}}n. \qedhere
\]
\end{proof}

To prove concentration in the lower tail, we first need the following lemma.
\begin{lem}\label{lem:1<p<2-constant-p-stability}
Let $\mathcal D$ be a symmetric power law distribution with index $p\in (1,2)$. Let $\bfx\in\mathbb R^d$. Then
\begin{equation}\label{eqn:1<p<2-constant-p-stability}
    \Pr_{\bfv\sim\mathcal D^d}\parens*{\abs{\angle{\bfv,\bfx}} \geq \Omega(\norm{\bfx}_p)} = \Omega(1).
\end{equation}
\end{lem}
\begin{proof}
The proof is distracting from this discussion and is deferred to Appendix \ref{section:appendix:subspace_embeddings_random}.
\end{proof}

\begin{lem}[Lower tail concentration]\label{lem:sampling-lower-bound}
Let $\mathcal D$ be a symmetric power law distribution with index $p\in (1,2)$. Let $n\geq d\log d$. Then,
\[
    \Pr_{\bfA\sim\mathcal D^{n\times d}}\braces*{\norm{\bfA\bfx}_1\geq  \Omega(n\norm{\bfx}_p), \forall \bfx\in\mathbb R^d}\geq 0.99. 
\]
\end{lem}
\begin{proof}
Let $\bfx\in\mathbb R^d$ with $\norm{\bfx}_1 = 1$ and let $\bfv\sim\mathcal D^d$. Then by Lemma \ref{lem:1<p<2-constant-p-stability},
\[
    \Pr_{\bfv\sim\mathcal D^d}\parens*{\abs{\angle{\bfv,\bfx}} \geq \Omega(\norm{\bfx}_p)} = \Omega(1).
\]
Then by Chernoff bounds, at least $\Omega(n)$ of the $n$ rows of $\bfA\bfx$ are at least $\Omega(\norm{\bfx}_p)$ with probability at least $1-\exp(n) = 1 - \exp(d\log d)$. We conclude by a standard net argument. 
\end{proof}

We put the above two parts together for a sketching algorithm based on sampling.

\begin{thm}\label{thm:1<p<2-large-n}
Let $n\geq d\log d$ and let $\bfA$ be drawn as an $n\times d$ matrix of i.i.d.\ draws from a $p$-stable distribution. Let
\begin{eqn*}
    \kappa_n &= \Theta\parens*{1+\frac{d^{1/p}\log d}{n^{1-1/p}}} \\
    \kappa_r &= \Theta\parens*{1+\frac{d^{1/p}\log d}{r^{1-1/p}}}
\end{eqn*}
be the distortion upper bound from Lemma \ref{lem:sampling-upper-bound} for when the number of rows is $n$ and $r$, respectively. Let $\bfS\in\mathbb R^{r\times n}$ be the matrix that samples $r$ rows of $\bfA$, and then scales by $\kappa_n d^{1-1/p} n/r$. Then,
\[
    \Pr\braces*{\norm{\bfA\bfx}_1\leq \norm{\bfS\bfA\bfx}_1 \leq \kappa_n \kappa_r d^{2(1-1/p)}\norm{\bfA\bfx}_1, \forall \bfx\in\mathbb R^d}\geq 0.9
\]

In particular, if
\[
    n^{1-1/p}\geq d^{1/p}\log d\iff n\geq d^{\frac1{p-1}}\log^{\frac{p}{p-1}} d,
\]
then
\[
    \Pr\braces*{\norm{\bfA\bfx}_1\leq \norm{\bfS\bfA\bfx}_1 \leq O\parens*{1 + \frac{d^{1/p}}{(r/d^2)^{1-1/p}}}, \forall \bfx\in\mathbb R^d}\geq 0.9.
\]
\end{thm}
\begin{proof}
By applying lemmas \ref{lem:sampling-upper-bound} and \ref{lem:sampling-lower-bound}, we have that for all $\bfx$,
\[
    \Omega\parens*{1}\norm{\bfx}_p n\leq \norm{\bfA\bfx}_1\leq \kappa_n \norm{\bfx}_1 n.
\]
Furthermore, we can apply the lemmas to $\bfS\bfA$ as well, which gives us 
\[
    \Omega(1)\kappa_n d^{1-1/p}\norm{\bfx}_p n\leq \norm{\bfS\bfA\bfx}_1\leq \kappa_r\kappa_nd^{1-1/p}\norm{\bfx}_1 n.
\]
By H\"older's inequality, we have that for all $\bfx\in\mathbb R^d$,
\[
    \norm{\bfx}_p\leq \norm{\bfx}_1\leq d^{1-1/p}\norm{\bfx}_p.
\]
Thus,
\[
    \norm{\bfA\bfx}_1\leq \kappa_n\norm{\bfx}_1 n\leq \kappa_n d^{1-1/p}\norm{\bfx}_p n\leq \norm{\bfS\bfA\bfx}_1
\]
so the sketch does not underestimate norms. On the other hand,
\[
    \norm{\bfS\bfA\bfx}_1\leq \kappa_r\kappa_n d^{1-1/p}\norm{\bfx}_1n\leq \kappa_r\kappa_n d^{2(1-1/p)}\norm{\bfx}_p n\leq \kappa_r\kappa_n d^{2(1-1/p)}\norm{\bfA\bfx}_1
\]
so the sketch does not overestimate norms by more than $\kappa_r\kappa_n d^{2(1-1/p)}$, as claimed. 
\end{proof}

\subsubsection{Small \texorpdfstring{$n$}{n}: \textsf{CountSketch}}
In the previous section, we have handled the case when $n^{1-1/p}\geq d^{1/p}\log d$. On the other hand, when $n^{1-1/p}\leq d^{1/p}\log d$ we will instead use \textsf{CountSketch} to hash the largest entries of each column of $\bfA$. These entries are of size around $n^{1/p}$, while the entries of vectors with size smaller than this have mass at most $n$. Thus, we approximate the mass up to a factor of
\[
    \frac{n}{n^{1/p}} = n^{1-1/p}\leq d^{1/p}\log d,
\]
which is roughly what we are shooting for.

\begin{thm}\label{thm:p-1-2-countsketch}
Let $n^{1-1/p}\leq d^{1/p}\log d$. Let $\bfS$ be drawn as a \textsf{CountSketch} matrix with $r$ rows. Then
\[
    \Pr_{\bfS, \bfA}\parens*{\Omega\parens*{\frac1\kappa}\norm{\bfA\bfx}_1\leq \norm{\bfS\bfA\bfx}_1\leq \norm{\bfA\bfx}_1, \forall \bfx\in\mathbb R^d}\geq \frac{99}{100}.
\]
where
\[
    \kappa = O\parens*{\parens*{\frac{n}{(r/d^2\log d)}}^{1-1/p}} = O\parens*{\frac{d^{1/p}\log d}{(r/d^2\log d)^{1-1/p}}}.
\]
\end{thm}
\begin{proof}
The distortion upper bound again is just Lemma \ref{lem:no-expansion}. The distortion lower bound argument is similar to the one presented for the cases of $p\in(0,1]$ and thus is deferred to Appendix \ref{section:appendix:subspace_embeddings_random}.
\end{proof}

\subsection{Algorithms for \texorpdfstring{$p\geq 2$}{p >= 2}}\label{section:p>=2}
When $p\geq 2$, we show that any $m\times d$ i.i.d.\ matrix with $m \geq \poly(d)$ has with constant probability, $\norm{\bfA\bfx}_1 = \Theta(m\norm{\bfx_2})$ for all $\bfx\in\mathbb R^d$. This shows that a uniform sampling matrix with $\poly(d)$ rows works as a sketch. The following result shows this in expectation.
\begin{lem}\label{lem:p>=2-expectation}
Let $p\geq 2$ and let $\mathcal D$ be a symmetric power law with index $p$. Let $\bfx\in\mathbb R^d$. Then,
\[
    \E_{\bfv\sim\mathcal D^d}\abs{\angle{\bfv,\bfx}} = \Theta(\norm{\bfx}_2)
\]
\end{lem}
\begin{proof}
The proof is standard and is deferred to Appendix \ref{section:appendix:subspace_embeddings_random}.
\end{proof}

Our strategy then is to show that conditioned on every entry of $\bfv\sim\mathcal D^d$ being smaller than some large value $B \geq \poly(d)$, the expectation remains approximately unchanged. We then use this in a Bernstein bound to argue the result with high enough probability to union bound over a net. 

\begin{lem}\label{lem:p>=2-conditioning-expectation-same}
Let $p\geq 2$ and let $\mathcal D$ be a symmetric power law with index $p$. Let $\eps\in(0,1/4)$ and let $B > \max\{\eps^{-1},(d\sqrt{d}/\eps)^{1/p}\}$. Let $\bfv\sim\mathcal D^d$ and fix a vector $\bfx\in\mathbb R^d$. Define the events
\begin{align*}
    \mathcal E_i &\coloneqq \braces*{\abs{\bfv_i}\leq B} \\
    \mathcal E &\coloneqq \bigcap_{i=1}^d \mathcal E_i
\end{align*}
Then for $B$ large enough, 
\[
    \E\parens*{\abs{\angle{\bfv,\bfx}} \mid \mathcal E}\geq (1-O(\eps))\E\abs{\angle{\bfv,\bfx}}.
\]
\end{lem}
\begin{proof}
The proof is standard and is deferred to Appendix \ref{section:appendix:subspace_embeddings_random}.
\end{proof}

The following lemma implements the Bernstein bound and applies a standard net argument. 

\begin{lem}\label{lem:p>=2-2-norm-subspace}
Let $p\geq 2$ and let $\mathcal D$ be a symmetric power law with index $p$. Let $\bfA\sim\mathcal D^{m\times d}$ with $m\geq \Theta(\max\braces*{\eps^{-p}, (d^{3/2+1/p}\eps^{-1}\log \eps^{-1})^{p/(p-1)}})$. For $\bfx\in\mathbb R^d$, let $\mu_\bfx\coloneqq \E_{\bfv\sim\mathcal D^d}\angle*{\bfv,\bfx}$. Then,
\[
    \Pr\parens*{\norm{\bfA\bfx}_1 = (1\pm O(\eps)) m\mu_\bfx, \forall \bfx\in\mathbb R^d} \geq 0.95.
\]
\end{lem}
\begin{proof}
Note that with probability at least $0.99$, $\norm{\bfA}_\infty = O((md)^{1/p})$. Let this event be $\mathcal E$. Then conditioned on $\mathcal E$, $\bfA$ is distributed as an i.i.d.\ matrix drawn from $\mathcal D'$, where $\mathcal D'$ is the truncation of $\mathcal D$ at
\[
    B = O((md)^{1/p}) \geq \max\braces*{\eps^{-1}, (d\sqrt{d}/\eps)^{1/p}}
\]
where the bound on $B$ follows by our choice of $m$. 

\paragraph{High probability bounds.}

Now fix $\bfx\in\mathbb R^d$. By Lemmas \ref{lem:p>=2-expectation} and \ref{lem:p>=2-conditioning-expectation-same},
\begin{equation}\label{eqn:p>=2-expectation}
    \mu\coloneqq \E\parens*{\norm{\bfA\bfx}_1\mid\mathcal E} = \sum_{i=1}^m \E\parens*{\abs{\bfe_i^\top\bfA\bfx}\mid\mathcal E} = (1\pm\eps)\sum_{i=1}^m  \E\abs{\bfe_i^\top\bfA\bfx} = (1\pm\eps)m\mu_\bfx = \Theta(m\norm{\bfx}_2)
\end{equation}
and
\[
    \sigma^2\coloneqq \Var(\norm{\bfA\bfx}_1\mid \mathcal E) = \sum_{i=1}^m \Var(\abs{\bfe_i^\top\bfA\bfx}\mid \mathcal E) \leq O(m\norm{\bfx}_2^2). 
\]
Then by Bernstein bounds, we have that
\begin{align*}
    \Pr\parens*{\abs{\norm{\bfA\bfx}_1 - \mu}\geq \eps \mu\mid\mathcal E}\leq 2\exp\parens*{-\frac12\frac{(\eps\mu)^2}{\sigma^2 + \eps\mu B/3}}\leq \exp\parens*{-\Theta(d\log \eps^{-1})}
\end{align*}
where the last inequality follows by our choice of $m$. Then chaining together with Equation \ref{eqn:p>=2-expectation},
\begin{equation}\label{eqn:1+eps-approx-p>=2}
    \Pr\parens*{\norm{\bfA\bfx}_1 = (1\pm O(\eps))m \mu_\bfx \mid \mathcal E}\leq \exp\parens*{-\Theta(d\log\eps^{-1})}.
\end{equation}

\paragraph{Net argument.}

We now proceed by a standard net argument. Recall the set $\mathcal S$ and the $\eps$-net $\mathcal N$ as given in Lemma \ref{lem:net-argument-ingredients}. Now conditioned on $\mathcal E$, we have by a union bound that $\norm{\bfA\bfx}_1 = (1\pm O(\eps))m\mu_\bfx$ for every $\bfA\bfx\in\mathcal N$, with probability at least $0.99$. We condition on this event as well. Now for any $\bfy\in\mathcal S$, write $\bfy = \sum_{i=0}^\infty \bfy^{(i)}$ as given in Lemma \ref{lem:net-argument-ingredients}. Then,
\[
    \norm{\bfS\bfA\bfx}_1 = \Norm{\bfS\sum_{i=0}^\infty\bfy^{(i)}}_1\leq \sum_{i=0}^\infty\Norm{\bfS\bfy^{(i)}}_1\leq (1+O(\eps))\sum_{i=0}^\infty\Norm{\bfy^{(i)}}_1\leq (1+O(\eps))\sum_{i=0}^\infty\eps^i\leq 1+O(\eps). 
\]
We conclude by homogeneity. 
\end{proof}

Given the above lemma, our subspace embedding follows simply from uniform sampling and rescaling.

\begin{thm}\label{thm:p>=2-upper-bound}
Let $p\geq 2$ and let $\mathcal D$ be a symmetric power law with index $p$. Let $\eps\in(0,1/2)$, let
\[
    r = \Theta(\max\braces*{\eps^{-p}, (d^{3/2+1/p}\eps^{-1}\log \eps^{-1})^{p/(p-1)}})
\]
and let $\bfS\in\mathbb R^{r\times n}$ be a matrix that uniform samples $r$ rows and scales by $n/r$. Then,
\[
    \Pr\parens*{\norm{\bfS\bfA\bfx}_1 = (1\pm O(\eps))\norm{\bfA\bfx}_1, \forall\bfx\in\mathbb R^d}\geq 0.9
\]
\end{thm}
\begin{proof}
By Lemma \ref{lem:p>=2-2-norm-subspace}, we have with probability at least $0.95$ that for all $\bfx\in\mathbb R^d$,
\[
    \norm{\bfA\bfx}_1 = (1\pm O(\eps))n\mu_\bfx
\]
and with probability at least $0.95$ that for all $\bfx\in\mathbb R^d$,
\[
    \norm{\bfS\bfA\bfx}_1 = \frac{n}{r}\cdot (1\pm O(\eps))r\mu_\bfx = (1\pm O(\eps))n\mu_\bfx.
\]
Combining these two bounds, we have that with probability at least $0.9$, for all $\bfx\in\mathbb R^d$,
\[
    \norm{\bfS\bfA\bfx}_1 = (1 \pm O(\eps)) \norm{\bfA\bfx}_1.\qedhere
\]
\end{proof}

\subsection{Lower bound}

In this section, we work towards proving a lower bound for a general class of random matrices with each column drawn i.i.d.\ from a different distribution. When specialized to our i.i.d. matrices from the above, we obtain nearly tight bounds. We do not have a single general theorem, but rather a number of different possible arguments that give better bounds depending on the underlying distribution. This can be shown to approximately recover the result of \cite[Theorem 1.1]{DBLP:conf/soda/WangW19}, and additional yields new bounds, for example, the following tight results:
\begin{thm}\label{thm:lower-bound-cauchy}
Let $\log n \leq O(d)$ and let $\bfS$ be a $r\times n$ matrix such that
\[
    \Pr_{\bfA\sim\Cauchy^{n\times d}}\parens*{\norm{\bfA}_1 \leq \norm{\bfS\bfA}_1\leq \kappa\norm{\bfA}_1} \geq \frac{99}{100}
\]
Then,
\[
    \kappa = \Omega\parens*{\frac{\log n}{\log r}}
\]
\end{thm}

\begin{thm}\label{thm:lower-bound-p-stable}
Let $\bfS$ be a $r\times n$ matrix such that
\[
    \Pr_{\bfA\sim\mathcal D^{n\times d}}\parens*{\norm{\bfA}_1 \leq \norm{\bfS\bfA}_1\leq \kappa\norm{\bfA}_1} \geq \frac{99}{100}
\]
where $\mathcal D$ is a $p$-stable distribution. Then,
\[
    \kappa = \Omega\parens*{\frac{d^{1/p}}{r^{1-1/p}}}
\]
\end{thm}

\begin{dfn}\label{def:input-dist}
Let $\mathcal D_j$ for $j\in[d]$ be distributions and consider the distribution $\mathcal D_{\bfA}$ over $n\times d$ matrices $\bfA$ that draws column $j$ from $\mathcal D_j^n$. Let 
\[
    M_j\coloneqq \median_{\bfu\sim\mathcal D_j^n}\norm{\bfu}_1.
\]
We then define the distribution $\mathcal D_{\max}$ that draws entries as
\[
    \max_{j=1}^d \frac{\abs{v_j}}{M_j}, v_j\sim\mathcal D_j
\]
and let its cdf be $F_{\mathcal D_{\max}}$.
\end{dfn}

Throughout this section, let $\bfS$ be a $r\times n$ matrix such that
\[
    \Pr_{\bfA\sim\mathcal D_{\bfA}}\parens*{\norm{\bfA}_1 \leq \norm{\bfS\bfA}_1\leq \kappa\norm{\bfA}_1}\geq 1-\delta.
\]

\subsubsection{Preliminary bounds on \texorpdfstring{$\bfS$}{S}}

\begin{lem}
For every $i\in[n]$, we have that
\[
    \norm{\bfS\bfe_i}_1\leq 2\kappa\frac{1+F_{\mathcal D_{\max}}^{-1}(4\delta)}{F_{\mathcal D_{\max}}^{-1}(4\delta)}
\]
\end{lem}
\begin{proof}
Note that for every row $i$ of $\bfA$, with probability at least $4\delta$, one of the $d$ columns of $\bfe_i^\top\bfA$, say column $j\in[d]$, has absolute value at least
\[
    \frac{\abs{\bfe_i^\top\bfA\bfe_j}}{M_j}\geq F_{\mathcal D_{\max}}^{-1}(4\delta) \iff \abs{\bfe_i^\top\bfA\bfe_j}\geq F_{\mathcal D_{\max}}^{-1}(4\delta)M_j.
\]
Independently, with probability at least $1/2$, the $\ell_1$ norm of the rest of the entries of the column is at most
\[
    \sum_{i'\in[n]\setminus\{i\}}\abs{\bfe_{i'}^\top\bfA\bfe_j}\leq M_j.
\]
Then with probability $2\delta$ both of these happen simultaneously, so that
\[
    \norm{\bfA\bfe_j}_1\leq (1+F^{-1}_{\mathcal D_{\max}}(4\delta))M_j
\]
Let this event be $\mathcal E_i$. 

Now suppose for contradiction that there is some column $i\in[n]$ such that $\norm{\bfS\bfe_i}_1 > 2\kappa / F_{\mathcal D_{\max}}^{-1}(4\delta)$. We then condition on $\mathcal E_i$. Then with probability at least $1/2$, 
\begin{eqn*}
      \norm{\bfS\bfA\bfe_j}_1 &= \norm{\bfS\bfe_{i}(\bfe_{i}^\top\bfA\bfe_j) + \sum_{i'\neq i}\bfS\bfe_{i'}(\bfe_{i'}^\top\bfA\bfe_j)}_1\geq \frac12\norm{\bfS\bfe_{i}(\bfe_{i}^\top\bfA\bfe_j)}_1 \\
      &> \frac12 \parens*{2\kappa\frac{1+F_{\mathcal D_{\max}}^{-1}(4\delta)}{F_{\mathcal D_{\max}}^{-1}(4\delta)}} F_{\mathcal D_{\max}}^{-1}(4\delta)M_j = \kappa (1+F_{\mathcal D_{\max}}^{-1}(4\delta))M_j\geq \kappa \norm{\bfA\bfe_j}_1
\end{eqn*}
so $\bfS$ fails to sketch $\bfA$ with probability at least $\delta$, which is a contradiction.
\end{proof}

\begin{lem}\label{lem:num-large-cols}
Let $F_{\mathcal D_{\max}}^{-1}(4\delta) < x < 1$. Then there are at most
\[
    \frac1{1 - F_{\mathcal D_{\max}}(x)}
\]
columns of $\bfS$ with $\ell_1$ norm more than
\[
    2\kappa\frac{1+x}{x}.
\]
\end{lem}
\begin{proof}
Let
\[
    p\coloneqq \Pr_{X\sim \mathcal D_{\max}}\braces*{X\geq x} = 1 - F_{\mathcal D_{\max}}(x)
\]
and suppose for contradiction that there are more than $1/p$ columns of $\bfS$ with $\ell_1$ norm more than $2\kappa/x$. Note that for each row $i$ of these $1/p$ rows, there is a $p$ probability that one of the $d$ columns of $\bfe_i^\top\bfA$, say column $j\in[d]$, has absolute value at least
\[
     \frac{\abs{\bfe_i^\top\bfA\bfe_j}}{M_j}\geq x \iff \abs{\bfe_i^\top\bfA\bfe_j}\geq xM_j.
\]
Then the probability that one of the $1/p$ rows, say row $i$, has an entry of absolute value at least $xM_j$ is at least
\[
    1 - (1-p)^{1/p}\geq 1 - e^{-1}.
\]
Independently, with probability at least $1/2$, the $\ell_1$ norm of the rest of the entries of this column is at most
\[
    \sum_{i'\in[i]\setminus\{i\}}\abs{\bfe_{i'}^\top\bfA\bfe_j}\leq M_j.
\]
Thus with probability at least $(1 - e^{-1})/2$, both of these events happen simultaneously, so there is row $i\in[n]$ and a column $j\in[d]$ such that
\[
\begin{cases}
    \abs{\bfe_i^\top\bfA\bfe_j} &\geq xM_j \\
    \norm{\bfA\bfe_j}_1 &\leq (1+x)M_j \\
    \norm{\bfS\bfe_i} &\geq 2\kappa\frac{1+x}{x}
\end{cases}.
\]
We condition on this event. Then with probability at least $1/2$, 
\begin{eqn*}
      \norm{\bfS\bfA\bfe_j}_1 &= \norm{\bfS\bfe_{i}(\bfe_{i}^\top\bfA\bfe_j) + \sum_{i'\neq i}\bfS\bfe_{i'}(\bfe_{i'}^\top\bfA\bfe_j)}_1\geq \frac12\norm{\bfS\bfe_{i}(\bfe_{i}^\top\bfA\bfe_j)}_1 \\
      &> \frac12 2\kappa\frac{1+x}{x}\cdot x M_j = \kappa (1+x)M_j\geq \kappa \norm{\bfA\bfe_j}_1
\end{eqn*}
so $\bfS$ fails to sketch $\bfA$ with probability at least $(1-e^{-1})/4 > \delta$, which is a contradiction.
\end{proof}

\subsubsection{Distortion lower bound}\label{section:general-iid-distortion-bound}

Fix any $\bfx\in\mathbb R^d$ with $\norm{\bfx}_1 = 1$ and let $\bfv = \bfA\bfx$. Following \cite{DBLP:conf/soda/WangW19}, our strategy is to bound $\norm{\bfS\bfv}_1$ from above in terms of $\kappa$, and then derive a lower bound on $\kappa$ by bounding $\norm{\bfS\bfv}_1$ below by $\norm{\bfv}_1$. 

Note that the bound of Lemma \ref{lem:num-large-cols} is useless when
\[
    \frac1{1-F_{\mathcal D_{\max}}(x)}\geq n \iff F_{\mathcal D_{\max}}(x)\geq 1 - \frac1n \iff x \geq F^{-1}_{\mathcal D_{\max}}\parens*{1 - \frac1n}.
\]
We thus set $\bfS^H$ to be the matrix formed by taking the columns of $\bfS$ with $\ell_1$ norm at least
\[
    2\kappa\frac{1 + F_{\mathcal D_{\max}}^{-1}(1-1/n)}{F_{\mathcal D_{\max}}^{-1}(1-1/n)}
\]
and $\bfS^L$ to be the columns of $\bfS$ with $\ell_1$ norm at most this, and individually bound $\bfS^H\bfv$ and $\bfS^L\bfv$.

Now consider the distribution $\mathcal D_\bfx$ with cdf $F_\bfx$ that draws its entries as $\abs{\angle{\bfx,\bfw}}$ with $\bfw\sim\prod_{j=1}^d \mathcal D_j$. By a union bound, the largest absolute value entry in $\bfv = \bfA\bfx$ is at most $M_\bfx\coloneqq F^{-1}_\bfx(1 - 1/2n)$ with  probability at least $1/2$. Let this event be
\[
    \mathcal E\coloneqq \braces*{\norm{\bfv}_\infty\leq M_\bfx}.
\]
Throughout this section, we condition on $\mathcal E$. We also define
\[
    F_{\bfx, \land}(x)\coloneqq \frac{F_\bfx(x)\mathbbm{1}(x\leq M_{\bfx})}{\Pr(\mathcal E)}
\]
to be the conditional cdf of the capped version of $\bfv$. 

\subsubsection{Bounding the high-norm columns of \texorpdfstring{$\bfS$}{S}}

We first bound $\norm{\bfS^H\bfv}_1$. We will need the following simple lemma:
\begin{lem}\label{lem:permute-expectation}
Let $\bfu,\bfv\in\mathbb R^n$ be vectors with nonnegative entries and unit $\ell_1$ norm. Let $\bfP$ be a uniformly random permutation matrix. Then,
\[
    \E_{\bfP}\angle{\bfu, \bfP\bfv} = \frac1n.
\]
\end{lem}
\begin{proof}
We have that
\[
    \E_{\bfP}\angle{\bfu, \bfP\bfv} = \sum_{i=1}^n u_i \E_{\bfP}(\bfe_i^\top\bfP\bfv) = \sum_{i=1}^n u_i \sum_{j=1}^n \frac{v_j}{n} = \frac1n.\qedhere
\]
\end{proof}

The main result for this section then is the following:

\begin{lem}
Let
\begin{eqn*}
    L_{\max} &\coloneqq F^{-1}_{\mathcal D_{\max}}\parens*{1 - \frac1n} \\
    L_{\min} &\coloneqq F_{\mathcal D_{\max}}^{-1}\parens*{4\delta}
\end{eqn*}
Then,
\[
    \Pr\braces*{\Norm{\bfS^H\frac{\bfv}{\norm{\bfv}_1}}_1\leq 400 \frac{\kappa}{n} \sum_{k = \log_2 L_{\min}}^{\log_2 L_{\max}} \frac{1+2^{k-1}}{2^{k}\parens*{1 - F_{\mathcal D_{\max}}(2^k)}}}\geq \frac{99}{100}.
\]
\end{lem}
\begin{proof}
Note that i.i.d.\ distributions are permutation invariant. We first fix the entries of $\bfv$, which fixes $\norm{\bfv}_1$, but not the permutation of the entries. Now by Lemma \ref{lem:permute-expectation}, we have
\[
    \E_{\bfP}\Norm{\bfS^H\bfP\frac{\bfv}{\norm{\bfv}_1}}_1 = \sum_{i=1}^r \norm{\bfe_i^\top\bfS^H}_1 \E_{\bfP}\abs{\angle*{\frac{\bfe^\top\bfS^H}{\norm{\bfe_i^\top\bfS^H}_1}, \bfP\frac{\bfv}{\norm{\bfv}_1}}} \leq \sum_{i=1}^r \norm{\bfe_i^\top\bfS^H}_1\frac{1}{n} = \frac{1}{n}\sum_{j=1}^n \norm{\bfS^H\bfe_j}_1.
\]
By Lemma \ref{lem:num-large-cols}, we have that for each integer $k$ between $\log_2 L_{\min}$ and $\log_2 L_{\max}$, there are at most
\[
    \frac1{1-F_{\mathcal D_{\max}}(2^k)}
\]
columns of $\bfS$ with $\ell_1$ norm more than $2\kappa (1 + 1/ 2^k)$. Thus, there are at most $(1-F_{\mathcal D_{\max}}(2^k))^{-1}$ columns with $\ell_1$ norm in $[2\kappa (1+1/2^k), 2\kappa (1+1/2^{k-1})]$. Then, summing over the bounds over these intervals, we have that
\[
    \sum_{j=1}^n \norm{\bfS^H\bfe_j}_1\leq \sum_{k = \log_2 L_{\min}}^{\log_2 L_{\max}} 2\kappa\parens*{1+\frac1{2^{k-1}}}\frac1{1 - F_{\mathcal D_{\max}}(2^k)} = \kappa \sum_{k = \log_2 L_{\min}}^{\log_2 L_{\max}} \frac{4(1+2^{k-1})}{2^{k}\parens*{1 - F_{\mathcal D_{\max}}(2^k)}}.
\]
Chaining together the inequalities gives the bound
\[
    \E_{\bfP}\Norm{\bfS^H\bfP\frac{\bfv}{\norm{\bfv}_1}}_1\leq \frac{\kappa}{n} \sum_{k = \log_2 L_{\min}}^{\log_2 L_{\max}} \frac{4(1+2^{k-1})}{2^{k}\parens*{1 - F_{\mathcal D_{\max}}(2^k)}}.
\]
We then conclude by Markov's inequality. 
\end{proof}

\subsubsection{Bounding the low-norm columns of \texorpdfstring{$\bfS$}{S}}

The idea for bounding $\norm{\bfS^L\bfv}_1$ is that different arguments are needed for different level sets of $\bfv$, depending on how ``spiky'' it is. That is, a relatively flat level $\bfv_k$ should benefit from the sign cancellations in the product $\bfe_i^\top\bfS^L\bfv_k$, while a very spiky vector such as standard basis vectors should just apply the triangle inequality and bound only the few columns of $\bfS^L$ that it touches. This idea is formalized in the following lemma.

\begin{lem}\label{lem:SL-vector-bound}
Let $\bfS\in\mathbb R^{r\times n}$ be a fixed matrix such that $\norm{\bfS\bfe_i}_1\leq 1$ for each $i\in[n]$, and let $\bfw\in\mathbb R^n$ be a vector with entries drawn i.i.d.\ from a distribution with $\E w_i = 0$ and $\sigma^2\coloneqq \E w_i^2 <\infty$. Let $\mu\coloneqq \E\abs{w_i}$. Then,
\[
    \E\norm{\bfS\bfw}_1\leq \min\braces*{\mu n, C\sigma\sqrt{rn}}
\]
for an absolute constant $C$.
\end{lem}
\begin{proof}
For the first term of the min, we can simply use the triangle inequality to obtain
\[
    \E\norm{\bfS\bfw}_1\leq \sum_{i=1}^n \norm{\bfS\bfe_i}_1\E\abs{w_i} = \mu n.
\]

For the second term, we first apply Jensen's inequality to get 
\[
    \E \norm{\bfS\bfw}_1 = \sum_{i=1}^r \E\abs{\bfe_i^\top\bfS\bfw}\leq C\sum_{i=1}^r\parens*{\sum_{j=1}^n (\bfe_i^\top\bfS\bfe_j)^2 \E w_j^2}^{1/2} = C\sigma \sum_{i=1}^r\norm{\bfe_i^\top\bfS}_2
\]
for some absolute constant $C$. We then finish by an application of Cauchy-Schwarz, switching from row-wise sums to column-wise sums, and bounding the $\ell_2$ norm by the $\ell_1$ norm:
\[
    \sum_{i=1}^r \norm{\bfe_i^\top\bfS}_2\leq \sqrt{r}\parens*{\sum_{i=1}^r \norm{\bfe_i^\top\bfS}_2^2}^{1/2} = \sqrt{r}\parens*{\sum_{j=1}^n \norm{\bfS\bfe_j}_2^2}^{1/2}\leq \sqrt{r}\parens*{\sum_{j=1}^n \norm{\bfS\bfe_j}_1^2}^{1/2}\leq \sqrt{rn}. \qedhere
\]
\end{proof}

Now for intuition, in Lemma \ref{lem:SL-vector-bound}, we roughly think of the distribution of $w_i$ as being $v_i$ if $v_i$ belongs to a level set, and $0$ otherwise. Then if $p$ is the probability of being in a given level set, the first term is roughly $pn$ while the second term is roughly $\sqrt{rpn}$, so the first bound is tighter when $p\leq r/n$ and the second bound is tighter when $p\geq r/n$. 

This yields the following:
\begin{cor}\label{cor:general-lower-bound-low-norm-bound}
let
\[
    T\coloneqq F_{\bfx,\land}^{-1}\parens*{1 - \frac{r}{n}}
\]
and write $\bfv = \bfv_{\leq T} + \bfv_{>T}$, where $\bfv_{\leq T}$ takes the value of $\bfv$ on coordinates $i\in[n]$ where $\abs{v_i}\leq T$ and $0$ otherwise, and $\bfv_{>T}$ similarly takes the coordinates $i\in[n]$ of $\bfv$ such that $\abs{v_i}>T$ and $0$ otherwise. Then, $\bfv_{\leq T}$ is drawn i.i.d.\ from a distribution with second moment
\[
    \sigma_{\leq T}^2 \coloneqq \int_0^T x^2 f_{\bfx,\land}(x)~dx
\]
while $\bfv_{>T}$ is drawn i.i.d.\ from a distribution with expected absolute value
\[
    \mu_{>T} \coloneqq \int_T^{M_{\bfx}}x f_{\bfx,\land}(x)~dx.
\]
Applying Lemma \ref{lem:SL-vector-bound}, we obtain the bound
\[
    \E\norm{\bfS^L\bfv}_1\leq \E\norm{\bfS^L\bfv_{\leq T}}_1 + \E\norm{\bfS^L\bfv_{>T}}_1\leq C\kappa\frac{1+F^{-1}_{\mathcal D_{\max}}(1-1/n)}{F^{-1}_{\mathcal D_{\max}}(1-1/n)}(\sigma_{\leq T}\sqrt{rn} + \mu_{>T}n).
\]
\end{cor}

\subsubsection{Lower bounds for sketching i.i.d.\ \texorpdfstring{$p$}{p}-stable matrices}
We apply Corollary \ref{cor:general-lower-bound-low-norm-bound} to prove Theorem \ref{thm:lower-bound-cauchy}:

\begin{proof}[Proof of Theorem \ref{thm:lower-bound-p-stable}]
Note that when $\bfA$ is drawn as fully i.i.d.\ Cauchy variables, then
\[
    F^{-1}_{\mathcal D_{\max}}(1-1/n) = O\parens*{\frac{nd}{n\log n}} = O(d/\log n).
\]
We now apply Corollary \ref{cor:general-lower-bound-low-norm-bound} with $\bfv = \bfA\bfe_1$, a Cauchy vector. Then, $T = \Theta(n/r)$, $\sigma_{\leq T}^2 = \Theta(n/r)$, and $\mu_{>T} = \Theta(\log r)$ which yields
\[
    \E\norm{\bfS^L\bfv}_1\leq O\parens*{\kappa\frac{1+d/\log n}{d/\log n}\bracks*{\sqrt{\frac{n}{r}}\sqrt{rn} + n\log r}} = O\parens*{\kappa n\log r}.
\]
Then with constant probability, we have
\[
    \Omega(n\log n)\leq \norm{\bfv}_1\leq \norm{\bfS^L\bfv}_1\leq  O\parens*{\kappa n\log r}
\]
and thus
\[
    \kappa = \Omega\parens*{\frac{\log n}{\log r}},
\]
as desired.
\end{proof}

When we have a column drawn i.i.d.\ from a $p$-stable distribution, we have an alternative bound:
\begin{lem}\label{lem:p-stable-distortion-lower-bound-helper}
Let $\bfv$ be drawn i.i.d.\ from a $p$-stable distribution for $p\in (1,2)$. If $\bfv$ is in the column space of $\bfS$, then
\[
    2\kappa\frac{1+F_{\mathcal D_{\max}}^{-1}(1-1/n)}{F_{\mathcal D_{\max}}^{-1}(1-1/n)}r^{1-1/p}n^{1/p} = \Omega(n).
\]
\end{lem}
\begin{proof}
Then, we have that
\begin{align*}
    \Omega(n) &\leq \norm{\bfv}_1 \leq \norm{\bfS^L\bfv}_1 = \sum_{i=1}^r \abs{\bfe_i^\top \bfS^L\bfv} = \sum_{i=1}^r \norm{\bfe_i^\top \bfS^L}_p \abs{\mathcal S_{i}}
\end{align*}
By linearity of expectation, the above sum has expectation $\sum_{i=1}^r O\parens*{\norm{\bfe_i^\top \bfS^L}_p}$, and thus is at most a constant times this with probability at least $99/100$ by a Markov bound. Then, we proceed by bounding
\begin{align*}
    \sum_{i=1}^r \norm{\bfe_i^\top \bfS^L}_p &\leq r^{1-1/p}\parens*{\sum_{i=1}^r \norm{\bfe_i^\top\bfS^L}_p^p}^{1/p} \\
    &= r^{1-1/p}\parens*{\sum_{j=1}^n \norm{\bfS^L\bfe_j}_p^p}^{1/p} \leq r^{1-1/p}\parens*{\sum_{j=1}^n \norm{\bfS^L\bfe_j}_1^p}^{1/p} \\
    &\leq r^{1-1/p}\parens*{n\parens*{2\kappa\frac{1+F_{\mathcal D_{\max}}^{-1}(1-1/n)}{F_{\mathcal D_{\max}}^{-1}(1-1/n)}}^p}^{1/p} = 2\kappa\frac{1+F_{\mathcal D_{\max}}^{-1}(1-1/n)}{F_{\mathcal D_{\max}}^{-1}(1-1/n)}r^{1-1/p}n^{1/p}.
\end{align*}
\end{proof}

This gives a proof of Theorem \ref{thm:lower-bound-p-stable}.
\begin{proof}[Proof of Theorem \ref{thm:lower-bound-p-stable}]
When $\bfA$ is drawn as fully i.i.d.\ $p$-stable variables, then
\[
    F_{\mathcal D_{\max}}^{-1}(1-1/n) = \Theta\parens*{\frac{(nd)^{1/p}}{n}}
\]
so by Lemma \ref{lem:p-stable-distortion-lower-bound-helper}, the distortion bound from these columns is
\[
    \Omega(n)\leq \kappa \frac{n}{(nd)^{1/p}}r^{1-1/p}n^{1/p}\iff \kappa \geq \Omega\parens*{\frac{d^{1/p}}{r^{1-1/p}}}.
\]
\end{proof}

\section*{Acknowledgements}
We thank anonymous reviewers for their feedback, and T.\ Yasuda thanks Manuel Fernandez for useful discussions. D.\ Woodruff and T.\ Yasuda thank partial support from a Simons Investigator Award. Y.\ Li was supported in part by Singapore Ministry of Education (AcRF) Tier 2 grant MOE2018-T2-1-013.

\bibliographystyle{alpha}
\bibliography{citations}

\newcommand{\etalchar}[1]{$^{#1}$}
\begin{thebibliography}{CDM{\etalchar{+}}13}

\bibitem[ABIW09]{earth_mover}
A.~{Andoni}, K.~D. {Ba}, P.~{Indyk}, and D.~{Woodruff}.
\newblock Efficient sketches for earth-mover distance, with applications.
\newblock In {\em 2009 50th Annual IEEE Symposium on Foundations of Computer
  Science}, pages 324--330, 2009.

\bibitem[ACNN11]{DBLP:conf/focs/AndoniCNN11}
Alexandr Andoni, Moses Charikar, Ofer Neiman, and Huy~L. Nguyen.
\newblock Near linear lower bound for dimension reduction in {L1}.
\newblock In Rafail Ostrovsky, editor, {\em {IEEE} 52nd Annual Symposium on
  Foundations of Computer Science, {FOCS} 2011, Palm Springs, CA, USA, October
  22-25, 2011}, pages 315--323. {IEEE} Computer Society, 2011.

\bibitem[AHK01]{aggarwal2001surprising}
Charu~C Aggarwal, Alexander Hinneburg, and Daniel~A Keim.
\newblock On the surprising behavior of distance metrics in high dimensional
  space.
\newblock In {\em International conference on database theory}, pages 420--434.
  Springer, 2001.

\bibitem[AKK{\etalchar{+}}20]{ahle2020oblivious}
Thomas~D Ahle, Michael Kapralov, Jakob~BT Knudsen, Rasmus Pagh, Ameya
  Velingker, David~P Woodruff, and Amir Zandieh.
\newblock Oblivious sketching of high-degree polynomial kernels.
\newblock In {\em Proceedings of the Fourteenth Annual ACM-SIAM Symposium on
  Discrete Algorithms}, pages 141--160. SIAM, 2020.

\bibitem[ANW14]{DBLP:conf/nips/AvronNW14}
Haim Avron, Huy~L. Nguyen, and David~P. Woodruff.
\newblock Subspace embeddings for the polynomial kernel.
\newblock In Zoubin Ghahramani, Max Welling, Corinna Cortes, Neil~D. Lawrence,
  and Kilian~Q. Weinberger, editors, {\em Advances in Neural Information
  Processing Systems 27: Annual Conference on Neural Information Processing
  Systems 2014, December 8-13 2014, Montreal, Quebec, Canada}, pages
  2258--2266, 2014.

\bibitem[BBB{\etalchar{+}}19]{ban2019ptas}
Frank Ban, Vijay Bhattiprolu, Karl Bringmann, Pavel Kolev, Euiwoong Lee, and
  David~P. Woodruff.
\newblock A {PTAS} for $\ell_p$-low rank approximation.
\newblock In {\em {SODA}}, pages 747--766. {SIAM}, 2019.

\bibitem[BC05]{brinkman2005impossibility}
Bo~Brinkman and Moses Charikar.
\newblock On the impossibility of dimension reduction in l1.
\newblock {\em Journal of the ACM (JACM)}, 52(5):766--788, 2005.

\bibitem[BCL{\etalchar{+}}10]{BCLMO10}
Vladimir Braverman, Kai{-}Min Chung, Zhenming Liu, Michael Mitzenmacher, and
  Rafail Ostrovsky.
\newblock {AMS} without 4-wise independence on product domains.
\newblock In {\em 27th International Symposium on Theoretical Aspects of
  Computer Science, {STACS} 2010, March 4-6, 2010, Nancy, France}, pages
  119--130, 2010.

\bibitem[BE18]{balkema2018linear}
Guus Balkema and Paul Embrechts.
\newblock Linear regression for heavy tails.
\newblock {\em Risks}, 6(3):93, 2018.

\bibitem[BGKS06]{DBLP:conf/soda/BhuvanagiriGKS06}
Lakshminath Bhuvanagiri, Sumit Ganguly, Deepanjan Kesh, and Chandan Saha.
\newblock Simpler algorithm for estimating frequency moments of data streams.
\newblock In {\em Proceedings of the Seventeenth Annual {ACM-SIAM} Symposium on
  Discrete Algorithms, {SODA} 2006, Miami, Florida, USA, January 22-26, 2006},
  pages 708--713. {ACM} Press, 2006.

\bibitem[BLM{\etalchar{+}}89]{bourgain1989approximation}
Jean Bourgain, Joram Lindenstrauss, V~Milman, et~al.
\newblock Approximation of zonoids by zonotopes.
\newblock {\em Acta mathematica}, 162:73--141, 1989.

\bibitem[BO]{BO09}
Vladimir Braverman and Rafail Ostrovsky.
\newblock Measuring independence of datasets.
\newblock arXiv:0903.0034 [cs.DS]. This is the full version of the conference
  version that appears in STOC'10.

\bibitem[BO10a]{braverman2010measuring}
Vladimir Braverman and Rafail Ostrovsky.
\newblock Measuring independence of datasets.
\newblock In {\em Proceedings of the forty-second ACM symposium on Theory of
  computing}, pages 271--280, 2010.

\bibitem[BO10b]{braverman2010zero}
Vladimir Braverman and Rafail Ostrovsky.
\newblock Zero-one frequency laws.
\newblock In {\em Proceedings of the forty-second ACM symposium on Theory of
  computing}, pages 281--290, 2010.

\bibitem[Bra20]{B20}
Vladimir Braverman.
\newblock personal communication, 2020.

\bibitem[BWZ16]{boutsidis2016optimal}
Christos Boutsidis, David~P Woodruff, and Peilin Zhong.
\newblock Optimal principal component analysis in distributed and streaming
  models.
\newblock In {\em Proceedings of the forty-eighth annual ACM symposium on
  Theory of Computing}, pages 236--249, 2016.

\bibitem[CCF02]{CCF02}
Moses Charikar, Kevin~C. Chen, and Martin Farach{-}Colton.
\newblock Finding frequent items in data streams.
\newblock In {\em {ICALP}}, volume 2380 of {\em Lecture Notes in Computer
  Science}, pages 693--703. Springer, 2002.

\bibitem[CDM{\etalchar{+}}13]{DBLP:conf/soda/ClarksonDMMMW13}
Kenneth~L. Clarkson, Petros Drineas, Malik Magdon{-}Ismail, Michael~W. Mahoney,
  Xiangrui Meng, and David~P. Woodruff.
\newblock The fast cauchy transform and faster robust linear regression.
\newblock In {\em {SODA}}, pages 466--477. {SIAM}, 2013.

\bibitem[CGHJ12]{CGHJ12}
Graham Cormode, Minos Garofalakis, Peter~J Haas, and Chris Jermaine.
\newblock Synopses for massive data: Samples, histograms, wavelets, sketches.
\newblock {\em Foundations and Trends in Databases}, 4(1--3):1--294, 2012.

\bibitem[Cla05]{clarkson2005subgradient}
Kenneth~L Clarkson.
\newblock Subgradient and sampling algorithms for l 1 regression.
\newblock In {\em Symposium on Discrete Algorithms: Proceedings of the
  sixteenth annual ACM-SIAM symposium on Discrete algorithms}, volume~23, pages
  257--266, 2005.

\bibitem[CP15]{cohen2015lp}
Michael~B Cohen and Richard Peng.
\newblock Lp row sampling by lewis weights.
\newblock In {\em Proceedings of the forty-seventh annual ACM symposium on
  Theory of computing}, pages 183--192, 2015.

\bibitem[CS02]{charikar2002dimension}
Moses Charikar and Amit Sahai.
\newblock Dimension reduction in the/spl lscr//sub 1/norm.
\newblock In {\em The 43rd Annual IEEE Symposium on Foundations of Computer
  Science, 2002. Proceedings.}, pages 551--560. IEEE, 2002.

\bibitem[CW09]{clarkson2009numerical}
Kenneth~L Clarkson and David~P Woodruff.
\newblock Numerical linear algebra in the streaming model.
\newblock In {\em Proceedings of the forty-first annual ACM symposium on Theory
  of computing}, pages 205--214, 2009.

\bibitem[CW15]{DBLP:conf/soda/ClarksonW15}
Kenneth~L. Clarkson and David~P. Woodruff.
\newblock Sketching for \emph{M}-estimators: {A} unified approach to robust
  regression.
\newblock In {\em {SODA}}, pages 921--939. {SIAM}, 2015.

\bibitem[CW17]{clarkson2017low}
Kenneth~L Clarkson and David~P Woodruff.
\newblock Low-rank approximation and regression in input sparsity time.
\newblock {\em Journal of the ACM (JACM)}, 63(6):1--45, 2017.

\bibitem[DDH{\etalchar{+}}09]{DBLP:journals/siamcomp/DasguptaDHKM09}
Anirban Dasgupta, Petros Drineas, Boulos Harb, Ravi Kumar, and Michael~W.
  Mahoney.
\newblock Sampling algorithms and coresets for $\ell_p$ regression.
\newblock {\em {SIAM} J. Comput.}, 38(5):2060--2078, 2009.

\bibitem[DKS10]{DBLP:conf/stoc/DasguptaKS10}
Anirban Dasgupta, Ravi Kumar, and Tam{\'{a}}s Sarl{\'{o}}s.
\newblock A sparse johnson: Lindenstrauss transform.
\newblock In {\em {STOC}}, pages 341--350. {ACM}, 2010.

\bibitem[Dod92]{dodge1992l1}
Y.~Dodge.
\newblock {\em L1-statistical Analysis and Related Methods}.
\newblock North Holland, 1992.

\bibitem[FKSV02]{feigenbaum2002approximate}
Joan Feigenbaum, Sampath Kannan, Martin~J Strauss, and Mahesh Viswanathan.
\newblock An approximate l 1-difference algorithm for massive data streams.
\newblock {\em SIAM Journal on Computing}, 32(1):131--151, 2002.

\bibitem[FMSW10]{DBLP:conf/soda/FeldmanMSW10}
Dan Feldman, Morteza Monemizadeh, Christian Sohler, and David~P. Woodruff.
\newblock Coresets and sketches for high dimensional subspace approximation
  problems.
\newblock In {\em Proceedings of the Twenty-First Annual {ACM-SIAM} Symposium
  on Discrete Algorithms, {SODA} 2010, Austin, Texas, USA, January 17-19,
  2010}, pages 630--649, 2010.

\bibitem[Fre75]{freedman1975tail}
David~A Freedman.
\newblock On tail probabilities for martingales.
\newblock {\em the Annals of Probability}, pages 100--118, 1975.

\bibitem[IKM00]{indyk2000identifying}
Piotr Indyk, Nick Koudas, and Shanmugavelayutham Muthukrishnan.
\newblock Identifying representative trends in massive time series data sets
  using sketches.
\newblock In {\em 26th International Conference on Very Large Data Bases, VLDB
  2000}, pages 363--372, 2000.

\bibitem[IM08]{IM08}
Piotr Indyk and Andrew McGregor.
\newblock Declaring independence via the sketching of sketches.
\newblock In {\em SODA}, volume~8, pages 737--745, 2008.

\bibitem[Ind06a]{indyk2006stable}
Piotr Indyk.
\newblock Stable distributions, pseudorandom generators, embeddings, and data
  stream computation.
\newblock {\em Journal of the ACM (JACM)}, 53(3):307--323, 2006.

\bibitem[Ind06b]{DBLP:journals/jacm/Indyk06}
Piotr Indyk.
\newblock Stable distributions, pseudorandom generators, embeddings, and data
  stream computation.
\newblock {\em J. {ACM}}, 53(3):307--323, 2006.

\bibitem[IW05]{DBLP:conf/stoc/IndykW05}
Piotr Indyk and David~P. Woodruff.
\newblock Optimal approximations of the frequency moments of data streams.
\newblock In Harold~N. Gabow and Ronald Fagin, editors, {\em Proceedings of the
  37th Annual {ACM} Symposium on Theory of Computing, Baltimore, MD, USA, May
  22-24, 2005}, pages 202--208. {ACM}, 2005.

\bibitem[JL84]{johnson1984extensions}
William~B Johnson and Joram Lindenstrauss.
\newblock Extensions of lipschitz mappings into a hilbert space.
\newblock {\em Contemporary mathematics}, 26(189-206):1, 1984.

\bibitem[Law19]{lawrence2019robust}
Kenneth~D Lawrence.
\newblock {\em Robust regression: analysis and applications}.
\newblock Routledge, 2019.

\bibitem[Lee16]{lee2016lecture}
James~R Lee.
\newblock Lecture 7-8 martingales and azuma's inequality, 2016.

\bibitem[LN17]{DBLP:conf/focs/LarsenN17}
Kasper~Green Larsen and Jelani Nelson.
\newblock Optimality of the johnson-lindenstrauss lemma.
\newblock In Chris Umans, editor, {\em 58th {IEEE} Annual Symposium on
  Foundations of Computer Science, {FOCS} 2017, Berkeley, CA, USA, October
  15-17, 2017}, pages 633--638. {IEEE} Computer Society, 2017.

\bibitem[LS95]{lin1995fast}
Rake\& Agrawal King-lp Lin and Harpreet S Sawhney~Kyuseok Shim.
\newblock Fast similarity search in the presence of noise, scaling, and
  translation in time-series databases.
\newblock In {\em Proceeding of the 21th International Conference on Very Large
  Data Bases}, pages 490--501. Citeseer, 1995.

\bibitem[LSW18]{DBLP:conf/nips/LevinSW18}
Roie Levin, Anish~Prasad Sevekari, and David~P. Woodruff.
\newblock Robust subspace approximation in a stream.
\newblock In {\em NeurIPS}, pages 10706--10716, 2018.

\bibitem[LV05]{labib2005hardware}
Khaled Labib and V~Rao Vemuri.
\newblock A hardware-based clustering approach for anomaly detection.
\newblock {\em International Journal of Network Security}, 2005.

\bibitem[MM13]{meng2013low}
Xiangrui Meng and Michael~W Mahoney.
\newblock Low-distortion subspace embeddings in input-sparsity time and
  applications to robust linear regression.
\newblock In {\em Proceedings of the forty-fifth annual ACM symposium on Theory
  of computing}, pages 91--100, 2013.

\bibitem[Mut05]{muthukrishnan2005data}
Shanmugavelayutham Muthukrishnan.
\newblock {\em Data streams: Algorithms and applications}.
\newblock Now Publishers Inc, 2005.

\bibitem[MV15]{MV15}
Andrew McGregor and Hoa~T. Vu.
\newblock Evaluating bayesian networks via data streams.
\newblock In Dachuan Xu, Donglei Du, and Dingzhu Du, editors, {\em Computing
  and Combinatorics}, pages 731--743, Cham, 2015. Springer International
  Publishing.

\bibitem[MW21]{DBLP:conf/soda/MahankaliW21}
Arvind~V. Mahankali and David~P. Woodruff.
\newblock Optimal $\ell_1$ column subset selection and a fast {PTAS} for low
  rank approximation.
\newblock In {\em {SODA}}, pages 560--578. {SIAM}, 2021.

\bibitem[NN14]{nelson2014lower}
Jelani Nelson and Huy~L Nguyen.
\newblock Lower bounds for oblivious subspace embeddings.
\newblock In {\em International Colloquium on Automata, Languages, and
  Programming}, pages 883--894. Springer, 2014.

\bibitem[Nol18]{Nolan}
John~P. Nolan.
\newblock {\em Univariate Stable Distributions}.
\newblock Springer, Cham, 2018.

\bibitem[NR10]{Newman2010OnCD}
Ilan Newman and Yuri Rabinovich.
\newblock On cut dimension of $\ell_1$ metrics and volumes, and related
  sparsification techniques.
\newblock {\em CoRR}, abs/1002.3541, 2010.

\bibitem[Pag13]{DBLP:journals/toct/Pagh13}
Rasmus Pagh.
\newblock Compressed matrix multiplication.
\newblock {\em {ACM} Trans. Comput. Theory}, 5(3):9:1--9:17, 2013.

\bibitem[PP13]{DBLP:conf/kdd/PhamP13}
Ninh Pham and Rasmus Pagh.
\newblock Fast and scalable polynomial kernels via explicit feature maps.
\newblock In Inderjit~S. Dhillon, Yehuda Koren, Rayid Ghani, Ted~E. Senator,
  Paul Bradley, Rajesh Parekh, Jingrui He, Robert~L. Grossman, and Ramasamy
  Uthurusamy, editors, {\em The 19th {ACM} {SIGKDD} International Conference on
  Knowledge Discovery and Data Mining, {KDD} 2013, Chicago, IL, USA, August
  11-14, 2013}, pages 239--247. {ACM}, 2013.

\bibitem[Pru97]{pruss1997comparisons}
Alexander~R Pruss.
\newblock Comparisons between tail probabilities of sums of independent
  symmetric random variables.
\newblock In {\em Annales de l'Institut Henri Poincare (B) Probability and
  Statistics}, volume~33, pages 651--671. Elsevier, 1997.

\bibitem[Sch87]{schechtman1987more}
Gideon Schechtman.
\newblock More on embedding subspaces of $ l_p $ in $ l^n_r$.
\newblock {\em Compositio Mathematica}, 61(2):159--169, 1987.

\bibitem[SW11]{DBLP:conf/stoc/SohlerW11}
Christian Sohler and David~P. Woodruff.
\newblock Subspace embeddings for the l\({}_{\mbox{1}}\)-norm with
  applications.
\newblock In {\em {STOC}}, pages 755--764. {ACM}, 2011.

\bibitem[SWZ17]{song2017low}
Zhao Song, David~P Woodruff, and Peilin Zhong.
\newblock Low rank approximation with entrywise l1-norm error.
\newblock In {\em Proceedings of the 49th Annual ACM SIGACT Symposium on Theory
  of Computing}, pages 688--701, 2017.

\bibitem[Tal90]{talagrand1990embedding}
Michel Talagrand.
\newblock Embedding subspaces of l1 into ln 1.
\newblock {\em Proceedings of the American Mathematical Society}, pages
  363--369, 1990.

\bibitem[Ver18]{vershynin2018high}
Roman Vershynin.
\newblock {\em High-dimensional probability: An introduction with applications
  in data science}, volume~47.
\newblock Cambridge university press, 2018.

\bibitem[VZ12]{verbin2012rademacher}
Elad Verbin and Qin Zhang.
\newblock Rademacher-sketch: A dimensionality-reducing embedding for
  sum-product norms, with an application to earth-mover distance.
\newblock In {\em International Colloquium on Automata, Languages, and
  Programming}, pages 834--845. Springer, 2012.

\bibitem[Woo14]{woodruff2014sketching}
David~P Woodruff.
\newblock Sketching as a tool for numerical linear algebra.
\newblock {\em arXiv preprint arXiv:1411.4357}, 2014.

\bibitem[WW19]{DBLP:conf/soda/WangW19}
Ruosong Wang and David~P. Woodruff.
\newblock Tight bounds for {\(\ell_p\)} oblivious subspace embeddings.
\newblock In {\em {SODA}}, pages 1825--1843. {SIAM}, 2019.

\bibitem[WZ13]{DBLP:conf/colt/WoodruffZ13}
David~P. Woodruff and Qin Zhang.
\newblock Subspace embeddings and
  {\textbackslash}({\textbackslash}ell{\_}p{\textbackslash})-regression using
  exponential random variables.
\newblock In Shai Shalev{-}Shwartz and Ingo Steinwart, editors, {\em {COLT}
  2013 - The 26th Annual Conference on Learning Theory, June 12-14, 2013,
  Princeton University, NJ, {USA}}, volume~30 of {\em {JMLR} Workshop and
  Conference Proceedings}, pages 546--567. JMLR.org, 2013.

\bibitem[ZZ18]{DBLP:conf/nips/ZhangZ18}
Lijun Zhang and Zhi{-}Hua Zhou.
\newblock $\ell_1$-regression with heavy-tailed distributions.
\newblock In {\em NeurIPS}, pages 1084--1094, 2018.

\end{thebibliography}

\appendix

\section{Missing proofs from Section \ref{sec:prelim}}\label{sec:appendix:prelim}

\begin{proof}[Proof of Lemma \ref{lem:concentration-unique-hashing}]
For each $i\in S$, sample $i$ with probability $p$ and place the result in a uniformly random hash bucket in $[r]$ if it was sampled. Let $\mathcal E_i$ denote the event where $i$ is sampled and is hashed to a bucket with no other members of $T$. Let $C_1, C_2,\dots, C_{\abs{S}}$ denote the sequence of these independent random choices and let $f(C_1, C_2, \dots, C_s)$ denote the number of hash buckets in $[r]$ that contains members $i\in S$ satisfying $\mathcal E_i$ at the end of the sampling and hashing process. Note that $f$ is $1$-Lipschitz, and that
\[
    \E f(C_1, C_2, \dots, C_{\abs{S}}) = \sum_{i\in S}\Pr\parens*{\mathcal E_i} = \abs{S}p\parens*{1 - \frac{p}{r}}^{\abs{T}}\geq p\abs{S}\parens*{1 - \frac{p\abs{T}}{r}}\geq (1-\eps)p\abs{S}.
\]

Now consider the Doob martingale
\[
    Z_k\coloneqq \E\bracks*{f_q(C_1, C_2, \dots, C_{\abs{S}})\mid C_1, C_2, \dots, C_k}. 
\]
Note that the increments $Z_k - Z_{k-1}$ conditioned on $C_1, C_2, \dots, C_{k-1}$ is simply the indicator variable of whether on choice $C_k$ we sampled an entry and placed it in a new bucket or not. Then $Z_k - Z_{k-1} = 1$ with probability at most $p$ and thus $\E_{k-1}(Z_k-Z_{k-1})^2\leq p$. Then by Freedman's inequality \cite{freedman1975tail},
\[
    \Pr\parens*{\abs{Z_{\abs{S}} - Z_0}\geq \eps Z_0}\leq 2\exp\parens*{-\frac12\frac{(\eps(1-\eps)p\abs{S})^2}{p\abs{S} + \eps(1-\eps)p\abs{S}/3}}\leq 2\exp\parens*{-\frac{\eps^2}{12}p\abs{S}}.\qedhere
\]
\end{proof}

\begin{proof}[Proof of Theorem \ref{thm:1+eps-dense-cauchy}]
We make minor modifications of Lemmas 2.10 and 2.12 in \cite{DBLP:conf/soda/WangW19}. Let $\{X_i\}_{i=1}^n$ be independent standard Cauchys.  

\paragraph{Upper bound.}

Let $\mathcal E_i\coloneqq \braces*{\abs{X_i}\leq r\log n(\log\log r)^{-1}}$. Then,
\[
    \Pr\parens*{\mathcal E_i} = 1 - \frac2\pi \arctan\parens*{\frac{r\log r}{\log\log r}}\geq 1 - \frac2\pi \frac{\log\log r}{r\log r} \gg \frac1{1+\eps}.
\]
Let $\mathcal E = \bigcap_{i=1}^r \mathcal E_i$. Then,
\[
    \E\parens*{\abs{X_i}\mid\mathcal E} = \E\parens*{\abs{X_i}\mid\mathcal E_i} = \frac1{\Pr(\mathcal E_i)}\frac1\pi\log\parens*{1 + \parens*{\frac{r\log r}{\log\log r}}^2}
\]
and thus by linearity of expectation,
\[
    \mu\coloneqq \E\parens*{\sum_{i=1}^r\abs{X_i} \mid \mathcal E} = \frac1{\Pr(\mathcal E_i)}\frac{r}\pi\log\parens*{1 + \parens*{\frac{r\log r}{\log\log r}}^2} \leq (1+\eps)\frac{2}{\pi}r\log r.
\]
Now by a Chernoff bound applied to the $\abs{X_i}(\log\log r/r\log r)\in[0,1]$ conditioned on $\mathcal E$,
\[
    \Pr\parens*{\sum_{i=1}^r \abs{X_i} \geq (1+\eps)\mu\mid \mathcal E}\leq \exp\parens*{-\frac{\eps^2 \mu}{3}\frac{\log\log r}{\log r}} = \exp\parens*{-\Theta(\eps^2\log\log r)}
\]
so
\begin{align*}
    \Pr\parens*{\sum_{i=1}^r \abs{X_i} \leq (1+\eps)\mu} &\geq \Pr\parens*{\sum_{i=1}^n \abs{X_i} \leq (1+\eps)\mu \mid \mathcal E}\Pr(\mathcal E) \\
    &\geq \parens*{1 - \exp\parens*{-\Theta(\eps^2\log\log r)}}\parens*{1 - \frac2\pi \frac{\log\log r}{r\log r}}^r \geq 1 - \frac{(3/\eps)^d}{\delta}.
\end{align*}

\paragraph{Lower bound.}

Let $T = \frac{(3/\eps)^d}{\delta}$. Note that by Taylor expansion, there is a $T'\geq 0$ such that for $t\geq T'$,
\[
    \Pr\parens*{\abs{X_i} > t}\geq\frac2\pi t^{-1} + O(t^{-3}).
\]
Now for $i\geq 0$ and $j\in[r]$, define the indicator
\[
    N_j^i \coloneqq \begin{cases}
        1 & \text{if $\abs{X_i} > (1+\eps)^i T'$} \\
        0 & \text{otherwise}
    \end{cases}
\]
and $N^i\coloneqq \sum_{j\in[r]} N_j^i$. Note that by the Taylor expansion bound,
\[
    \E N^i \geq \frac{2r}{\pi} \frac1{(1+\eps)^i T'}.
\]
Then by Chernoff bounds,
\[
    \Pr\parens*{N^i \geq (1+\eps)\E N^i}\leq \exp\parens*{-\frac{\eps^2}{3}\frac{2r}{\pi} \frac1{(1+\eps)^i T'}}
\]
Now let $i_{\max}$ be the largest $i$ such that
\[
    \exp\parens*{-\frac{\eps^2}{3}\frac{2r}{\pi} \frac1{(1+\eps)^i T'}}\leq \frac1{T}.
\]
Then by a union bound over the first $i_{\max}$ level sets, $N^i\geq (2/\pi)r(1+\eps)^i T'$ and thus with probability at least $1 - 1/T$,
\[
    \sum_{i=1}^r \abs{X_i} \geq \sum_{i=0}^{i_{\max}} \frac2\pi r(1+\eps)^i T' = \frac2\pi r \log\parens*{\frac1{T'}\frac{\eps^2}{3}\frac{2r}{\pi}\frac1{\log T}} \geq (1-\eps)\frac2\pi r\log r.
\]

\paragraph{Net argument.}

Given the above concentration results, the rest of the argument proceeds as done in \cite{DBLP:conf/soda/WangW19}, using $1$-stability of Cauchys and then a standard net argument. 
\end{proof}
\section{No contraction bound}\label{section:appendix:no-contraction}

In this section, we prove a no contraction result for a generic $M$-sketch embedding with subsampling rates $p_h$ as specified in Lemma \ref{lem:essential-weight-classes} and a hash bucket size of $N_0$ for the $0$th level and $N$ for the $h$th level for $h\in[h_{\max}]$ as specified in Definition \ref{def:useful-constants-no-contraction}. This allows us to apply the results to both $M$-sketch with random and fixed boundaries, with varied branching factors and failure rates. Recall the definition of the $M$-sketch from Definition \ref{def:M-sketch}. 

\begin{thm}[No contraction]\label{thm:no-contraction-high-prob}
Let $\bfy\in\mathbb R^n$ with $\norm{\bfy}_1 = 1$. Let $\eps\in(0,1)$ and $\delta\in (0,1)$. Let $\bfS$ be drawn as an $M$-sketch matrix. Then with probability at least $1 - 6\delta$,
\[
    \norm{\bfS\bfA\bfy}_1 \geq (1 - 16\eps)\norm{\bfA\bfy}_1.
\]
\end{thm}

\subsection{Essential weight classes}

We first classify a small subset of weight classes of $\bfy$ that we need to preserve for at least a $(1-\eps)$ approximation. 

\begin{lem}[Essential weight classes]\label{lem:essential-weight-classes}
Let $\bfy\in\mathbb R^n$ with $\norm{\bfy}_1 = 1$. Let $m_{\min}$ be a minimum class size parameter, let $B$ be a branching factor parameter, and let $\eps$ be an accuracy parameter. Finally, let $p_h = p_0/B^{h-1}$ for $h\in [\log_B n]$ be sampling rates. Define
\begin{align*}
    h_{\max} &\coloneqq \log_B n \\
    q_{\max} &\coloneqq \log_2\frac{n}{\eps} \\
    m_{\min} &\coloneqq \frac{12}{\eps^2}\log\frac{4q_{\max}}{\delta} \\
    M_{\geq} &\coloneqq \log_2\frac{B}{\eps} \\
    M_{<} &\coloneqq \log_2\frac{m_{\min}}{p_0\eps}
\end{align*}
and weight classes
\begin{align*}
    \hat Q_h &\coloneqq \braces*{q\in [q_{\max}] : m_{\min}\leq p_h\abs{W_q(\bfy)} < Bm_{\min}} && h\in[h_{\max}] \\ 
    Q_h &\coloneqq \braces*{q\in \hat Q_h : q\leq M_{\geq} + \min_{q\in\hat Q_h}q, \norm{W_q(\bfy)}_1\geq \frac{\eps}{q_{\max}}} && h\in[h_{\max}] \\
    Q_{<} &\coloneqq \braces*{q : \abs{W_q(\bfy)} < m_{\min}/p_0, q\leq M_{<}, \norm{W_q}_1\geq \frac{\eps}{M_{<}}} \\
    Q^* &\coloneqq Q_{<} \cup \bigcup_{h\in[h_{\max}]}Q_h
\end{align*}
Then,
\[
    \sum_{q\in Q^*}\norm{W_q(\bfy)}_1\geq 1-6\eps
\]
\end{lem}
\begin{rem}
The $\hat Q_h$ are the weight classes for which the $h$th level is the smallest level at which we sample at least $m_{\min}$ elements of $W_q$ in expectation, so that the mass is extremely concentrated. The $Q_h$ are the weight classes that restrict $\hat Q_h$ to only as many levels as we need to preserve the mass of $\hat Q_h$ up to a $1-\eps$ factor. The set $Q_{<}$ specifies the subset of levels that are too small for concentration, but are needed to preserve the mass of $\bfy$ up to a $1-\eps$ factor. The set $Q^*$ specifies the union of these essential weight classes needed for a $1-\eps$ approximation.
\end{rem}
\begin{proof}
Note that
\[
    \sum_{q > q_{\max}}\norm{W_q}_1\leq \frac{\eps}{n}\sum_{q > q_{\max}}\abs{W_q}_1\leq \eps
\]
so we restrict our attention to $q\in[q_{\max}]$. Note that every $q\in[q_{\max}]$ belongs in either exactly one class $\hat Q_h$, or $\abs{W}_q < m_{\min}/p_0$. The total weight of weight classes with $\abs{W_q} < m_{\min}/p_0$ and $q > M_{<}$ is at most
\[
    \sum_{q > M_{<}}\abs{W_q}2^{1-q} = 2\frac{m_{\min}}{p_0}2^{-M_{<}}\sum_{q > 0}2^{-q}\leq 2\frac{m_{\min}}{p_0}\frac{p_0\eps}{m_{\min}} = 2\eps.
\]
Furthermore, let $h\in[h_{\max}]$ and let $q_h^*\coloneqq \min_{q\in\hat Q_h}q$. Then the ratio of the total weight of classes in $W_q$ with $q > M_{\geq} + q_h^*$ to $\norm{W_{q_h^*}}_1$ is at most
\begin{align*}
    \frac{1}{\norm{W_{q_h^*}}_1}\sum_{q > M_{\geq} + q_h^*}2^{1-q}\frac{Bm_{\min}}{p_h} &\leq \frac{1}{2^{-q_h^*}m_{\min}/p_h}\sum_{q > M_{\geq} + q_h^*}2^{1-q}\frac{Bm_{\min}}{p_h} \\
    &= 2B\sum_{q > M_{\geq}}2^{-q}\leq 2B 2^{-M_{\geq}}\leq 2B\frac{\eps}{B} = 2\eps.
\end{align*}
We thus have that
\[
    \sum_{h\in[h_{\max}]} \sum_{q > M_{\geq} + q_h^*}\norm{W_{q}}_1\leq \sum_{h\in[h_{\max}]} 2\eps\norm{W_{q_h^*}}_1\leq 2\eps.
\]
Furthermore, the total weight of classes in $W_q$ with $\norm{W_q}_1 < \eps/q_{\max}$ is at most
\[
    \sum_{q : \norm{W_q}_1 < \eps/q_{\max}}\norm{W_q}_1\leq q_{\max}\frac{\eps}{q_{\max}} = \eps.
\]
We conclude by combining the above bounds.
\end{proof}

\subsection{Approximate perfect hashing}

\begin{dfn}[Useful constants]\label{def:useful-constants-no-contraction}
\begin{align*}
    N_0' &\geq \frac{1}{\delta}\frac{M_{<}}{p_0 \eps}m_{\min}\parens*{1+\frac76\frac{2\log(2M_{<}/\delta)}{\eps^2}} \\
    N_0 &\coloneqq 2N_0'\log N_0' && \text{Number of hash buckets at the $0$th level} \\
    N' &\geq \frac{B}{\eps}m_{\min}\parens*{M_{\geq}+\frac76\frac{2q_{\max}\log(2q_{\max}/\delta)}{\eps^2}} \\
    N &\coloneqq 2N'\log N' && \text{Number of hash buckets}
\end{align*}
\end{dfn}
We allow the flexibility to choose the number of buckets $N_0$ and $N$ to be larger if needed. The $N_0$ and $N$ are chosen so that
\[
    \frac{M_{<}}{p_0}m_{\min}\parens*{1 + \frac76\frac{2\log(2N_0M_{<}/\delta)}{\eps^2}}\leq \delta\eps N_0
\]
and
\[
    Bm_{\min}\parens*{M_{\geq} + \frac76\frac{2q_{\max}\log(2Nq_{\max}/\delta)}{\eps^2}}\leq \eps N.
\]

\begin{lem}[Concentration of sampled mass]\label{lem:concentration-of-sampled-mass}
Suppose $p_h\abs{W_q}\geq m_{\min}$. Then with probability at least $1-\delta/q_{\max}$,
\begin{align*}
    \sum_{\bfy_i\in W_q}b_{i,h} &= (1\pm\eps)p_h\abs{W_q} \\
    \sum_{\bfy_i\in W_q}\abs{\bfy_i} b_{i,h} &= (1\pm\eps)p_h\norm{W_q}_1
\end{align*}
\end{lem}
\begin{proof}
Let
\[
    X\coloneqq\sum_{\bfy_i\in W_q}b_{i,h}.
\]
By the Chernoff bound,
\[
    \Pr\parens*{\abs{X - \E X}\geq \eps\E X}\leq 2\exp\parens*{-\frac{\eps^2\E X}{3}}\leq \frac\delta{2q_{\max}}.
\]
Similarly, let
\[
    Y\coloneqq\sum_{\bfy_i\in W_q}\abs{\bfy_i} b_{i,h}.
\]
Note that
\begin{align*}
    \E Y &= p_h\norm{W_q}_1\geq 2^{-q}p_h\abs{W_q} \\
    \abs{\bfy_i} b_{i,h} &\leq 2^{1-q} \\
    \Var(\abs{\bfy_i} b_{i,h}) &\leq p_h 2^{2-2q}
\end{align*}
so by Bernstein's inequality,
\begin{align*}
    \Pr\parens*{\abs{Y - \E Y}\geq \eps\E Y} &\leq 2\exp\parens*{-\frac12\frac{(\eps\E Y)^2}{p_h 2^{2-2q} \abs{W_q} + (\eps\E Y) 2^{1-q}/3}} \\
    &\leq 2\exp\parens*{-\frac12\frac{(\eps 2^{-q}p_h\abs{W_q})^2}{p_h 2^{2-2q} \abs{W_q} + (\eps 2^{1-q} p_h\abs{W_q}) 2^{1-q}/3}} \\
    &= 2\exp\parens*{-\frac18\frac{p_h\abs{W_q}\eps^2}{1 + \eps/3}} \leq 2\exp\parens*{-\frac{\eps^2}{12}p_h\abs{W_q}} \leq \frac\delta{2q_{\max}}.
\end{align*}
We conclude by a union bound over the two events. 
\end{proof}

The following lemma uses a standard balls and bins martingale argument (e.g., \cite{lee2016lecture}) to show that most items are hashed uniquely. 
\begin{lem}[Approximately perfect hashing]\label{lem:approx-perfect-hashing}
Let $h\in[h_{\max}]$ and let $Q \subseteq \braces*{q : p_h\abs{W_q}\geq m_{\min}}$. Let $\hat W\subset\bfy$ contain $W_Q\coloneqq \bigcup_{q\in Q}W_q$. Let $p_h\abs{\hat W}\leq \eps N$ for some $\eps\in (0,1/2)$. Then with probability at least $1 - (3/2)\abs{Q}\delta/q_{\max}$, every $W_q$ has a $W_q^* \subset W_q$ that gets sampled and placed in a hash bucket with no other members of $\hat W$, and $\abs{W_q^*}\geq (1-3\eps) p_h\abs{W_q}$ and $\norm{W_q^*}_1\geq (1-9\eps) p_h\norm{W_q}_1$. 
\end{lem}
\begin{proof}
We apply Lemma \ref{lem:concentration-unique-hashing} to see that with probability at least
\[
    1 - 2\exp\parens*{-\frac{\eps^2}{12}p_h\abs{W_q}}\leq 1 - \frac\delta{2q_{\max}},
\]
there is a set $W_q^*\subseteq W_q$ of elements that are hashed to a bucket with no other element of $\hat W$ in it and of size $\abs{W_q^*}\geq (1-\eps)^2 p_h\abs{W_q}\geq (1-3\eps)p_h\abs{W_q}$ with probability at least $1 - \delta/2q_{\max}$. We condition on this event. 

By Lemma \ref{lem:concentration-of-sampled-mass}, with probability at least $1-\delta/q_{\max}$, we sample $(1\pm\eps)p_h\abs{W_q}$ elements with mass $(1\pm\eps)p_h\norm{W_q}_1$. Note then that there are at most $4\eps p_h\abs{W_q}$ sampled elements that do not belong $W_q^*$. The mass of these elements is at most
\[
    4\eps p_h\abs{W_q}2^{1-q}\leq 8\eps p_h\norm{W_q}_1.
\]
Thus,
\[
    \norm{W_q^*}_1\geq (1-\eps)p_h\norm{W_q}_1 - 8\eps p_h\norm{W_q}_1 = (1-9\eps)p_h\norm{W_q}_1.
\]
We conclude by a union bound over the weight classes $Q$. 
\end{proof}

\subsection{Preserving weight classes}

\begin{dfn}
\begin{align*}
    \tau_0 &\coloneqq \frac{p_0\eps}{2M_{<}m_{\min}} && \text{Size of a relatively large element at $0$th level} \\
    T_0 &\coloneqq \frac67\frac{\eps \tau_0}{\log(2N_0 M_{<}/\delta)} && \text{Size of a relatively small element at $0$th level} \\
    \tau_h &\coloneqq \frac{p_h\eps}{2q_{\max}Bm_{\min}} && \text{Size of a relatively large element} \\
    T_h &\coloneqq \frac67\frac{\eps \tau_h}{\log(2Nq_{\max}/\delta)} && \text{Size of a relatively small element}
\end{align*}
\end{dfn}

\begin{dfn}[Large elements]
\begin{align*}
    Q_{<,0} &\coloneqq \braces*{q : q\leq \log_2 \frac1{T_0}} \\
    Q_{<,h} &\coloneqq \braces*{q : q\leq \log_2 \frac1{T_h}}
\end{align*}
\end{dfn}
The weight class $Q_{<,h}$ is the set of relatively large elements at the $h$th level of sampling. 

We directly recall the following Lemma 3.3 from \cite{DBLP:conf/soda/ClarksonW15}.
\begin{lem}\label{lem:max-hash-bucket-bound}
Let $h\in[h_{\max}]$, $\bar W\subset \bfy$, $T\geq \norm{\bar W}_\infty$, and $\delta'\in(0,1)$. If
\[
    N \geq \frac{6\norm{\bar W}_1}{T\log (N/\delta)},
\]
then 
\[
    \Pr\parens*{\max_{k\in[N]}\norm{L_{h,k}\cap \bar W}_1\leq \frac76 T\log(N/\delta)}\geq 1-\delta'.
\]
\end{lem}

\subsubsection{Preserving weight classes in \texorpdfstring{$Q_{h}$}{Qh}}

\begin{lem}\label{lem:large-entries-in-Qh}
Let $h\in[h_{\max}]$, $q\in Q_h$. Then
\[
    \abs{\bfy_i}\geq \tau_h = \frac{p_h\eps}{2q_{\max}Bm_{\min}}.
\]
\end{lem}
\begin{proof}
By the definition of $Q_h$, we have that $m_{\min}\leq p_h\abs{W_q}\leq B m_{\min}$ and $\norm{W_q}_1\geq \frac{\eps}{q_{\max}}$. We then have that
\[
    \abs{W_q}2^{1-q}\geq \norm{W_q}_1\geq \frac{\eps}{q_{\max}}.
\]
Then for any $\bfy_i\in W_q$,
\[
    \abs{\bfy_i}\geq 2^{-q}\geq \frac{\eps}{2\abs{W_q}q_{\max}}\geq \frac{p_h\eps}{2q_{\max}Bm_{\min}}.\qedhere
\]
\end{proof}

\begin{lem}\label{lem:large-buckets}
Let $h\in[h_{\max}]$ and let $L_h(\bfy_i)$ denote the multiset of elements in the hash bucket in the $h$th level containing $\bfy_i$. Then with probability at least $1 - 2\abs{Q_h}\delta/q_{\max}$, for all $q\in Q_h$, we sample a set $W_q^*\subseteq W_q$ such that 
\[
    \norm{W_q^*}_1\geq (1-9\eps)p_h\norm{W_q}_1
\]
and for every $\bfy_i\in W_q^*$,
\[
    \Abs{\sum_{\bfy_j\in L_h(\bfy_i)}\Lambda_{j}\bfy_j}\geq (1-\eps)\abs{\bfy_i}.
\]
\end{lem}
\begin{proof}
Let $\hat W = W_{Q_h}\cup W_{Q_{<,h}}$. Then by our choice of $N$,
\[
    \abs{\hat W} \leq \abs{W_{Q_h}} + \abs{W_{Q_{<,h}}} \leq M_{\geq}\frac{Bm_{\min}}{p_h} + \frac1{T_h} = \frac{B}{p_h}m_{\min}\parens*{M_{\geq} + \frac76\frac{2q_{\max}\log(2Nq_{\max}/\delta)}{\eps^2}}\leq \frac{\eps N}{p_h}.
\]
Then by Lemma \ref{lem:approx-perfect-hashing}, with probability at least $1 - (3/2)\abs{Q_h}/q_{\max}$, for each $q\in Q_h$, there is a set of sampled elements $W_q^*\subseteq W_q$ that get hashed to a bucket with no other members of $\hat W$, and $\norm{W_q^*}_1\geq (1-9\eps)p_h\norm{W_q}_1$. 

Note that for each $q\in Q_h$ and $\bfy_i\in W_q^*$, the absolute value of the largest element in $L_h(\bfy_i)$ not equal to $\bfy_i$ is at most $T_h$, since we have hashed the elements of $W_{Q_{<,h}}$ to other buckets. Then by Lemma \ref{lem:max-hash-bucket-bound}, the $\ell_1$ mass of elements that are at most $T_h$ in all hash buckets are at most 
\[
    \norm{L_h(\bfy_i) \setminus \{\bfy_i\}}_1\leq \frac76 T_h \log(2Nq_{\max}/\delta) = \eps \tau_h
\]
with probability at least $1-\delta/2q_{\max}$. By a union bound over $q\in Q_h$, this is true for all $\bfy_i\in W_q^*$ for $q\in Q_h$ with probability at least $1 - 2\abs{Q_h}\delta/q_{\max}$. 

Recall from Lemma \ref{lem:large-entries-in-Qh} that $\abs{\bfy_i}\geq \tau_h$ for all $\bfy_i\in W_q$ with $q\in Q_h$. Note then that the mass of this hash bucket is at least
\[
    \Abs{\sum_{\bfy_j\in L_h(\bfy_i)}\Lambda_{j}\bfy_j}\geq \abs{\bfy_i} - \norm{L_h(\bfy_i) \setminus \{\bfy_i\}}_1\geq \abs{\bfy_i} - \eps\tau_h\geq (1-\eps)\abs{\bfy_i}
\]
which is the desired bound. Thus overall, the total success probability is at least $1 - 2\abs{Q_h}\delta/q_{\max}$. 
\end{proof}

\begin{lem}\label{lem:approximation-factor-Qh}
Let $h\in[h_{\max}]$. Then with probability at least $1 - 2\abs{Q_h}\delta/q_{\max}$, we have that
\[
    \norm{\bfC^{(h)}\bfS^{(h)}\bfy}_1 \geq (1-10\eps)\sum_{q\in Q_h}\norm{W_q}_1.
\]
\end{lem}
\begin{proof}
Taking a sum over $q\in Q_h$ and $\bfy_i\in W_q^*$, we find that 
\begin{align*}
    \norm{\bfC^{(h)}\bfS^{(h)}\bfy}_1 &\geq \frac1{p_h}\sum_{q\in Q_h}\sum_{\bfy_i\in W_q}b_{i,h}\Abs{\sum_{\bfy_j\in L_h(\bfy_i)}\Lambda_{j}\bfy_j} && \text{Looking only at rows in $Q_h$} \\
    &\geq \frac1{p_h}\sum_{q\in Q_h}\sum_{\bfy_i\in W_q^*}\Abs{\sum_{\bfy_j\in L_h(\bfy_i)}\Lambda_{j}\bfy_j} && \text{Looking only at good sampled elements $W_q^*$} \\
    &\geq \frac1{p_h}\sum_{q\in Q_h}\sum_{\bfy_i\in W_q^*}(1-\eps)\abs{\bfy_i} && \text{Lemma \ref{lem:large-buckets}} \\
    &\geq (1-\eps)\frac1{p_h}\sum_{q\in Q_h}\norm{W_q^*}_1 \\
    &\geq (1-\eps)(1-9\eps)\frac1{p_h}\sum_{q\in Q_h}p_h\norm{W_q}_1 && \text{Lemma \ref{lem:large-buckets}} \\
    &\geq (1-10\eps)\sum_{q\in Q_h}\norm{W_q}_1
\end{align*}
which is the desired bound. The failure probability is the same as from Lemma \ref{lem:large-buckets}.
\end{proof}

\subsubsection{Preserving weight classes in \texorpdfstring{$Q_{<}$}{Q<}}

With essentially the exact same proofs as in the above section, we have the following analogues of Lemmas \ref{lem:large-entries-in-Qh}, \ref{lem:large-buckets}, and \ref{lem:approximation-factor-Qh}. 
\begin{lem}
Let $q\in Q_{<}$. Then
\[
    \abs{\bfy_i}\geq \tau_0 = \frac{\eps p_0}{M_{<}m_{\min}}.
\]
\end{lem}
\begin{lem}
Let $Q = \braces*{q\in Q_{<}: \abs{W_q}\geq m_{\min}}$. Let $L_0(\bfy_i)$ denote the multiset of elements in the hash bucket in the $0$th level containing $\bfy_i$. Then with probability at least $1 - 2\abs{Q}\delta/M_{<}$, for all $q\in Q$, there is a set $W_q^*\subseteq W_q$ such that $W_q^*$ is hashed to a different bucket than $W_{Q_{<,0}}\supset W_{Q_{<}}$, 
\[
    \norm{W_q^*}_1\geq (1-9\eps)p_h\norm{W_q}_1,
\]
and for every $\bfy_i\in W_q^*$,
\[
    \Abs{\sum_{\bfy_j\in L_0(\bfy_i)}\Lambda_{j}\bfy_j}\geq (1-\eps)\abs{\bfy_i}.
\]
\end{lem}

\begin{lem}\label{lem:preserve-weight-class-geq-m}
Let $Q = \braces*{q\in Q_{<}: \abs{W_q}\geq m_{\min}}$. Then with probability at least $1 - 2\abs{Q}\delta/M_{<}$,
\[
    \norm{\bfC^{(0)}\bfy}_1\geq (1-10\eps)\sum_{q\in Q}\norm{W_q}_1. 
\]
\end{lem}

It thus remains to handle the case of $\{q\in Q_{<} : \abs{W_q} < m_{\min}\}$. For these small level sets, we can perfectly hash these into separate buckets from all the entries in $Q_{<,0}$. 

\begin{lem}\label{lem:preserve-weight-class-le-m}
Let $Q = \{q\in Q_{<} : \abs{W_q} < m_{\min}\}$. Let $L_0(\bfy_i)$ denote the multiset of elements in the hash bucket in the $0$th level containing $\bfy_i$. With probability at least $1-2\delta$, every member of $W_Q$ is hashed to a different bucket than $W_{Q_{<,0}}\supset W_{Q_{<}}$, and we have for every $\bfy_i\in W_Q$ that
\[
    \Abs{\sum_{\bfy_j\in L_0(\bfy_i)}\Lambda_{j}\bfy_j}\geq (1-\eps)\abs{\bfy_i}.
\]
\end{lem}
\begin{proof}
Note that
\[
    N_0\geq \frac1\delta \abs{W_Q} \abs{W_{Q_{<,0}}}. 
\]
Then for every $(\bfy_i, \bfy_j)\in W_Q\times W_{Q_{<,0}}$, there is a $\delta/\abs{W_Q} \abs{W_{Q_{<,0}}}$ probability that $\bfy_i$ and $\bfy_j$ get hashed to the same location. By a union bound, none of these pairs are hashed to to the same location with probability at least $1-\delta$. Then by Lemma \ref{lem:max-hash-bucket-bound}, the $\ell_1$ mass of elements that are at most $T_0$ in all hash buckets are at most
\[
    \norm{L_0(\bfy_i)\setminus\{\bfy_i\}}_1\leq \frac76 T_0 \log(2N_0M_{<}/\delta) = \eps \tau_0
\]
with probability at least $1 - \delta/2M_{<}$. By a union bound over $q\in Q$, this is true for all $\bfy_i\in W_Q$ with probability at least $1 - \abs{Q}\delta/2M_{<}\geq 1 - \delta$. Then,
\[
    \Abs{\sum_{\bfy_j\in L_0(\bfy_i)}\Lambda_{j}\bfy_j}\geq \abs{\bfy_i} - \norm{L_0(\bfy_i) \setminus \{\bfy_i\}}_1\geq \abs{\bfy_i} - \eps\tau_0\geq (1-\eps)\abs{\bfy_i}
\]
which is the desired bound. Thus overall, the failure probability is $1 - 2\delta$. 
\end{proof}

\subsection{Proof of Theorem \ref{thm:no-contraction-high-prob}}
We finally gather the pieces from above. 
\begin{proof}{Proof of Theorem \ref{thm:no-contraction-high-prob}}
We union bound over the events and sum over the results of Lemmas \ref{lem:preserve-weight-class-geq-m}, \ref{lem:preserve-weight-class-le-m}, and \ref{lem:approximation-factor-Qh}, so that with probability at least $1 - 6\delta$, 
\[
    \norm{\bfS\bfA\bfy}_1 \geq (1-10\eps)\sum_{q\in Q_{<}\cup \bigcup_{h\in[h_{\max}]}Q_h} \norm{W_q}_1.
\]
We conclude by chaining this inequality together with the result of Lemma \ref{lem:essential-weight-classes}. 
\end{proof}
\section{Missing proofs from Section \ref{section:entrywise-embeddings}}\label{section:appendix:entrywise_embeddings}

\begin{proof}[Proof of Lemma \ref{lem:rademacher-dimension-l1}]
By Hoeffding bounds, we have for each $j\in[d]$ that
\[
    \Pr\parens*{\Abs{\sum_{i=1}^s \eps_i \bfe_j^\top \bfx_i} > \sqrt{\frac{s}{2}\log\frac{2d}{\delta}}}\leq 2\exp\parens*{-\frac{2(\sqrt{(s/2)\log(2d/\delta)})^2}{s}}\leq \frac{\delta}{d}
\]
Then by a union bound over the $d$ choices of $j$, with probability at least $1-\delta$, the complement event of the above holds for every $j\in[d]$. Conditioned on this event, we have that
\[
    \Norm{\sum_{i=1}^s \eps_i \bfx_i}_1\leq \sum_{j=1}^d \Abs{\sum_{i=1}^s \eps_i \bfe_j^\top \bfx_i}\leq d\sqrt{\frac12\log\frac{2d}{\delta}}\sqrt{s}
\]
as desired.
\end{proof}

\begin{proof}[Proof of Lemma \ref{lem:sketch-cauchy-matrix}]
For each $i\in[r]$ and $j\in[d]$, by the $1$-stability of Cauchy variables,
\[
    \bfe_i^\top\bfS\bfA\bfe_j \stackrel{d}{=} \norm{\bfe_i^\top\bfS}_1\mathcal C_{i,j}
\]
where $\mathcal C_{i,j}$ are drawn as standard Cauchy variables, and are independent for distinct $j$. Now note that $\abs{\mathcal C_{i,j}}\leq O(rd)$ with probability at least $1-(100rd)^{-1}$ and thus by a union bound, $\max_{i\in[r], j\in[d]} \abs{\mathcal C_{i,j}}\leq O(rd)$ with probability at least $1 - 1/400$. We condition on this event. Note then that the conditional expectation is at most
\[
    \E\abs{\mathcal C_{i,j}}\leq O(\log(rd))
\]
as shown in \cite{DBLP:journals/jacm/Indyk06}. Then,
\[
    \E\norm{\bfS\bfA}_1 = \sum_{j=1}^d\sum_{i=1}^r \E\abs{\bfe_i^\top\bfS\bfA\bfe_j} = \sum_{j=1}^d\sum_{i=1}^r \norm{\bfe_i^\top\bfS}_1 \E\abs{\mathcal C_{i,j}} = O(d\log(rd))\norm{\bfS}_1
\]
so a Markov bound and a union bound with the earlier event shows that
\[
    \Pr\parens*{\norm{\bfS\bfA}_1\leq O(d\log(rd))\norm{\bfS}_1}\geq 1 - \frac1{200}.
\]

For the lower bound, let $\hat{\mathcal C}_{i,j}$ be the truncation of $\hat{\mathcal C}_{i,j}$ at $d$, i.e., 
\[
    \hat{\mathcal C}_{i,j} = \begin{cases}
        \mathcal C_{i,j} & \text{if $\abs{\mathcal C_{i,j}}\leq d$} \\
        0 & \text{otherwise}
    \end{cases}.
\]
Note then that by \cite[Lemma 6]{DBLP:journals/jacm/Indyk06}, $\Var(\abs{\hat C_{i,j}}) = \Theta(d)$ so
\[
    \sigma^2\coloneqq \Var\parens*{\sum_{j=1}^d \norm{\bfe_i^\top\bfS}_1 \abs{\hat{\mathcal C}_{i,j}}} = \sum_{j=1}^d \Theta(d)\norm{\bfe_i^\top\bfS}_1^2 = \Theta(d^2)\sum_{j=1}^d \norm{\bfe_i^\top\bfS}_1^2
\]
and
\[
    \mu\coloneqq \E\parens*{\sum_{j=1}^d \norm{\bfe_i^\top\bfS}_1 \abs{\hat{\mathcal C}_{i,j}}} = \Theta(d\log d)\sum_{j=1}^d\norm{\bfe_i^\top\bfS}_1. 
\]
Then by Chebyshev's inequality,
\[
    \Pr\parens*{\sum_{j=1}^d \norm{\bfe_i^\top\bfS}_1 \abs{\hat{\mathcal C}_{i,j}} - \mu \leq \Theta(\log d)\sigma}\leq \frac1{400}. 
\]
Thus, with probability at least $1 - 1/400$,
\[
    \sum_{j=1}^d \norm{\bfe_i^\top\bfS}_1 \abs{\mathcal C_{i,j}} \geq \sum_{j=1}^d \norm{\bfe_i^\top\bfS}_1 \abs{\hat{\mathcal C}_{i,j}} \geq \gamma \norm{\bfe_i^\top \bfS}_1
\]
for $\gamma = \Omega(d\log d)$. Let $\mathcal E_i$ denote the above event, so that $\Pr(\mathcal E_i)\geq 1- 1/400$. Then,
\[
    \E\parens*{\sum_{i=1}^r \mathbbm{1}(\neg\mathcal E_i)\bracks*{\sum_{j=1}^d \norm{\bfe_i^\top\bfS}_1 \abs{\mathcal C_{i,j}}}}\leq \sum_{i=1}^r \frac{1}{400} \gamma\norm{\bfe_i^\top\bfS}_1
\]
so by Markov's inequality, with probability at least $1 - 1/200$,
\[
    \sum_{i=1}^r \mathbbm{1}(\neg\mathcal E_i)\bracks*{\sum_{j=1}^d \norm{\bfe_i^\top\bfS}_1 \abs{\mathcal C_{i,j}}}\leq \frac{\gamma}{2}\sum_{i=1}^r \norm{\bfe_i^\top\bfS}_1 = \frac{\gamma}{2}\norm{\bfS}_1.
\]
Then, conditioning on this event,
\[
    \norm{\bfS\bfA}_1 \geq \sum_{i=1}^r \bracks*{\sum_{j=1}^d \norm{\bfe_i^\top\bfS}_1 \abs{\mathcal C_{i,j}}}(1 - \mathbbm{1}(\neg\mathcal E_i))\geq \gamma \sum_{i=1}^r \norm{\bfe_i^\top\bfS}_1 - \frac{\gamma}2 \norm{\bfS}_1 = \frac{\gamma}{2}\norm{\bfS}_1
\]
as desired.
\end{proof}
\section{Missing proofs from Section \ref{section:subspace_embeddings_random}}\label{section:appendix:subspace_embeddings_random}

\begin{proof}[Proof of Lemma \ref{lem:truncated-p-moments}]
Because $\Pr(\abs{X}\leq T) = \Theta(1)$ for $T$ large enough,
\[
    \E_{X\sim\mathsf{trunc}_T(\mathcal D)} \abs{X} = \Theta(1)\E_{X\sim\mathcal D} (\abs{X}\mid \abs{X}\leq T), \qquad \E_{X\sim\mathsf{trunc}_T(\mathcal D)} X^2 = \Theta(1)\E_{X\sim\mathcal D} (X^2\mid \abs{X}\leq T)
\]
By the layer cake theorem,
\begin{align*}
    \E_{X\sim\mathcal D} (\abs{X}\mid \abs{X}\leq T) &= \int_{0}^\infty \Pr\parens*{\abs{X}>x\mid \abs{X}\leq T}~dx \\
    &= \int_0^T \frac{\Pr\parens*{x < \abs{X}\leq T}}{\Pr\parens*{\abs{X}\leq T}} \\
    &= \frac{1}{\Pr\parens*{\abs{X}\leq T}}\int_0^T \Pr\parens*{\abs{X}>x} - \Pr(\abs{X}>T)~dx \\
    &= \Theta(1)\int_{0}^{T} \Theta(x^{-p})~dx
\end{align*}
and similarly,
\[
    \E_{X\sim\mathcal D} (X^2\mid \abs{X}\leq T) = \Theta(1)\int_0^T x\Pr(\abs{X}>x)~dx = \Theta(1)\int_0^T \Theta(x^{1-p})~dx.
\]
Solving the simple integrals yields the desired results.
\end{proof}

\begin{proof}[Proof of Theorem \ref{thm:subspace-embedding-p<1}]
The distortion upper bound is just Lemma \ref{lem:no-expansion}. 

\paragraph{Mass of small entries.}

Let $\bfA = \bfA^H + \bfA^L$ as in Definition \ref{def:AH-AL}, with $T = O\parens*{(nd^2\log d / r)^{1/p}}$. Then, $\bfA^L\sim \mathsf{trunc}_T(\mathcal D)^{n\times d}$ where by Lemma \ref{lem:truncated-p-moments}, the first two moments of each entry are
\[
    \mu = \Theta(T^{1-p}), \qquad \sigma = \Theta(T^{2-p}).
\]
Then by Bernstein's inequality,
\begin{align*}
    -\log \Pr\parens*{\norm{\bfA^L\bfe_j}_1 \geq 2\mu n} &\geq \frac12\frac{(\mu n)^2}{\sigma^2 n + \mu nT/3} \\
    &= \frac{\Omega((T^{1-p}n)^2)}{O(T^{2-p})n + O(T^{1-p})nT} = \Omega(nT^{-p}) = \Omega\parens*{\frac{r}{d^2\log d}} = \Omega(\log d)
\end{align*}
Thus, $\Pr(\norm{\bfA^L\bfe_j}_1 \leq 2\mu n)\geq 1 - 1/\poly(d)$ so by a union bound over the $d$ columns, this event simultaneously holds for all $d$ columns with probability at least $1 - 1/\poly(d)$. Conditioned on this event, by the triangle inequality,
\[
    \norm{\bfA^L\bfx}_1\leq O\parens*{\frac{n^{1/p}}{(\log d)^{1/p-1}}}\norm{\bfx}_1
\]
for all $\bfx\in\mathbb R^d$. 

\paragraph{Mass of large entries.}

Furthermore, let $\bfB'$ be the subset of rows of $\bfA^H$ given by Lemma \ref{lem:p=1-unique-hashing-rows} that are hashed to locations without any other rows of $\bfA^H$. Recall also $\tau_1$ and $\tau_2$ from the lemma. 

We first have that $\norm{\bfS\bfB'\bfx}_1 = \Omega(\norm{\bfA^H\bfx}_1)$ since the rows containing entries larger than $\tau_1$ are perfectly hashed, while rows containing entries between $\tau_2$ and $\tau_1$ are preserved up to constant factors. 

Let $\bfB' = \bfB_{>T}' + \bfB_{\leq T}'$ where $\bfB'_{>T}$ contains the entries of $\bfB'$ that have absolute value greater than $T$ and $\bfB'_{\leq T}$ contains the rest of the entries. Note then that $\bfB'_{>T}$ has at most one nonzero entry per row, and $\bfB'_{\leq T}$ has at most $O(d\cdot r/d\log d) = O(r/\log d)$ nonzero entries and thus by Lemma \ref{lem:power-law-union-bound}, $\norm{\bfB'_{\leq T}}_\infty\leq O(r^{1/p})$ with probability at least $0.99$. We condition on this event. Then for all $\bfx$, 
\begin{align*}
    \norm{\bfS\bfA^H\bfx}_1 &\geq \norm{\bfS\bfB'\bfx}_1 \\
    &\geq \norm{\bfS\bfB_{> T}'\bfx}_1 - \norm{\bfS\bfB_{\leq T}'\bfx}_1 \\
    &= \sum_{j=1}^d \abs{\bfx_j}\norm{\bfB_{>T}'\bfe_j}_1 - \norm{\bfS\bfB_{\leq T}'\bfx}_1 && \text{$\bfB_{>T}'\bfe_j$ have disjoint support} \\
    &\geq \sum_{j=1}^d \abs{\bfx_j}\sum_{k=\log_2 \tau_2}^{\log_2 \tau_1}2^k\Theta(n/2^{kp}) - \norm{\bfB_{\leq T}'\bfx}_1 && \text{Lemmas \ref{lem:p=1-unique-hashing-rows} and \ref{lem:no-expansion}} \\
    &= \Omega((n/\log d)^{1/p}\log d)\norm{\bfx}_1 - O(r)\norm{\bfB_{\leq T}'}_\infty\norm{\bfx}_1 && \text{H\"older's inequality} \\
    &= \Omega(n^{1/p}/(\log d)^{1/p-1})\norm{\bfx}_1 - O(r^{1+1/p})\norm{\bfx}_1 \\
    &= \Omega(n^{1/p}/(\log d)^{1/p-1})\norm{\bfx}_1.
\end{align*}

\paragraph{Conclusion.}

On the other hand, by Lemma \ref{lem:p=1-small-SC1x}, the mass of the $O(r/d\log d)$ rows that are hashed together with the rows of $\bfA^H$ have mass at most
\[
    O\parens*{\frac{1}{\sqrt{\log d}}\frac{n^{1/p}}{(r/d^2\log d)^{1/p-1}}}\norm{\bfx}_1 = o\parens*{n^{1/p}/(\log d)^{1/p-1}}\norm{\bfx}_1. 
\]
Then,
\[
    \frac1{\kappa} \geq \frac{\norm{\bfS\bfA\bfx}_1}{\norm{\bfA\bfx}_1}\geq \frac{\norm{\bfS\bfA^H\bfx}_1 - \norm{\bfS\bfC_1\bfx}_1}{\norm{\bfA^H\bfx}_1 + \norm{\bfA^L\bfx}_1} \geq \frac{\Omega(\norm{\bfA^H\bfx}_1 + n^{1/p}/(\log d)^{1/p-1})\norm{\bfx}_1}{O(\norm{\bfA^H\bfx}_1 + n^{1/p}/(\log d)^{1/p-1}))\norm{\bfx}_1} \geq \Omega\parens*{1}.\qedhere
\]
\end{proof}

\begin{proof}[Proof of Lemma \ref{lem:0<p<2:small-hash-bucket-columns}]
For a hash bucket $i\in[r]$ and $k\in [d]$, let
\[
    Y_{i,k}\coloneqq \Abs{\sum_{j : h(j) = i} \bfe_j^\top \bfC \bfe_k}
\]
where $h$ is the hash function for the \textsf{CountSketch} matrix $\bfS$. By Chernoff bounds and a union bound, there are $\Theta(n/r)$ rows $j\in[n]$ such that $h(j) = i$ for all buckets $i\in[r]$, with probability at least $1 - r\exp(-\Theta(n/r)) = 1 - o(1)$. Conditioned on this event, which is independent of the randomness of $\bfC$,
\begin{align*}
    \E Y_{i,k}^2 &= \sum_{j_1, j_2\in h^{-1}(i)\times h^{-1}(i)} \E\bracks*{(\bfe_{j_1}^\top\bfC\bfe_k)(\bfe_{j_2}^\top\bfC\bfe_k)} \\
    &= \sum_{j:h(j) = i}\E\parens*{\bfe_j^\top\bfC\bfe_k}^2 = O\parens*{\frac{n}{r}T^{2-p}} = O\parens*{(d^2\log d)^{(2-p)/p}\parens*{\frac{n}{r}}^{2/p}}
\end{align*}
by the second moment bound in Lemma \ref{lem:truncated-p-moments}. 

Now let $S$ be the subset of rows of $\bfS'$. Then for each $k\in[d]$,
\begin{align*}
    \E\bracks*{\sum_{i\in S}Y_{i,k}} &= \sum_{i\in S}\E Y_{i,k}\leq \sum_{i\in S}\sqrt{\E Y_{i,k}^2} = O\parens*{r'(d^2\log d)^{1/p-1/2}(n/r)^{1/p}} \\
    \Var\parens*{\sum_{i\in S}Y_{i,k}} &= \sum_{i\in S}\Var(Y_{i,k}) = O\parens*{r'(d^2\log d)^{(2-p)/p}\parens*{n/r}^{2/p}}.
\end{align*}
By Chebyshev's inequality,
\[
    \Pr\parens*{\sum_{i\in S} Y_{i,k}\leq \E\bracks*{\sum_{i\in S}Y_{i,k}} + \lambda\sqrt{\Var\parens*{\sum_{i\in S}Y_{i,k}}}}\geq 1 - \frac1{\lambda}
\]
which gives the desired result.
\end{proof}

\begin{proof}[Proof of Lemma \ref{lem:1<p<2-constant-p-stability}]
We compare $\mathcal D$ to a $p$-stable distribution $\mathcal D_p$. By \cite[Theorem 1.12]{Nolan}, a $p$-stable distribution is a power law with index $p$. Then, there exist constants $T$ and $c$ such that for all $t\geq T$,
\[
    \Pr_{X\sim\mathcal D_p}\parens*{cX > t}\leq \Pr_{Y\sim\mathcal D}\parens*{Y>t}.
\]
We then define the distribution $\mathcal D_p'$ which draws $Z\sim\mathcal D_p'$ as $cX$ for $X\sim\mathcal D$ if $\abs{cX} > T$, and $0$ otherwise. Note then that for $Z\sim\mathcal D_p'$ and $Y\sim\mathcal D$, $\abs{Y}$ stochastically dominates $\abs{Z}$. 

We are then in the position to apply the following theorem from probability theory.
\begin{thm}[Theorem 2, \cite{pruss1997comparisons}]\label{thm:symmetric-sum-dominance}
Let $X_1, X_2, \dots, X_d$ be independent symmetric random variables, and suppose $Y_1, Y_2, \dots, Y_d$ are also independent symmetric random variables. Assume that for every $j$ we have $\abs{Y_j}$ stochastically dominated by $\abs{X_j}$. Then
\[
    \Pr\braces*{\Abs{\sum_{j=1}^d Y_j}\geq\lambda}\leq 2\Pr\braces*{\Abs{\sum_{j=1}^d X_j}\geq\lambda}
\]
for every positive $\lambda$. 
\end{thm}

Thus, it suffices to show Equation \ref{eqn:1<p<2-constant-p-stability} for $\mathcal D_p'$ in place of $\mathcal D$. For $j\in[d]$, let $X_j\sim\mathcal D_p$ and define
\[
    \hat X_j\coloneqq \begin{cases}
        0 & \text{if $\abs{cX_j}>T$} \\
        X_j & \text{otherwise}
    \end{cases}.
\]
Note then that $X_j - \hat X_j\sim \mathcal D_p'$, so
\[
    \Pr\braces*{\Abs{\sum_{j=1}^d \bfx_j Y_j}\geq\lambda} = \Pr\braces*{\Abs{\sum_{j=1}^d \bfx_j(X_j-X_j')}\geq\lambda}.
\]
We first have by $p$-stability that
\[
    \Abs{\sum_{j=1}^d \bfx_j X_j} \stackrel{d}{=} \norm{\bfx}_p\abs{\hat X}
\]
for a $p$-stable variable $\hat X$, so there are constants $R$, $p$ such that
\[
    \Pr\parens*{\Abs{\sum_{j=1}^d \bfx_j X_j}\geq R\norm{\bfx}_p} = \Pr\parens*{\norm{\bfx}_p\abs{\hat X}\geq R\norm{\bfx}_p} = \Pr\parens*{\abs{\hat X}\geq R}\geq p.
\]
Next note that $X_j' \leq T/c = O(1)$ so
\[
    \E \Abs{\sum_{j=1}^d \bfx_j X_j'}\leq \sqrt{\E\Abs{\sum_{j=1}^d \bfx_j X_j'}^2} = \E \sqrt{\sum_{i=1}^d\sum_{j=1}^d \bfx_i\bfx_j\E[X_i'X_j']} = \sqrt{\sum_{j=1}^d \bfx_j^2\E X_j'^2} = O(\norm{\bfx}_2)
\]
by Jensen's inequality. Then by Markov's inequality, with probability at least $1 - p/2$, $\Abs{\sum_{j=1}^d \bfx_j X_j'} \leq C\norm{\bfx}_2$ for some constant $C$ that depends on $p$. Then for $\bfx$ such that $R\norm{\bfx}_p \geq 2C\norm{\bfx}_2$, we have by a union bound that
\[
    \Pr\braces*{\Abs{\sum_{j=1}^d \bfx_j(X_j-X_j')}\geq \frac{R}{2}\norm{\bfx}_p }\geq \Pr\braces*{\Abs{\sum_{j=1}^d \bfx_j X_j} - \Abs{\sum_{j=1}^d \bfx_j X_j'}\geq \frac{R}{2}\norm{\bfx}_p }\geq \frac{p}{2}.
\]
On the other hand, if $R\norm{\bfx}_p < 2C\norm{\bfx}_2$, the argument in Lemma \ref{lem:p>=2-expectation} shows that
\[
    \Pr\parens*{\Abs{\sum_{j=1}^d \bfx_j Y_j}\geq \Omega(\norm{\bfx}_2)} = \Omega(1)
\]
so the result holds under this case as well. 
\end{proof}

\begin{proof}[Proof of Theorem \ref{thm:p-1-2-countsketch}]
The distortion upper bound is just Lemma \ref{lem:no-expansion}. 

\paragraph{Mass of small entries.}

Let $\bfA = \bfA^H + \bfA^L$ as in Definition \ref{def:AH-AL}, with $T = O\parens*{(nd^2\log d / r)^{1/p}}$. 

By Lemma \ref{lem:chernoff-power-law-levels}, the sizes and mass of all level sets $\bfv_{(k)}$ with entries at most $2^k\leq T$ are concentrated around their means up to constant factors with probability at least $1 - \exp(-\Theta(n2^{-kp}))$. Thus by a union bound over $d$ columns $j$ and level sets $0\leq k\leq \log_2 T$, with probability at least
\[
    1 - d\sum_{k = 0}^{\log_2 T} \exp\parens*{-\Theta(n2^{-kp})} \geq 1 - d\exp\parens*{-\Theta(\log d)} = 1 - \frac1{\poly(r/d^2\log d)}
\]
we have for all $j\in[d]$ and $0\leq k\leq \log_2 T$ that
\[
    \norm{(\bfA\bfe_j)_{(k)}}_0 = \Theta(n2^{-kp}) \qquad \norm{(\bfA\bfe_j)_{(k)}}_1 = \Theta(n2^{k(1-p)})
\]
Then
\[
    \norm{\bfA^L\bfe_j}_1\leq \sum_{k=0}^{\log_2 T}\norm{(\bfA\bfe_j)_{(k)}}_1 = O(n).
\]

\paragraph{Mass of large entries.}

Furthermore, let $\bfB'$ be the subset of rows of $\bfA^H$ given by Lemma \ref{lem:p=1-unique-hashing-rows} that are hashed to locations without any other rows of $\bfA^H$. Recall also $\tau_1$ and $\tau_2$ from the lemma. 

We first have that $\norm{\bfS\bfB'\bfx}_1 = \Omega(\norm{\bfA^H\bfx}_1)$ since the rows containing entries larger than $\tau_1$ are perfectly hashed, while rows containing entries between $\tau_2$ and $\tau_1$ are preserved up to constant factors. 

Let $\bfB' = \bfB_{>T}' + \bfB_{\leq T}'$ where $\bfB'_{>T}$ contains the entries of $\bfB'$ that have absolute value greater than $T$ and $\bfB'_{\leq T}$ contains the rest of the entries. Note then that $\bfB'_{>T}$ has at most one nonzero entry per row, and $\bfB'_{\leq T}$ has at most $O(d\cdot r/d\log d) = O(r/\log d)$ nonzero entries and thus by Lemma \ref{lem:power-law-union-bound}, $\norm{\bfB'_{\leq T}}_\infty\leq O(r^{1/p})$ with probability at least $0.99$. We condition on this event. Then for all $\bfx$, 
\begin{align*}
    \norm{\bfS\bfA^H\bfx}_1 &\geq \norm{\bfS\bfB'\bfx}_1 \\
    &\geq \norm{\bfS\bfB_{> T}'\bfx}_1 - \norm{\bfS\bfB_{\leq T}'\bfx}_1 \\
    &= \sum_{j=1}^d \abs{\bfx_j}\norm{\bfB_{>T}'\bfe_j}_1 - \norm{\bfS\bfB_{\leq T}'\bfx}_1 && \text{$\bfB_{>T}'\bfe_j$ have disjoint support} \\
    &\geq \sum_{j=1}^d \abs{\bfx_j}\sum_{k=\log_2 \tau_2}^{\log_2 \tau_1}2^k\Theta(n/2^{kp}) - \norm{\bfB_{\leq T}'\bfx}_1 && \text{Lemmas \ref{lem:p=1-unique-hashing-rows} and \ref{lem:no-expansion}} \\
    &= \Omega\parens*{\frac{r}{d^2\log d}(nd^2\log d/r)^{1/p}}\norm{\bfx}_1 - O(r)\norm{\bfB_{\leq T}'}_\infty\norm{\bfx}_1 && \text{H\"older's inequality} \\
    &= \Omega\parens*{(r/d^2\log d)^{1-1/p}n^{1/p}}\norm{\bfx}_1 - O(r^{1+1/p})\norm{\bfx}_1 \\
    &= \Omega((r/d^2\log d)^{1-1/p}n^{1/p})\norm{\bfx}_1.
\end{align*}

\paragraph{Conclusion.}

On the other hand, by Lemma \ref{lem:p=1-small-SC1x}, the mass of the $O(r/d\log d)$ rows that are hashed together with the rows of $\bfA^H$ have mass at most
\[
    O\parens*{\frac{1}{\sqrt{\log d}}\frac{n^{1/p}}{(r/d^2\log d)^{1/p-1}}}\norm{\bfx}_1 = o\parens*{(r/d^2\log d)^{1-1/p}n^{1/p}}\norm{\bfx}_1. 
\]
Then,
\begin{align*}
    \frac1{\kappa} &\geq \frac{\norm{\bfS\bfA\bfx}_1}{\norm{\bfA\bfx}_1}\geq \frac{\norm{\bfS\bfA^H\bfx}_1 - \norm{\bfS\bfC_1\bfx}_1}{\norm{\bfA^H\bfx}_1 + \norm{\bfA^L\bfx}_1} \\
    &\geq \frac{\Omega(\norm{\bfA^H\bfx}_1 + (r/d^2\log d)^{1-1/p}n^{1/p})\norm{\bfx}_1}{O(\norm{\bfA^H\bfx}_1 + n)\norm{\bfx}_1} \geq \Omega\parens*{\parens*{\frac{(r/d^2\log d)}{n}}^{1-1/p}}.\qedhere
\end{align*}
\end{proof}

\subsection{Proofs for Section \ref{section:p>=2}}

\begin{proof}[Proof of Lemma \ref{lem:p>=2-expectation}]
For the upper bound, we have by Jensen's inequality that
\[
    \E_{\bfv\sim\mathcal D^d}\abs{\angle{\bfv,\bfx}}\leq \sqrt{\E_{\bfv\sim\mathcal D^d}\abs{\angle{\bfv,\bfx}}^2} = \sqrt{\sum_{j=1}^d \bfx_j^2\E\bfv_j^2} = O(\norm{\bfx}_2).
\]
We now focus on the lower bound.

Let $M = O(1)$ be the median of $\mathcal D$. We define $\bfw$ to be the truncation of $\bfv$ at $M$, that is, $\bfw_i = 0$ if $\abs{\bfv_i} > M$ and $\bfw_i = \bfv_i$ otherwise. Then by \cite[Lemma 6.1.2]{vershynin2018high}, 
\[
    \E\abs{\angle{\bfv,\bfx}}\geq \E\abs{\angle{\bfw,\bfx}}
\]
so it suffices to bound $\E\abs{\angle{\bfw,\bfx}}$ instead.

Note that
\[
    \E\abs{\angle{\bfw,\bfx}}^2 = \sum_{i=1}^d\sum_{j=1}^d E(\bfw_i\bfx_i\bfw_j\bfx_j) = \sum_{j=1}^d \bfx_j^2\E\bfw_j^2 = \Omega(\norm{\bfx}_2^2)
\]
and
\[
    \E\abs{\angle{\bfw,\bfx}}^4 = \sum_{j=1}^d \bfx_j^4\E\bfw_j^4 + 3\sum_{j\neq k}^d \bfx_j^2\bfx_k^2 \E(\bfw_j^2\bfw_k^2)\leq O(\norm{\bfx}_4^4 + \norm{\bfx}_2^4) = O(\norm{\bfx}_2^4)
\]
so by the Paley-Zygmund inequality,
\[
    \Pr\parens*{\abs{\angle{\bfw,\bfx}}\geq \sqrt{\lambda}\sqrt{\E\abs{\angle{\bfw,\bfx}}^2}} = \Pr\parens*{\abs{\angle{\bfw,\bfx}}^2\geq \lambda\E\abs{\angle{\bfw,\bfx}}^2}\geq (1-\lambda)^2\frac{(\E\abs{\angle{\bfw,\bfx}}^2)^2}{\E\abs{\angle{\bfw,\bfx}}^4} = \Omega(1).
\]
Thus $\abs{\angle{\bfw,\bfx}} = \Omega(\norm{\bfx}_2)$ with constant probability and thus $\E\abs{\angle{\bfw,\bfx}} = \Omega(\norm{\bfx}_2)$, as desired.
\end{proof}

\begin{proof}[Proof of Lemma \ref{lem:p>=2-conditioning-expectation-same}]
Let $X\coloneqq \angle{\bfv,\bfx}$. We have
\[
    \Pr(\neg\mathcal E_i) = B^{-p} \leq \frac{\eps}{d}
\]
so by the union bound,
\[
    \Pr(\mathcal E)\geq 1 - \sum_{i=1}^d \Pr(\neg\mathcal E_i) = 1 - d\Pr(\neg\mathcal E_1) =  1 - \eps. 
\]

For $B$ large enough, we have by the layer cake theorem that
\[
    \E_{Y\sim\mathcal D}\parens*{\abs{Y}\mid \abs{Y} > B} \leq \frac1{\Pr(\abs{Y} > B)}\int_B^\infty  O(x^p) ~dx = \frac{1}{\Omega(B^{-p})} O(B^{1-p}) = O(B)
\]
since $p\geq 2 > 1$. Then,
\begin{align*}
    \E\parens*{\abs{X}\mid \neg\mathcal E} &\leq \sum_{i=1}^d \E\parens*{\abs{\bfv_i\bfx_i}\mid \neg\mathcal E} \\
    &= \sum_{i=1}^d \E\parens*{\abs{\bfv_i\bfx_i}\mid \neg \mathcal E_i, \neg\mathcal E} \Pr\parens*{\neg \mathcal E_i\mid \neg\mathcal E} + \E\parens*{\abs{\bfv_i\bfx_i}\mid\mathcal E_i, \neg\mathcal E} \Pr\parens*{\mathcal E_i\mid \neg\mathcal E}  \\
    &\leq \sum_{i=1}^d O(B)\abs{\bfx_i} \frac1d + O(\abs{\bfx_i}) \\
    &\leq O(B+d)\norm{\bfx}_1.
\end{align*}

We then have
\[
    \E\abs{X} = \E\parens*{\abs{X}\mid \mathcal E}\Pr\parens*{\mathcal E} + \E\parens*{\abs{X}\mid \neg\mathcal E}\Pr\parens*{\neg\mathcal E}\leq \E\parens*{\abs{X}\mid \mathcal E} + O(B+d)\norm{\bfx}_1\Pr\parens*{\neg\mathcal E}.
\]
Since $\E\abs{X} = \Omega(\norm{\bfx}_2) = \Omega(\norm{\bfx}_1 / \sqrt{d})$ by Lemma \ref{lem:p>=2-expectation}, 
\[
    O(B+d)\norm{\bfx}_1\Pr\parens*{\neg\mathcal E} = O(B+d)\norm{\bfx}_1\frac1{B^p} \leq O(\eps)\E\abs{X}
\]
by our choice of $B$. We thus have
\[
    \E\abs{X} \leq \E\parens*{\abs{X}\mid \mathcal E} + O(\eps)\E\abs{X}
\]
so
\[
    \E\parens*{\abs{X}\mid \mathcal E}\geq (1-O(\eps))\E\abs{X}.\qedhere
\]
\end{proof}

\end{document}